\definecolor{darkblue}{rgb}{0,0,0.4}
\newtheorem{definition}{Definition}[section]
\newtheorem{lemma}[definition]{Lemma}
\newtheorem{theorem}[definition]{Theorem}
\newtheorem{remark}[definition]{Remark}
\newtheorem{corollary}[definition]{Corollary}
\newtheorem{question}[definition]{Question}
\def\NN{\mathbb{N}}
\def\BB{\mathbb{B}}
\newcommand{\compact}{{\sf compact}}
\newcommand{\discrete}{{\sf discrete}}
\newcommand{\Type}{\mathcal{T}}
\newcommand{\overwrite}{\,@\,}
\newcommand{\PAomega}{{\rm PA}^\omega}
\newcommand{\HAomega}{{\rm HA}^\omega}
\newcommand{\ttimesTensor}{\otimes}
\newcommand{\ttimes}{\ttimesTensor}
\newcommand{\ttimesD}{\ttimesTensor_d}
\newcommand{\mttimes}{\,\tilde{\ttimesTensor}\,}
\newcommand{\mttimesD}{\,\tilde{\ttimesTensor}_d\,}
\newcommand{\EPQ}{{\sf EPQ}}	% iterated product of quantifiers
\newcommand{\EPS}{{\sf EPS}}	% conditional product of sel. fcts.
\newcommand{\IPS}{{\sf IPS}}		% iterated product of selection functions
\newcommand{\ips}{{\sf ips}}
\newcommand{\ipq}{{\sf ipq}}
\newcommand{\eps}{{\sf eps}}
\newcommand{\epq}{{\sf epq}}
\newcommand{\isp}{{\sf mbr}} 		% iterated skewed product
\newcommand{\ISP}{{\sf MBR}}
\newcommand{\BBC}{{\sf bbc}}
\newcommand{\BR}{{\sf BR}}
\newcommand{\SBR}{{\sf SBR}}
\newcommand{\mbr}{{\sf mbr}}
\newcommand{\MBR}{{\sf MBR}}
\newcommand{\length}{l}
\newcommand{\Rec}{{\sf R}}
\newcommand{\NUS}{J\textup{-}{\sf shift}}
\newcommand{\KSHIFT}{K\textup{-}{\sf shift}}
\newcommand{\CA}{{\rm CA}}
\newcommand{\AC}{{\rm AC}}
\newcommand{\DNS}{{\rm DNS}}
\newcommand{\CONT}{{\rm CONT}}
\newcommand{\SPEC}{{\rm SPEC}}
\newcommand{\BI}{{\rm BI}}
\newcommand{\fdefin}{=}
\newcommand{\mr}{\,{\sf mr}\,}
\newcommand{\eqleft}[1]{\begin{itemize} \item[] $#1$ \end{itemize}} 
\newcommand{\cZero}{{\bf 0}}
\newcommand{\selEmb}[1]{\overline{#1}}
\newcommand{\initSeg}[2]{[#1](#2)}
\newcommand{\initSegZ}[2]{\overline{#1, #2}}
\newcommand{\inj}{{\sf inj}}
\newcommand{\pair}[1]{\langle #1 \rangle} 
\newcommand{\cTrue}{{\sf tt}}
\newcommand{\cFalse}{{\sf ff}}
\newcommand{\cZr}[1]{\cZero^{#1}}
\newcommand{\lempty}{\langle\,\rangle}
\begin{document}

\title{Bar Recursion and Products of Selection Functions}

\author{Mart\'\i n Escard\'o and Paulo Oliva}

\date{{\sc Preprint, \today}}

\maketitle

\begin{abstract} We show how two iterated products of selection functions can both be used in conjunction with system $T$ to interpret, via the dialectica interpretation and modified realizability, full classical analysis. We also show that one iterated product is equivalent over system $T$ to Spector's bar recursion, whereas the other is $T$-equivalent to modified bar recursion. Modified bar recursion itself is shown to arise directly from the iteration of a different binary product of `skewed' selection functions. Iterations of the dependent binary products are also considered but in all cases are shown to be $T$-equivalent to the iteration of the simple products.
 \end{abstract}

%%%%%%%%%%%%%%%%%%%%%%%%%%%%%%%%%%%%%%%%
%%%%%%%%%%%%%%%%%%%%%%%%%%%%%%%%%%%%%%%%
\section{Introduction}
%%%%%%%%%%%%%%%%%%%%%%%%%%%%%%%%%%%%%%%%
%%%%%%%%%%%%%%%%%%%%%%%%%%%%%%%%%%%%%%%%

G\"odel's \cite{Goedel(58)} so-called dialectica interpretation reduces the consistency of Peano arithmetic to the consistency of the quantifier-free calculus of functionals $T$. In order to extend G\"odel's interpretation to full classical analysis $\PAomega + \CA$, Spector \cite{Spector(62)} made use of the fact that $\PAomega + \CA$ can be embedded, via the negative translation, into $\HAomega + \AC_\NN + \DNS$. Here $\PAomega$ and $\HAomega$ denote Peano and Heyting arithmetic, respectively, formulated in the language of finite types, and
\eqleft{\CA \; \colon \; \exists f^{\NN \to \BB} \forall n^\NN (f(n) \leftrightarrow A(n))}
is \emph{full comprehension},
\eqleft{\AC_\NN \; \colon \; \forall n^\NN \exists x^X A(n, x) \rightarrow \exists f \forall n A(n, f n)}
is \emph{countable choice}, and
\eqleft{\DNS \; \colon \; \forall n^\NN \neg \neg B(n) \rightarrow \neg \neg \forall n B(n),}
is the \emph{double negation shift}, with $A(n)$ and $A(n, x)$ standing for arbitrary formulas, and $B(n) \equiv \exists x \neg A(n, x)$. Since $\HAomega + \AC_\NN$, excluding the double negation shift, has a straightforward (modified) realizability interpretation \cite{Troelstra(73)}, as well as a dialectica interpretation \cite{Avigad(98),Goedel(58)}, the remaining challenge is to give a computational interpretation to $\DNS$.

A computational interpretation of $\DNS$ was first given by Spector \cite{Spector(62)}, via the dialectica interpretation. Spector devised a form of recursion on well-founded trees, nowadays known as \emph{Spector bar recursion}, and showed that the dialectica interpretation of $\DNS$ can be witnessed by such kind of recursion. A computational interpretation of $\DNS$ via realizability only came recently, first in \cite{Berardi(98)}, via a non-standard form of realizability, and then in \cite{BO(02A),BO(05)}, via Kreisel's modified realizability. The realizability interpretation of $\DNS$ makes use of a new form of bar recursion, termed \emph{modified bar recursion}. 

It has been shown in \cite{BO(05)} that Spector's bar recursion is definable in system $T$ extended with modified bar recursion, but not conversely, since Spector's bar recursion is S1-S9 computable in the model of total continuous functionals, but modified bar recursion is not. 

In the present paper we revisit these functional interpretations of classical analysis from the perspective of the newly developed theory of selection functions \cite{Escardo(2008),EO(2010A),EO(2010B),EO(2009),EO(2011A)}. \emph{Selection functionals} are functionals of type $(X \to R) \to X$, for arbitrary finite types $X, R$. We think of mappings $p \colon X \to R$ as generalised predicates, and of functionals $\varepsilon \colon (X \to R) \to X$ as witnessing, when possible, the ``non-emptiness" of any given such predicate. For instance, if $R = \BB$ is the set of booleans, Hilbert's $\varepsilon$-constant can be viewed as a selection function. Just as $\varepsilon$-terms in Hilbert's calculus can be used to define the existential quantifier, so can any selection function $\varepsilon \colon (X \to R) \to X$ be used to define a \emph{generalised quantifier} $\phi \colon (X \to R) \to R$ as
\[ \phi(p) \stackrel{R}{=} p(\varepsilon(p)). \]
Moreover, just like the usual quantifiers $\exists^X$ and $\forall^Y$ can be nested to produce a quantifier on the product space $X \times Y$, so can generalised quantifiers and selection functions. We prefer to think about the nesting of selection functions (and quantifiers) as a \emph{product operation}, since it transforms selection functions over spaces $X$ and $Y$ into a new selection function on the product space $X \times Y$ (cf. \cite{EO(2009)}). 

In this article we define two different iterations of the binary product of selection functions, one which we call \emph{implicitly controlled} and the other which we call \emph{explicitly controlled} (cf. also \cite{EO(2011A)}) and show that:
\begin{itemize}
	\item Modified bar recursion is $T$-equivalent to the implicitly controlled product of selection functions.
	\item Spector's bar recursion is $T$-equivalent to the explicitly controlled product of selection functions.
	\item The two different products can be used to interpret $\DNS$ directly via modified realizability and the dialectica interpretation, respectively.
	\item The implicitly controlled product of selection functions is strictly stronger than the explicitly controlled one.
	\item Apparently stronger iterations of the dependent products are in fact $T$-equivalent to the iterations of the simple products.
\end{itemize}

%%%%%%%%%%%%%%%%%%%%%%%%%%%%%%%%%%%%%%%%
%%%%%%%%%%%%%%%%%%%%%%%%%%%%%%%%%%%%%%%%
\section{Preliminaries}
%%%%%%%%%%%%%%%%%%%%%%%%%%%%%%%%%%%%%%%%
%%%%%%%%%%%%%%%%%%%%%%%%%%%%%%%%%%%%%%%%

Before we present our main results, let us first define the formal systems used, and give an introduction to our recent work on selection functions.

%%%%%%%%%%%%%%%%%%%%%%%%%%%%%%%%%%%%%%%%
\subsection{Heyting arithmetic $\HAomega$ and system $T$}
%%%%%%%%%%%%%%%%%%%%%%%%%%%%%%%%%%%%%%%%
\label{bar-ind}

% We work with a definitional extension of the finite types, allowing for infinite sequence of elements with different types. The main reason for this is to improve legibility of our constructions, as working with infinite sequence of a single type would obscure the shuffling of elements. This is formally defined as follows:

In this section we define the formal systems used to prove the inter-definability results. These include Heyting arithmetic in all finite types and extensions including bar induction and a continuity principle.

\begin{definition}[Finite types] The set of all \emph{finite types} $\Type$ are defined inductively as
\begin{itemize}
	\item $\BB$ (booleans) and $\NN$ (integers) are in $\Type$ \\[-3mm]
	\item If $X$ and $Y$ are in $\Type$ then $X \times Y$ (product) and $X \to Y$ (functions) are in $\Type$ \\[-3mm]
	\item If $X$ is in $\Type$ then $X^*$ (finite sequence) is in $\Type$.
\end{itemize}
We will also make informal use of the following type construction: Given a sequence of types $(X_i)_{i \in \NN}$ we also consider $\Pi_{i \in \NN} X_i$ and $\Pi_{i < n} X_i$ as types. The main purpose of this is to make the constructions more readable, since we can keep track of the positions which are being changed. A formal extension of system $T$ with such type construction has been considered by Tait \cite{tait(1965)}, hence we also hope that our presentation below will extend smoothly to a more general setting\footnote{If the reader prefers, however, she can assume that in a sequence of types $(X_i)_{i \leq \NN}$ all $X_i$ are equal $X$, replacing infinite sequence types $\Pi_{i \in \NN} X_i$ with $\NN \to X$, and finite sequence types $\Pi_{i < n} X_i$ with $X^*$.}, although in this paper we focus on the standard version of system $T$.
\end{definition}

% We also assume to have a unit type $\II$ so that $\Pi_{i < 0} X_i = \II$. 
We use $X, Y, Z$ for variables ranging over the elements of $\Type$. We often write $\Pi_i X_i$ for $\Pi_{i \in \NN} X_i$, and also $\Pi_{i \geq k} X_i$ for $\Pi_i X_{i+k}$.

Let $\HAomega$ be usual Heyting arithmetic in all finite types with a fully extensional treatment of equality, as in the system E-$\HAomega$ of \cite{Troelstra(73)}. Its quantifier-free fragment is the usual G\"odel's system $T$, also extended with sequence types. G\"odel's primitive recursion for each sequence of types $(X_i)_{i \in \NN} \in \Type$ is given by
\[
\begin{array}{lcl}
	\Rec f g 0 & \stackrel{X_0}{=} & g \\[2mm]
	\Rec f g (n+1) & \stackrel{X_{n+1}}{=} & f n (\Rec f g n)
\end{array}
\]
where $\Rec$ has finite type $\Pi_n (X_n \to X_{n+1}) \to X_0 \to \Pi_i X_i$. We also assume that we have a constant $\cZero^X$ of each finite type $X$, and the usual constructors and destructors such as $\pair{t^X, s^Y} \colon X \times Y$ and $\pi_i(\pair{s_0^{X_0}, s_1^{X_1}}) = s_i$, where $i = \{0,1\}$, for instance. For the newly introduced sequence types we have that if $t \colon \Pi_i X_i$ then $t i \colon X_i$; and if $t \colon X_i$ then $\lambda i . t \colon \Pi_i X_i$.  If $s \colon \Pi_{i < n} X_i$, we write $s_i \colon X_i$ for the $i$-th element of the sequence, for $i < n$. If $s \colon \Pi_{i < n} (X_i \times Y_i)$ is a sequence of pairs, we write $s^0 \colon \Pi_{i < n} X_i$ and $s^1  \colon \Pi_{i < n} Y_i$ for the projection of the sequence on the first and second coordinates, respectively. If $\alpha$ has type $\Pi_{i \in \NN} X_i$ we use the following abbreviations
\[
\begin{array}{lcl}
\alpha^n & \equiv & \lambda i . \alpha(i + n), \quad \mbox{(the $n$-left shift of $\alpha$, hence $\alpha^n \colon \Pi_i X_{i + n}$)} \\[2mm]
q^n(\alpha) & \equiv & q(\alpha^n), \quad \mbox{(so $q^n \colon \Pi_i X_i \to R$ if $q \colon \Pi_i X_{i + n} \to R$)} \\[2mm]
\alpha[k,n] & \equiv & \pair{\alpha(k), \ldots, \alpha(n)}, \quad \mbox{(finite segment from position $k$ to $n$)} \\[2mm]
\initSeg{\alpha}{n} & \equiv & \alpha[0, n-1], \quad \mbox{(initial segment of $\alpha$ of length $n$)} \\[2mm]
\initSegZ{\alpha}{n} & \equiv & \pair{\alpha(0), \ldots, \alpha(n-1), \cZero, \cZero, \ldots}, \; \mbox{(infinite extension of $\initSeg{\alpha}{n}$ with $\cZero$'s)} 
\end{array}
\]
where in the last case the type of $\cZero$ at the $i$-th coordinate is the same type of $\alpha(i)$.

We use $*$ for all forms of \emph{concatenation}. For instance, if $x$ has type $X_n$ and $s$ has type $\Pi_{i<n} X_i$ then $s * x$ is the concatenation of~$s$ with~$x$, which has type $\Pi_{i<n+1} X_i$. Similarly, if $x$ has type $X_0$ and $\alpha$ has type $\Pi_i X_{i+1}$ then $x * \alpha$ has type $\Pi_i X_i$. Given a functional $q \colon \Pi_i X_i \to R$ and a finite sequence $s \colon \Pi_{i <  n} X_i$ we write $q_s \colon \Pi_{i \geq n} X_i \to R$ for the function $\lambda \alpha . q(s * \alpha)$. When $s = \langle x \rangle$ we write $q_s$ as simply $q_x$. 

Given a finite sequence $s$ and an infinite sequence $\alpha$ let us write $s \overwrite \alpha$ for the ``overwriting" of $s$ on $\alpha$, i.e. $(s \overwrite \alpha)(i)$ equals $s_i$ if $i < |s|$ and equals $\alpha(i + |s|)$ otherwise.

% Finally, given two finite sequences $s$ and $t$ we write $s \preceq t$ to denote that $s$ is an initial segment of $t$.

In the following we shall assume that certain types are
\emph{discrete}. Semantically, in the model of total continuous
functionals, discreteness means that singletons are open or that all
points are isolated. Syntactically, the following grammar produces discrete types in that
model (along with compact types) \cite{Escardo(2008)}.

\begin{definition}[Discrete and compact types] Define the two subsets of $\Type$ inductively as follows:
\[
\begin{array}{lcl}
\compact & ::= & \BB \;|\; \compact \times \compact \;|\; \discrete \to \compact \\[2mm]
\discrete & ::= & \BB \;|\; \NN \;|\; \discrete \times \discrete \;|\; \discrete^* \;|\; \compact \to \discrete.
\end{array}
\]
\end{definition}

In this paper we work with a model independent notion of definability.
Formally, given a term $t$ in system~$T$, we view an equation $F(x) = t(F, x)$ as \emph{defining} or \emph{specifying} a functional~$F$.
We do not worry whether such an equation has a solution in any particular model of $\HAomega$, or whether it is unique, when it has a solution. 

\begin{definition} We say that a functional $G$ is $T$-definable from a functional $F$ (written $G \leq_T F$) over a theory $\mathcal{S}$ if there exists a term $s$ in system $T$ such that $s(F)$ satisfies the defining equation of $G$ provably in $\mathcal{S}$. We say that $F$ and $G$ are $T$-equivalent over $\mathcal{S}$, written $F =_T G$, if $G \geq_T F$ and $F \geq_T G$. 
\end{definition}

When stating in a theorem or proposition that $G$ is $T$-definable in $F$, we will explicitly write after the theorem/proposition number the theory $\mathcal{S}$ that is needed for the verification. In a few cases this theory will be an extension of $\HAomega$ with some the following three principles: \emph{Spector's condition}
\eqleft{\SPEC \; \colon \; \forall \omega^{\Pi_i X_i \to \NN} \forall \alpha^{\Pi_i X_i} \exists n (\omega(\initSegZ{\alpha}{n}) < n),}
the \emph{axiom of continuity}
\eqleft{\CONT \; \colon \; \forall q^{\Pi_i X_i \to R} \forall \alpha \exists n \forall \beta (\initSeg{\alpha}{n} \stackrel{\Pi_{i < n} X_i}{=} \initSeg{\beta}{n} \to q(\alpha) \stackrel{R}{=} q(\beta))}
with $R$ discrete, and the scheme of \emph{relativised bar induction} 
\eqleft{\BI \; \colon \; 
\left\{
\begin{array}{c}
S(\lempty) \\
\wedge \\
\forall \alpha \!\in\! S \, \exists n P(\initSeg{\alpha}{n}) \\
\wedge \\
\forall s \in S (\forall x [S(s * x)\to P(s * x)] \to P(s))
\end{array}
\right\} \to P(\lempty),
}
where $S(s)$ and $P(s)$ are arbitrary predicates in the language of $\HAomega$, and $\alpha\in S$ and $s\in S$ are shorthands for $\forall n S(\initSeg{\alpha}{n})$ and $S(s)$ respectively.  
We note that $\SPEC$ follows from $\CONT$, but it also holds in the model of strongly majorizable functionals \cite{Bezem(85)}.

\newcommand{\AAA}{\mathcal{A}}

%%%%%%%%%%%%%%%%%%%%%%%%%%%%%%%%%%%%%%%%
\subsection{Selection functions and generalised quantifiers}
%%%%%%%%%%%%%%%%%%%%%%%%%%%%%%%%%%%%%%%%
\label{binary:product}

In \cite{EO(2009),EO(2011A)} we have studied the properties of functionals having the type $(X \to R) \to R$, and called these \emph{generalised quantifiers}. When $R = \BB$ we have that $(X \to \BB) \to \BB$ is the type of the usual logical quantifiers $\forall, \exists$. We also showed that some generalised quantifiers $\phi \colon (X \to R) \to R$ are \emph{attainable}, in the sense that for some \emph{selection function} $\varepsilon \colon (X \to R) \to X$, we have
\[ \phi p = p(\varepsilon p) \]
for all (generalised) predicates $p$. In the case when $\phi$  is the usual existential quantifier, for instance, $\varepsilon$ corresponds to Hilbert's epsilon term. 
Since the types $(X \to R) \to R$ and $(X \to R) \to X$ will be used quite often, we abbreviate them as $K_R X$ and $J_R X$, respectively. Moreover, when $R$ is fixed, we often simply write $K X$ and $J X$, omitting the subscript $R$. In \cite{EO(2009)} we also defined products of quantifiers and selection functions.

\begin{definition}[Product of selection functions and quantifiers] \label{main-simple} Given generalised quantifiers $\phi \colon K X$ and $\psi \colon K Y$, define the product quantifier $(\phi \ttimes \psi) \colon K (X \times Y)$ as
\[ (\phi \ttimes \psi)(p^{X \times Y \to R}) \stackrel{R}{\fdefin} \phi(\lambda x^X . \psi (\lambda y^Y . p(x, y))). \]
Also, given selection functions $\varepsilon \colon J X$ and $\delta \colon J Y$, define the product selection function $(\varepsilon \ttimes \delta) \colon J(X \times Y)$ as
\[ (\varepsilon \ttimes \delta)(p^{X \times Y \to R}) \stackrel{X \times Y}{\fdefin} (a, b(a)) \]
where
\eqleft{
\begin{array}{lcl}
a & \stackrel{X}{\fdefin} & \varepsilon(\lambda x^X. p(x, b(x))) \\[2mm]
b(x^X) & \stackrel{Y}{\fdefin} & \delta(\lambda y^Y . p(x, y)).
\end{array}
}
\end{definition}

One of the results we obtained is that the product of attainable quantifiers is also attainable. This follows from the fact that the product of quantifiers corresponds to the product of selection functions, as made precise in the following lemma.

\begin{lemma}[\cite{EO(2009)}, Lemma 3.1.2] \label{basic} Let $R$ be fixed. Given a selection function $\varepsilon : J X$, define a quantifier $\selEmb{\varepsilon} \colon K X$ as
\eqleft{\selEmb{\varepsilon} p \fdefin p(\varepsilon p).}
Then for $\varepsilon \colon J X$ and $\delta \colon J Y$ we have $\selEmb{\varepsilon \ttimes \delta} = \selEmb \varepsilon \ttimes \selEmb{\delta}.$
\end{lemma}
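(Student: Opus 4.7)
The plan is to verify the equation $\overline{\varepsilon \otimes \delta} = \overline{\varepsilon} \otimes \overline{\delta}$ pointwise, i.e.\ for an arbitrary predicate $p \colon X \times Y \to R$, by unfolding the two sides according to the definitions of $\overline{(-)}$, of the quantifier product, and of the selection function product, and showing that both reduce to a common expression. No auxiliary lemmas are needed; the result is essentially a bookkeeping verification that the two iteration patterns on $X$ and $Y$ coincide.

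First I would fix $p \colon X \times Y \to R$ and abbreviate, as in Definition \ref{main-simple}, $b(x) \fdefin \delta(\lambda y . p(x, y))$ and $a \fdefin \varepsilon(\lambda x . p(x, b(x)))$. For the left-hand side, by definition of $\overline{(-)}$ we have $\overline{\varepsilon \otimes \delta}(p) = p((\varepsilon \otimes \delta)(p))$, which by the definition of the product of selection functions is simply $p(a, b(a))$.

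For the right-hand side, I would unfold the product of quantifiers applied to $p$:
\[ (\overline{\varepsilon} \otimes \overline{\delta})(p) \;=\; \overline{\varepsilon}\bigl(\lambda x . \overline{\delta}(\lambda y . p(x, y))\bigr). \]
Applying the definition of $\overline{\delta}$ inside, the body becomes $\lambda x . p(x, \delta(\lambda y . p(x, y))) = \lambda x . p(x, b(x))$. Then applying the definition of $\overline{\varepsilon}$ to this predicate gives $p(a, b(a))$, with $a = \varepsilon(\lambda x. p(x, b(x)))$, matching the left-hand side.

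There is really no obstacle here beyond making sure the nested $\lambda$-abstractions and beta reductions are tracked correctly; the only subtle point is recognising that the $b$ appearing in the definition of the selection function product is precisely the function $\lambda x . \overline{\delta}(\lambda y . p(x, y))$ arising when unfolding the quantifier product, and that the $a$ in the selection function product is the witness supplied by $\varepsilon$ against exactly this $b$. Since $\overline{(-)}$ is defined pointwise and the equation is between functionals of type $K(X \times Y)$, extensional equality of the two sides on every $p$ suffices.
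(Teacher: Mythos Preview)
Your proof is correct and is exactly the standard direct verification by unfolding definitions; the paper itself does not give a proof of this lemma but merely cites it from \cite{EO(2009)}, so there is nothing further to compare.
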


Given a finite sequence of selection functions or quantifiers, the two binary products defined above can be iterated so as to give rise to finite products of selection functions and quantifiers. We have shown that such a construction also appears in game theory (backward induction), algorithms (backtracking), and proof theory (interpretation of the infinite pigeon-hole principle) -- see \cite{EO(2009)} for details.

In the following (Sections \ref{conditional} and \ref{implicit}) we
will describe two possible ways of iterating the binary product of
selection function an infinite, or unbounded, number of times.
% But before that, let us recall two principles which we use later.

%%%%%%%%%%%%%%%%%%%%%%%%%%%%%%%%%%%%%%%%
%%%%%%%%%%%%%%%%%%%%%%%%%%%%%%%%%%%%%%%%
\section{Explicitly Controlled Product}
%%%%%%%%%%%%%%%%%%%%%%%%%%%%%%%%%%%%%%%%
%%%%%%%%%%%%%%%%%%%%%%%%%%%%%%%%%%%%%%%%
\label{conditional}

The finite product of selection functions of Definition \ref{main-simple} can be infinitely iterated in two ways. The first, which we define in this section is via an \emph{explicitly controlled} iteration, which we will show to correspond to Spector's bar recursion. In the following section we also define an \emph{implicitly controlled} iteration, which we will show to correspond to modified bar recursion.

\begin{definition}[$\eps$] \label{eps-def} Let $\varepsilon \colon \Pi_k J X_k$ be a sequence of selection functions. Define their \emph{explicitly controlled infinite product} as
\[
\eps_n^\length(\varepsilon)(q) \stackrel{\Pi_i X_{i + n}}{=} 
\left\{
\begin{array}{ll}
\cZero & {\rm if} \; \length(q(\cZero)) < n \\[2mm]
(\varepsilon_n \ttimes \eps_{n+1}^\length(\varepsilon))(q) \quad & {\rm otherwise},
\end{array}
\right.
\]
where $q \colon \Pi_i X_{i+n} \to R$ and $\length \colon R \to \NN$. We call $\length$ the \emph{length function} since it controls the length of the recursive path. Unfolding the definition of $\ttimes$ we can write the defining equation of $\eps$ as
\begin{equation} \label{eps-eq-def} \tag{\eps}
\eps_n^\length(\varepsilon)(q) \stackrel{\Pi_i X_{i + n}}{=} 
\left\{
\begin{array}{ll}
\cZero & {\rm if} \; \length(q(\cZero)) < n \\[2mm]
c * \eps_{n+1}^\length(\varepsilon)(q_c) \quad & {\rm otherwise},
\end{array}
\right.
\end{equation}
where $c = \varepsilon_n(\lambda x . \selEmb{\eps_{n+1}^\length(\varepsilon)}(q_x))$.
\end{definition}

The next lemma (essentially Lemma 1 of \cite{Spector(62)}) states one of the most crucial properties of this product of selection functions.

\begin{lemma}[$\HAomega + (\ref{eps-eq-def})$] \label{spector-main-lemma} Let $\alpha \fdefin \eps_n^{\length}(\varepsilon)(q)$. Then, for all $i \colon \NN$
\[ \alpha = \initSeg{\alpha}{i} * \eps_{n + i}^\length(\varepsilon)(q_{\initSeg{\alpha}{i}}). \]
\end{lemma}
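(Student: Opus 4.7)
The statement is a splitting lemma: truncating the $\eps$-recursion after $i$ steps gives the same infinite sequence. The natural approach is induction on $i$, unfolding one layer of the defining equation (\ref{eps-eq-def}) at each step.

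For the base case $i = 0$ we have $\initSeg{\alpha}{0} = \lempty$ and $q_{\lempty} = q$, so the statement reduces to $\alpha = \eps_n^\length(\varepsilon)(q)$, which is just the definition of $\alpha$.

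For the induction step, assume $\alpha = \initSeg{\alpha}{i} * \eps_{n+i}^\length(\varepsilon)(q_{\initSeg{\alpha}{i}})$, abbreviate $s \fdefin \initSeg{\alpha}{i}$, and let $\beta \fdefin \eps_{n+i}^\length(\varepsilon)(q_s)$, so $\alpha = s * \beta$ and hence $\initSeg{\alpha}{i+1} = s * \beta(0)$. The plan is to split on whether $\length(q_s(\cZero)) < n+i$ holds. In the non-trivial case, (\ref{eps-eq-def}) gives $\beta = c * \eps_{n+i+1}^\length(\varepsilon)(q_{s * c})$ for the appropriate $c$, so $\beta(0) = c$, and substituting yields $\alpha = s * c * \eps_{n+i+1}^\length(\varepsilon)(q_{s * c}) = \initSeg{\alpha}{i+1} * \eps_{n+i+1}^\length(\varepsilon)(q_{\initSeg{\alpha}{i+1}})$, which is the desired equation.

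The only slightly delicate point is the trivial case $\length(q_s(\cZero)) < n+i$, where $\beta = \cZero$. Then $\beta(0) = \cZero$ and $\initSeg{\alpha}{i+1} = s * \cZero$. Because $\cZero$ concatenated with the all-zero tail is still $\cZero$, we have $q_{s*\cZero}(\cZero) = q_s(\cZero)$, so the length condition is still violated at stage $n+i+1$ (since $n+i+1 > n+i > \length(q_s(\cZero))$), which forces $\eps_{n+i+1}^\length(\varepsilon)(q_{\initSeg{\alpha}{i+1}}) = \cZero$ by the first clause of (\ref{eps-eq-def}). Hence $\initSeg{\alpha}{i+1} * \cZero = s * \cZero * \cZero = s * \beta = \alpha$, closing this case as well.

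The main obstacle, if any, is just bookkeeping with the overwriting/concatenation notation and checking that $q_s(\cZero) = q_{s * \cZero}(\cZero)$ in the trivial case so that the recursion keeps terminating. Everything else follows immediately by unfolding (\ref{eps-eq-def}) once per induction step.
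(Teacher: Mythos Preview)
Your proposal is correct and follows essentially the same route as the paper's proof: induction on $i$, a case split on whether $\length(q_{\initSeg{\alpha}{i}}(\cZero)) < n+i$, and in the stopping case the key observation that $q_{\initSeg{\alpha}{i} * \cZero}(\cZero) = q_{\initSeg{\alpha}{i}}(\cZero)$ so the condition persists at stage $n+i+1$. The only cosmetic difference is that you name the tail $\beta$ explicitly, whereas the paper phrases the same steps via the $\initSegZ{\alpha}{i}$ notation.
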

\begin{proof} By induction on $i$. If $i = 0$ this follows by the definition of $\alpha$. Assume this holds for $i$, we wish to show it also holds for $i+1$. Consider two cases. \\[2mm]
If $\length(q_{\initSeg{\alpha}{i}}(\cZero)) = \length(q(\initSegZ{\alpha}{i})) < n + i$ then
\begin{itemize}
	\item[$(i)$] $\eps_{n + i}^\length(\varepsilon)(q_{\initSeg{\alpha}{i}}) = \cZero$
\end{itemize}
and hence
\begin{itemize}
	\item[$(ii)$] $\alpha \stackrel{\textup{(IH)}}{=} \initSeg{\alpha}{i} * \eps_{n+i}^\length(\varepsilon)(q_{\initSeg{\alpha}{i}}) \stackrel{(i)}{=} \initSegZ{\alpha}{i} = \initSegZ{\alpha}{i+1}$.
\end{itemize}
Therefore,
\begin{itemize}
	\item[$(iii)$] $\length(q(\initSegZ{\alpha}{i+1})) \stackrel{(ii)}{=} \length(q(\initSegZ{\alpha}{i})) < n + i < n + i + 1$.
\end{itemize}
Hence, by $(iii)$ we have
\begin{itemize}
	\item[$(iv)$] $\eps_{n + i + 1}^\length(\varepsilon)(q_{\initSeg{\alpha}{i+1}}) = \cZero$.
\end{itemize}
So
\eqleft{\alpha \stackrel{(ii)}{=} \initSegZ{\alpha}{i+1} \stackrel{(iv)}{=} \initSeg{\alpha}{i+1} * \eps_{n + i + 1}^\length(\varepsilon)(q_{\initSeg{\alpha}{i+1}}).}
On the other hand, if $\length(q_{\initSeg{\alpha}{i}}(\cZero)) = \length(q(\initSegZ{\alpha}{i})) \geq n + i$, then
\eqleft{\alpha \stackrel{\textup{(IH)}}{=} \initSeg{\alpha}{i} * \eps_{n+i}^\length(\varepsilon)(q_{\initSeg{\alpha}{i}}) = \initSeg{\alpha}{i} * c * \eps_{n+i+1}^\length(\varepsilon)(q_{\initSeg{\alpha}{i} * c}),}
so that $\alpha(i) = c$. Hence $\alpha = \initSeg{\alpha}{i+1} * \eps_{n+i+1}^\length(\varepsilon)(q_{\initSeg{\alpha}{i + 1}})$. 
\end{proof}

An immediate consequence of the lemma above is that it allows us to calculate the $i$-th element of the infinite sequence $\eps_n^\length(\varepsilon)(q)$ (see also Theorem \ref{main-spec} for another important consequence). 

\begin{corollary}[$\HAomega + (\ref{eps-eq-def})$] \label{unwinding-cps} For all $n$ and $i$
\[ \eps_n^\length(\varepsilon)(q)(i) \stackrel{X_{n + i}}{=} 
\left\{
\begin{array}{ll}
\cZero & {\rm if} \; \length(q_t(\cZero)) < n + i \\[2mm]
\varepsilon_{n + i}(\lambda x . \selEmb{\eps_{n + i + 1}^\length(\varepsilon)}(q_{t*x})) \;\; & {\rm otherwise},
\end{array}
\right.
\]
where $t = \initSeg{\eps_n^\length(\varepsilon)(q)}{i}$.
\end{corollary}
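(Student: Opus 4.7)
The plan is to leverage the previous lemma (Lemma~\ref{spector-main-lemma}) to reduce the computation of $\eps_n^\length(\varepsilon)(q)(i)$ to the value at position~$0$ of a shifted explicitly controlled product, after which a single unfolding of the defining equation~(\ref{eps-eq-def}) yields the stated case split. So this corollary is essentially a packaging of Lemma~\ref{spector-main-lemma} combined with one step of the recursion.

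Concretely, I would set $\alpha \fdefin \eps_n^\length(\varepsilon)(q)$ and $t \fdefin \initSeg{\alpha}{i}$. Applying Lemma~\ref{spector-main-lemma} at index~$i$ gives
\[ \alpha = t * \eps_{n+i}^\length(\varepsilon)(q_t), \]
so $\alpha(i)$ coincides with the first coordinate of the infinite sequence $\eps_{n+i}^\length(\varepsilon)(q_t)$. Then I would apply the defining equation~(\ref{eps-eq-def}) with parameters $n+i$ and $q_t$: if $\length(q_t(\cZero)) < n+i$, then $\eps_{n+i}^\length(\varepsilon)(q_t) = \cZero$, whence $\alpha(i) = \cZero$; otherwise
\[ \eps_{n+i}^\length(\varepsilon)(q_t) = c * \eps_{n+i+1}^\length(\varepsilon)(q_{t*c}) \]
with $c = \varepsilon_{n+i}(\lambda x . \selEmb{\eps_{n+i+1}^\length(\varepsilon)}(q_{t*x}))$, whose first coordinate is exactly the expression on the right-hand side of the corollary.

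There is essentially no obstacle here, since the real work has already been done in Lemma~\ref{spector-main-lemma}. The only point that deserves a brief check is that $t = \initSeg{\alpha}{i}$ is well-defined independently of the case split, so that the guard $\length(q_t(\cZero)) < n+i$ and the body $\varepsilon_{n+i}(\lambda x . \selEmb{\eps_{n+i+1}^\length(\varepsilon)}(q_{t*x}))$ refer to the same prefix $t$; this is automatic because Lemma~\ref{spector-main-lemma} fixes $\initSeg{\alpha}{i}$ before any further selection functions are evaluated.
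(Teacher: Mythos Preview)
Your proposal is correct and follows exactly the same approach as the paper: set $\alpha = \eps_n^\length(\varepsilon)(q)$, use Lemma~\ref{spector-main-lemma} to obtain $\alpha(i) = \eps_{n+i}^\length(\varepsilon)(q_t)(0)$, and then apply one unfolding of~(\ref{eps-eq-def}). Your additional remark about $t$ being well-defined independently of the case split is a harmless clarification.
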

\begin{proof} Let $\alpha = \eps_n^\length(\varepsilon)(q)$ so that $t = \initSeg{\eps_n^\length(\varepsilon)(q)}{i} = \initSeg{\alpha}{i}$. By Lemma \ref{spector-main-lemma} we have that $\alpha(i) =  \eps_{n+i}^\length(\varepsilon)(q_t)(0)$. Hence, by ($\ref{eps-eq-def}$) we have the desired result. \end{proof}

The fact that $\eps$ exists in the model of total continuous functionals, and is in fact uniquely characterized by its defining equation, can be seen as follows. First, note that the $\eps_n^\length(\varepsilon)(q)$ is an infinite sequence, say $\alpha \colon \Pi_i X_{i + n}$. Intuitively, at each recursive call the functional $q$ gets information about one more element of its input sequence. Assuming continuity we will have that $\length \circ q \colon \Pi_i X_{i + n} \to \NN$ will eventually always return a fixed value, no matter what the rest of the input sequence is. This means that as $n$ increases we will eventually have $\length(q(\cZero)) < n$. It is perhaps surprising that such a functional also exists in the model of strongly majorizable functionals~\cite{Bezem(85)}, which contains discontinuous functionals! Following the construction of Bezem \cite{Bezem(85)} one can prove this directly, but this result will also follow from our result that $\eps$ is $T$-definable from Spector's bar recursion (Section \ref{sec-spector}).

We also define the corresponding explicitly controlled product of \emph{quantifiers} as follows:

\begin{definition}[$\epq$] \label{def-epq} Let $\phi \colon \Pi_k K X_k$ be a sequence of quantifiers. Their \emph{explicitly controlled infinite product} is defined as
\[ \epq_n^\length(\phi)(q) \stackrel{R}{=} 
\left\{
\begin{array}{ll}
q(\cZero) & {\rm if} \; \length(q(\cZero)) < n \\[2mm]
(\phi_n \ttimes \epq_{n+1}^\length(\phi))(q) & {\rm otherwise},
\end{array}
\right.
\]
where $q \colon \Pi_i X_{i + n} \to R$ and $\length \colon R \to \NN$. Unfolding the definition of the binary product of quantifiers we have
\begin{equation} \label{def-epq-equation} \tag{\epq}
\epq_n^\length(\phi)(q) \stackrel{R}{=} 
\left\{
\begin{array}{ll}
q(\cZero) & {\rm if} \; \length(q(\cZero)) < n \\[2mm]
\phi_n(\lambda x^{X_n} . \epq_{n+1}^\length(\phi)(q_x)) & {\rm otherwise}.
\end{array}
\right.
\end{equation}
\end{definition}

Howard (proof attributed to Kreisel) shows in Lemma 3C of \cite{Howard(1968)} that assuming Spector's bar recursion one can prove Spector's stopping condition $\SPEC$. It is easy to see that the form of bar recursion used by Howard is also an instance of $\epq$ and hence we obtain:

\begin{lemma} \label{howard-kreisel} $\HAomega + (\ref{def-epq-equation}) \vdash \SPEC$.
\end{lemma}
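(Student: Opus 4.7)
The plan is to specialise the defining equation of $\epq$ to a single carefully chosen instance in which the very fact that the output is a natural number forces a Spector witness to exist below an a priori bound. Given $\omega^{\Pi_i X_i \to \NN}$ and $\alpha^{\Pi_i X_i}$, I would take $R := \NN$, $\length := \idFct$, and the sequence of quantifiers $\phi_k \colon K_\NN X_k$ defined by
$$\phi_k(p) \;:=\; p(\alpha(k)) + 1,$$
so that each non-base step of $\epq$ adds $1$ to the running value. Let $M := \epq_0^{\idFct}(\phi)(\omega) \in \NN$.

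Unfolding $(\ref{def-epq-equation})$ at $q_s = \omega_{\initSeg{\alpha}{k}}$, for each $k$ we obtain the dichotomy: either $\omega(\initSegZ{\alpha}{k}) < k$, and then $\epq_k^{\idFct}(\phi)(\omega_{\initSeg{\alpha}{k}}) = \omega(\initSegZ{\alpha}{k})$; or else $\omega(\initSegZ{\alpha}{k}) \geq k$, and then
$$\epq_k^{\idFct}(\phi)(\omega_{\initSeg{\alpha}{k}}) \;=\; \phi_k\bigl(\lambda x.\, \epq_{k+1}^{\idFct}(\phi)(\omega_{\initSeg{\alpha}{k+1}})\bigr) \;=\; \epq_{k+1}^{\idFct}(\phi)(\omega_{\initSeg{\alpha}{k+1}}) + 1.$$
A routine induction on $k$ in $\HAomega$ then establishes the auxiliary claim: if $\omega(\initSegZ{\alpha}{j}) \geq j$ for every $j < k$, then $M = \epq_k^{\idFct}(\phi)(\omega_{\initSeg{\alpha}{k}}) + k$, and hence $M \geq k$.

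To derive $\SPEC$, note that $\omega(\initSegZ{\alpha}{j}) < j$ is a decidable predicate (as $\NN$ has decidable order), so bounded search yields intuitionistically, for any fixed $k$, a decision between $\exists j < k\,(\omega(\initSegZ{\alpha}{j}) < j)$ and $\forall j < k\,(\omega(\initSegZ{\alpha}{j}) \geq j)$. Instantiating this dichotomy with $k := M+1$: in the first case we have our witness directly, and in the second case the auxiliary claim gives $M \geq M+1$, from which $\SPEC$ follows by ex falso.

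The main conceptual step, and the one I would have had to hunt for, is the choice of the incrementing quantifier $\phi_k(p) = p(\alpha(k)) + 1$: the proof would not go through with the more obvious attainable quantifier $p \mapsto p(\alpha(k))$, because then the value returned by $\epq$ no longer bounds the depth at which the recursion terminated, and the bounded search at the end has nothing to latch on to. Everything else (the unfolding, the induction, and the intuitionistic bounded search) is mechanical.
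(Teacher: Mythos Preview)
Your argument is correct and is precisely the Howard--Kreisel argument that the paper invokes by citation rather than spelling out: the paper's proof consists only of the observation that Howard's Lemma~3C (for Spector's bar recursion) goes through because the instance of bar recursion used there is already an instance of $\epq$, whereas you have reconstructed that argument directly in terms of $\epq$. One small slip: in your displayed unfolding the intermediate term should read $\phi_k\bigl(\lambda x.\,\epq_{k+1}^{\idFct}(\phi)(\omega_{\initSeg{\alpha}{k}*x})\bigr)$ rather than having $\initSeg{\alpha}{k+1}$ already inside the $\lambda x$; the subsequent equality, after applying $\phi_k(p)=p(\alpha(k))+1$, is then exactly what you wrote.
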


We now show that $\epq$ and $\eps$ are $T$-equivalent. That $\epq$ is $T$-definable in $\eps$ has been recently shown in \cite{Oliva(2012A)}. Hence, it remains to show that $\epq$ defines $\eps$. The proof makes use of the fact that each selection function $\varepsilon$ defines a quantifier, as $\phi(p) = p(\varepsilon(p))$ (cf. Lemma \ref{basic}). In order to define $\eps$ for the types $(X_i, R)$ we shall use $\epq$ for the types $(X_i, R')$ with $R' = \Pi_i X_i$.

\begin{lemma}[$\HAomega + \BI + \SPEC$] \label{epq-eps-lemma} Let $R' = \Pi_i X_i$. Given $\varepsilon_i \colon J_R X_i$ and $q \colon \Pi_i X_i \to R$ define
\begin{equation} \label{epq-eps-lemma-eq}
\phi_i^{\varepsilon, q}(p^{X_i \to R'}) \stackrel{R'}{\fdefin} p(\varepsilon_i(\lambda x^{X_i} . q(p(x)))).
\end{equation}
Defined also the sequence of functions $f^n = \lambda \alpha . \cZr{\Pi_{i < n} X_i} * \alpha$. Then (with $q \colon \Pi_{i \geq n} X_i \to R$ so $q^n \colon \Pi_i X_i \to R$)
\[ \epq_{n+1}^{\length \circ q^n}(\phi^{\varepsilon, q^n})((f^n)_x) = (\cZero * x^{X_n}) \overwrite \epq_{n + 1}^{\length \circ (q_x)^{n+1}}(\phi^{\varepsilon, (q_x)^{n+1}})(f^{n+1}). \]
\end{lemma}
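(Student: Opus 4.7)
The plan is to prove the identity by unfolding the defining equation $(\ref{def-epq-equation})$ of $\epq$ on both sides and performing a case analysis on whether $\length(q_x(\cZero)) < n+1$. First observe that both sides live in $R' = \Pi_i X_i$, and that $(f^n)_x(\cZero) = \cZr{\Pi_{i<n} X_i} * x * \cZero$ while $f^{n+1}(\cZero) = \cZr{\Pi_{i<n+1} X_i} * \cZero$. Since $q^n$ shifts its argument by $n$, we compute $q^n((f^n)_x(\cZero)) = q(x * \cZero) = q_x(\cZero) = (q_x)^{n+1}(f^{n+1}(\cZero))$, so the $\length$-condition in $(\ref{def-epq-equation})$ is identical on the two sides.

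If $\length(q_x(\cZero)) < n+1$, both sides take the base branch of $\epq$: the LHS returns $(f^n)_x(\cZero) = \cZr{\Pi_{i<n} X_i} * x * \cZero$, and the RHS returns $(\cZero * x) \overwrite f^{n+1}(\cZero) = (\cZero * x) \overwrite \cZr{R'}$, which by the definition of $\overwrite$ coincides pointwise with the LHS. In the recursive branch, unfolding $\phi^{\varepsilon, q^n}_{n+1}$ via (\ref{epq-eps-lemma-eq}) gives
\[
  \textup{LHS} = \epq_{n+2}^{\length \circ q^n}(\phi^{\varepsilon, q^n})(((f^n)_x)_c)
\]
where $c = \varepsilon_{n+1}(\lambda z.\, q^n(\epq_{n+2}^{\length \circ q^n}(\phi^{\varepsilon, q^n})(((f^n)_x)_z)))$, and an analogous formula holds for the RHS with $c'$, $(q_x)^{n+1}$, and $f^{n+1}$ in place of $c$, $q^n$, and $(f^n)_x$.

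To conclude, it suffices to show that $c = c'$ and that the remaining $\epq_{n+2}$-calls on the two sides differ exactly by the overwrite on positions $\leq n$. Both requirements reduce to the invariant: for every finite continuation $s$ beyond position $n+1$, the two $\epq$-iterations, when evaluated along the branch picked out by $s$, produce sequences with matching tails past position~$n$. The main obstacle is to establish this invariant uniformly at all recursion depths. I would prove it by bar induction ($\BI$) on the tree of finite continuations, with the bar secured by $\SPEC$ applied to $\omega = \length \circ q^n$, which forces the explicitly controlled recursion to terminate along every branch. At a leaf of the bar both iterations hit the base branch of $(\ref{def-epq-equation})$ and produce matching $\cZero$-padded sequences; at an internal node the invariant is transmitted by the extensionality of $\varepsilon_{n+1+k}$ applied to the two $\lambda$-expressions rendered equal by the induction hypothesis.
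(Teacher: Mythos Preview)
Your proposal is correct and matches the paper's argument: the paper likewise proves the lemma by bar induction on finite continuations $s$ (with trivial $S$), using $\SPEC$ to secure the bar, the leaf case being exactly your $\cZero$-padded match, and the inductive step being the verification that the two $\varepsilon_{n+|s|+1}$-arguments agree via the induction hypothesis. The only cosmetic difference is that the paper states the generalized predicate $P(s)$ upfront rather than arriving at it after one unfolding as you do.
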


\begin{proof} By bar induction $\BI$ and the axiom $\SPEC$. We take $S(s) = {\sf true}$ and
\[ \underbrace{\epq_{n+|s|+1}^{\length \circ q^n}(\phi^{\varepsilon, q^n})((f^n)_{x * s}) = (\cZero * x) \overwrite \epq_{n+|s|+1}^{\length \circ (q_x)^{n+1}}(\phi^{\varepsilon, (q_x)^{n+1}})((f^{n+1})_s)}_{P(s)} \]
where $s \colon \Pi_{n < i \leq n + |s|} X_i$. \\[1mm]
($i$) $\forall \alpha \exists j \, P(\initSeg{\alpha}{j})$. By $\SPEC$, for any $\alpha \colon \Pi_{i > n} X_i$ there is a point $j$ such that
\[
\begin{array}{lcl}
(\length \circ q^n)(\cZr{\Pi_{i < n} X_i} * x * \initSeg{\alpha}{j} * \cZero) & = & \length(q(x * \initSeg{\alpha}{j} * \cZero)) \\[1mm]
	& < & n + j + 1.
\end{array}
\]
For such $j$ and $s = \initSeg{\alpha}{j}$ it is easy to see that $P(s)$ holds as both sides of $P(s)$ are equal to $\cZero * x * s * \cZero$. \\[1mm]
($ii$) $\forall s (\forall y P(s * y) \to P(s))$. Let $s$ be such that $\forall y P(s * y)$; we show $P(s)$. We can assume that
\[ (\length \circ q^n \circ (f^n)_{x * s})(\cZero) = (\length \circ (q_x)^{n+1} \circ (f^{n+1})_s)(\cZero) \geq n + |s| + 1, \]
as otherwise the proof can be carried out as in case $(i)$ above. Hence, we calculate
\[
\begin{array}{l}
	\epq_{n+|s|+1}^{\length \circ q^n}(\phi^{\varepsilon, q^n})((f^n)_{x * s}) \\[1.5mm]
	\quad \;\;\; \stackrel{(\ref{def-epq-equation})}{=} \phi^{\varepsilon, q^n}_{n+|s|+1}(\lambda y . \epq_{n+|s|+2}^{\length \circ q^n}(\phi^{\varepsilon, q^n})((f^n)_{x * s * y}) \\[1.5mm]
	\quad \quad \,\stackrel{(\ref{epq-eps-lemma-eq})}{=} \epq_{n+|s|+2}^{\length \circ q^n}(\phi^{\varepsilon, q^n})((f^n)_{x * s * c}) \\[1.5mm]
	\quad \quad \,\stackrel{\textup{(IH)}}{=} (\cZero * x) \overwrite \epq_{n+|s|+2}^{\length \circ (q_x)^{n+1}}(\phi^{\varepsilon, (q_x)^{n+1}})((f^{n+1})_{s * c}) \\[1.5mm]
	\quad \quad\; \stackrel{(*)}{=} (\cZero * x) \overwrite \epq_{n+|s|+2}^{\length \circ (q_x)^{n+1}}(\phi^{\varepsilon, (q_x)^{n+1}})((f^{n+1})_{s * \tilde{c}}) \\[1.5mm]
	\quad \quad\; \stackrel{(\ref{epq-eps-lemma-eq})}{=} (\cZero * x) \overwrite \phi_{n+|s|+1}^{\varepsilon, (q_x)^{n+1}}(\lambda y . \epq_{n+|s|+2}^{\length \circ (q_x)^{n+1}}(\phi^{\varepsilon, (q_x)^{n+1}})((f^{n+1})_{s * y})) \\[1.5mm]
	\quad \quad\, \stackrel{(\ref{def-epq-equation})}{=} (\cZero * x) \overwrite \epq_{n+|s|+1}^{\length \circ (q_x)^{n+1}}(\phi^{\varepsilon, (q_x)^{n+1}})((f^{n+1})_s)
\end{array}
\]
where
\begin{itemize}
	\item[] $c = \varepsilon_{n+|s|+1}(\lambda y . q^n(\epq_{n+|s|+2}^{\length \circ q^n}(\phi^{\varepsilon, q^n})((f^n)_{x * s * y})))$
	\item[] $\tilde{c} = \varepsilon_{n+|s|+1}(\lambda y . (q_x)^{n+1}(\epq_{n+|s|+2}^{\length \circ (q_x)^{n+1}}(\phi^{\varepsilon, (q_x)^{n+1}})((f^{n+1})_{s * y})))$
\end{itemize}
so that $(*) \; c = \tilde{c}$ follows directly from the induction hypothesis $\forall y P(s * y)$.
\end{proof}

We are now ready to show that $\eps$ is $T$-definable from $\epq$. The proof presented here is essentially the same as Spector's proof that his restricted form of bar recursion $\SBR$ follows from the general form $\BR$ (cf. Section \ref{sec-spector}). 

% The reason we need to make use of Lemma \ref{epq-eps-lemma} (and hence bar induction) is due to updating of the functional $q$ in the unwindings of $\epq$ and $\eps$. 

\begin{theorem}[$\HAomega + \BI$] \label{cps-sbr} $\epq \geq_T \eps$.
\end{theorem}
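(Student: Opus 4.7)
The plan is to exhibit a term built from $\epq$ that satisfies the recursive equation (\ref{eps-eq-def}) characterising $\eps$. The strategy mirrors Spector's derivation of $\SBR$ from $\BR$: since $\epq$ returns a value in~$R$ while $\eps$ returns an element of $\Pi_{i \geq n} X_i$, we apply $\epq$ at the enlarged output type $R' = \Pi_i X_i$, converting each selection function $\varepsilon_i \colon J_R X_i$ into a quantifier $\phi_i^{\varepsilon, q^n} \colon K_{R'} X_i$ via~(\ref{epq-eps-lemma-eq}). Concretely, set
\[ G_n(\varepsilon, q) \fdefin \bigl(\epq_n^{\length \circ q^n}(\phi^{\varepsilon, q^n})(f^n)\bigr)^n, \]
where $f^n = \lambda \alpha.\, \cZr{\Pi_{i<n} X_i} * \alpha$ is the zero-padding map from Lemma~\ref{epq-eps-lemma} and $(\cdot)^n$ is the $n$-left-shift; the shift discards the padding so that the result type matches that of $\eps_n^\length(\varepsilon)(q)$.

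For the base case $\length(q(\cZero)) < n$, the stopping clause of $\epq$ returns $f^n(\cZero) = \cZero^{\Pi_i X_i}$, whose $n$-shift is $\cZero^{\Pi_{i \geq n} X_i}$, matching the base case of $\eps$. For the recursive case, unfolding $\epq$ produces $\phi^{\varepsilon, q^n}_n(\lambda x.\, B(x))$ with $B(x) = \epq_{n+1}^{\length \circ q^n}(\phi^{\varepsilon, q^n})((f^n)_x)$; by the definition (\ref{epq-eps-lemma-eq}) of $\phi^{\varepsilon, q^n}_n$, this simplifies to $B(c)$, where $c = \varepsilon_n(\lambda x.\, q^n(B(x)))$. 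Applying Lemma~\ref{epq-eps-lemma} and noting that the inner $\epq$ occurring in that lemma has its first $n+1$ entries equal to $\cZero$ and its $(n+1)$-shift equal to $G_{n+1}(\varepsilon, q_x)$ by definition, one obtains $B(x) = \cZr{\Pi_{i<n} X_i} * x * G_{n+1}(\varepsilon, q_x)$, so that $B(x)^n = x * G_{n+1}(\varepsilon, q_x)$ and hence $q^n(B(x)) = q_x(G_{n+1}(\varepsilon, q_x)) = \selEmb{G_{n+1}(\varepsilon, \cdot)}(q_x)$. Substituting back, $c = \varepsilon_n(\lambda x.\, \selEmb{G_{n+1}(\varepsilon, \cdot)}(q_x))$ and $G_n(\varepsilon, q) = B(c)^n = c * G_{n+1}(\varepsilon, q_c)$, exactly matching the recursive clause of~(\ref{eps-eq-def}) with $G$ in place of $\eps$.

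The main difficulty is not conceptual but bureaucratic: one must keep straight the various index shifts ($q^n$, $(q_x)^{n+1}$, $\alpha^n$), the zero paddings $f^n$ of differing lengths, and the $\overwrite$-operations that appear in Lemma~\ref{epq-eps-lemma}. All of the serious work, including the use of $\BI$ and $\SPEC$, has already been absorbed into that lemma (with $\SPEC$ available from $\epq$ itself by Lemma~\ref{howard-kreisel}), so the derivation above is purely equational. Since $T$-definability requires only exhibiting a term that \emph{satisfies} the defining equation of $\eps$, no further induction on $n$ is needed to close the proof; the hypothesis $\HAomega + \BI$ stated in the theorem suffices.
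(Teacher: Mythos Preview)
Your proposal is correct and follows essentially the same route as the paper: you use the identical definition $\eps_n^\length(\varepsilon)(q) \fdefin (\epq_n^{\length \circ q^n}(\phi^{\varepsilon,q^n})(f^n))^n$, the same two-case split on Spector's condition, and the same appeal to Lemma~\ref{epq-eps-lemma} to rewrite the recursive call and identify $c$. The only cosmetic difference is that the paper manipulates the $\overwrite$ operator directly to pass from $(\cZero * c)\overwrite\epq_{n+1}(\ldots)$ to $c * (\epq_{n+1}(\ldots))^{n+1}$, whereas you phrase this as ``the first $n{+}1$ entries are $\cZero$''; the latter remark is not actually needed (the shift identity $((\cZero*x)\overwrite\beta)^n = x*\beta^{n+1}$ holds regardless), but the conclusion is the same.
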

\begin{proof} Let $\phi^{\varepsilon, q}_n$ and $f^n$ be as defined in Lemma \ref{epq-eps-lemma}. We claim that $\eps$ can be defined from $\epq$ as
\begin{itemize}
\item[$(i)$] $\eps_n^\length(\varepsilon)(q) \stackrel{\Pi_i X_{i+n}}{\fdefin} (\epq_n^{\length \circ q^n}(\phi^{\varepsilon, q^n})(f^n))^n$.
\end{itemize}
We consider two cases. \\[1mm]
If $\length(q(\cZero^{\Pi_{i \geq n} X_i})) < n$ then we also have $\length(q^n(\cZero^{\Pi_i X_i})) < n$. Therefore
\eqleft{\eps_n^\length(\varepsilon)(q)
		\,\stackrel{(i)}{=}\, (\epq_n^{\length \circ q^n}(\phi^{\varepsilon, q^n})(f^n))^n
		\,\stackrel{(\ref{def-epq-equation})}{=}\, (f^n(\cZero))^n
		\,=\, \cZr{\Pi_i X_{i + n}}.
}
On the other hand, if $\length(q(\cZero^{\Pi_{i \geq n} X_i})) \geq n$ then $\length(q^n(\cZero^{\Pi_i X_i})) \geq n$ and hence:
\eqleft{
\begin{array}{lcl}
	\eps_n^\length(\varepsilon)(q)
		& \stackrel{(i)}{=} & (\epq_n^{\length \circ q^n}(\phi^{\varepsilon, q^n})(f^n))^n \\[1mm]
		& \stackrel{(\ref{def-epq-equation})}{=} & (\phi_n^{\varepsilon, q^n}(\lambda x^{X_n} . \epq_{n+1}^{\length \circ q^n}(\phi^{\varepsilon, q^n})((f^n)_x)))^n \\[1.5mm]
		& \stackrel{(\ref{epq-eps-lemma-eq})}{=} & (\epq_{n+1}^{\length \circ q^n}(\phi^{\varepsilon, q^n})((f^n)_c))^n \\[1.5mm]
		& \stackrel{\textup{L}\ref{epq-eps-lemma}}{=} & ((\cZero * c) \overwrite \epq_{n + 1}^{\length \circ (q_c)^{n+1}}(\phi^{\varepsilon, (q_c)^{n+1}})(f^{n+1}))^n \\[1.5mm]
		& = & c * (\epq_{n+1}^{\length \circ (q_c)^{n+1}}(\phi^{\varepsilon, (q_c)^{n+1}})(f^{n+1}))^{n+1} \\[1mm]
		& \stackrel{(i)}{=} & c * \eps_{n+1}^\length(\varepsilon)(q_c)
\end{array}
}
where
\eqleft{
\begin{array}{lcl}
c 
	& \fdefin & \varepsilon_n(\lambda x^{X_n} . q^n(\epq_{n+1}^{\length \circ q^n}(\phi^{\varepsilon, q^n})((f^n)_x))) \\[1mm]
	& \stackrel{\textup{L}\ref{epq-eps-lemma}}{=} & \varepsilon_n(\lambda x^{X_n} . q^n((\cZero * x) \overwrite \epq_{n+1}^{\length \circ (q_x)^{n+1}}(\phi^{\varepsilon, (q_x)^{n+1}})(f^{n+1}))) \\[1.5mm]
	& = & \varepsilon_n(\lambda x^{X_n} . (q_x)^{n+1}(\epq_{n+1}^{\length \circ (q_x)^{n+1}}(\phi^{\varepsilon, (q_x)^{n+1}})(f^{n+1}))) \\[1.5mm]
	& = & \varepsilon_n(\lambda x^{X_n} . q_x(\epq_{n+1}^{\length \circ (q_x)^{n+1}}(\phi^{\varepsilon, (q_x)^{n+1}})(f^{n+1}))^{n+1}) \\[1.5mm]
	& \stackrel{(i)}{=} & \varepsilon_n(\lambda x^{X_n} . q_x(\eps_{n+1}^{\length}(\varepsilon)(q_x))).
\end{array}
}
\end{proof}

%%%%%%%%%%%%%%%%%%%%%%%%%%%%%%%%%%%%%%%%
\subsection{Dialectica interpretation of classical analysis}
%%%%%%%%%%%%%%%%%%%%%%%%%%%%%%%%%%%%%%%%
\label{dialectica}

In order to find witnesses for the dialectica interpretation of $\DNS$, and hence full classical analysis, Spector arrived at the following system of equations
\begin{equation} \label{spector-eq}
\begin{array}{lcl}
	n & \stackrel{\NN}{=} & \omega \alpha, \\[2mm]
	\alpha(n) & \stackrel{X}{=} & \varepsilon_n(p), \\[2mm]
	p(\alpha(n)) & \stackrel{R}{=} & q \alpha,
\end{array}
\end{equation}
where $\varepsilon_n \colon J_R X$ and $q \colon (\NN \to X) \to R$ and $\omega \colon (\NN \to X) \to \NN$ are given and $n \colon \NN$ and $\alpha \colon \NN \to X$ and $p \colon X \to R$ are the unknowns.
We now show how $\eps$ can be used to solve Spector's equations. We first solve a slightly different set of equations, and as a corollary we obtain a solution to Spector's original one.

\begin{theorem}[$\HAomega + (\ref{eps-eq-def})$] \label{main-spec} Let $q \colon \Pi_i X_i \to R$ and $\length \colon R \to \NN$ and $\varepsilon \colon \Pi_i J_R X_i$ be given. Define
\eqleft{
\begin{array}{lcl}
\alpha & \fdefin & \eps_0^{\length}(\varepsilon)(q) \\[2mm]
p_n(x) & \fdefin & \selEmb{\eps^\length_{n+1}(\varepsilon)}(q_{\initSeg{\alpha}{n} * x}).
\end{array}
}
For $n \leq \length(q(\alpha))$ we have
\[
\begin{array}{lcl}
	\alpha(n) & \stackrel{X_n}{=} & \varepsilon_n(p_n) \\[2mm]
	p_n(\alpha(n)) & \stackrel{R}{=} & q \alpha.
\end{array}
\]
\end{theorem}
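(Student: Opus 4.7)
The backbone of the argument is Lemma \ref{spector-main-lemma} (with $n=0$), which lets me rewrite $\alpha$ as $\initSeg{\alpha}{i} * \eps_{i}^\length(\varepsilon)(q_{\initSeg{\alpha}{i}})$ for any $i$. I will instantiate this twice: once at $i = n$ (for the first equation) and once at $i = n+1$ (for the second).

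\emph{Second equation first.} Unfolding the definition of $p_n$ and then the embedding $\selEmb{\cdot}$ from Lemma \ref{basic}, I get
\[
p_n(\alpha(n)) \;=\; \selEmb{\eps^\length_{n+1}(\varepsilon)}(q_{\initSeg{\alpha}{n} * \alpha(n)}) \;=\; q_{\initSeg{\alpha}{n+1}}(\eps^\length_{n+1}(\varepsilon)(q_{\initSeg{\alpha}{n+1}})),
\]
using $\initSeg{\alpha}{n} * \alpha(n) = \initSeg{\alpha}{n+1}$. Applying Lemma \ref{spector-main-lemma} at $i = n+1$ gives precisely $\alpha = \initSeg{\alpha}{n+1} * \eps^\length_{n+1}(\varepsilon)(q_{\initSeg{\alpha}{n+1}})$, so the right-hand side above collapses to $q(\alpha)$. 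Notice this part did not use the hypothesis $n \leq \length(q(\alpha))$.

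\emph{First equation.} Applying Lemma \ref{spector-main-lemma} at $i = n$ gives $\alpha = \initSeg{\alpha}{n} * \eps_n^\length(\varepsilon)(q_{\initSeg{\alpha}{n}})$, hence $\alpha(n)$ is the $0$-th component of $\eps_n^\length(\varepsilon)(q_{\initSeg{\alpha}{n}})$. I now split on the two branches of the defining equation (\ref{eps-eq-def}). The base case requires $\length(q_{\initSeg{\alpha}{n}}(\cZero)) < n$, i.e.\ $\length(q(\initSegZ{\alpha}{n})) < n$; but in that case $\eps_n^\length(\varepsilon)(q_{\initSeg{\alpha}{n}}) = \cZero$, so $\alpha = \initSegZ{\alpha}{n}$, whence $\length(q(\alpha)) = \length(q(\initSegZ{\alpha}{n})) < n$, contradicting the hypothesis $n \leq \length(q(\alpha))$. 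Thus the recursive branch fires, giving $\eps_n^\length(\varepsilon)(q_{\initSeg{\alpha}{n}}) = c * \eps_{n+1}^\length(\varepsilon)(q_{\initSeg{\alpha}{n} * c})$, where
\[
c \;=\; \varepsilon_n\bigl(\lambda x.\,\selEmb{\eps_{n+1}^\length(\varepsilon)}(q_{\initSeg{\alpha}{n}*x})\bigr) \;=\; \varepsilon_n(p_n),
\]
and reading off the $0$-th component yields $\alpha(n) = c = \varepsilon_n(p_n)$.

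The only subtlety in the whole argument is this use of the side hypothesis $n \leq \length(q(\alpha))$ to kill the base case of the $\eps$-recursion; everything else is a direct substitution using Lemma \ref{spector-main-lemma} and the definitional unfolding of $\selEmb{\cdot}$.
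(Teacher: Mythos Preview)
Your proof is correct and matches the paper's argument essentially line for line: both use Lemma~\ref{spector-main-lemma} (at $i=n$ and $i=n{+}1$) together with a contradiction showing the base case of (\ref{eps-eq-def}) cannot occur when $n \le \length(q(\alpha))$. The paper packages the ``read off the $0$-th component'' step as Corollary~\ref{unwinding-cps}, whereas you unfold (\ref{eps-eq-def}) by hand, but that is the same computation.
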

\begin{proof} This is essentially Spector's proof (cf. lemma 11.5 of \cite{Kohlenbach(2008)}). Assume $n \leq \length(q(\alpha))$. We first argue that $n \leq \length(q(\initSegZ{\alpha}{n}))$. Otherwise, assuming $n > \length(q(\initSegZ{\alpha}{n})) = \length(q_{\initSeg{\alpha}{n}}(\cZero))$ we would have, by Lemma \ref{spector-main-lemma}, that $\alpha = \initSegZ{\alpha}{n}$. And hence, by extensionality, $n > \length(q_{\initSeg{\alpha}{n}}(\cZero)) = \length(q(\alpha)) \geq n$, which is a contradiction. \\[2mm]
%
% Then, it 
Hence, assuming $n \leq \length(q(\alpha))$ we have $n \leq \length(q(\initSegZ{\alpha}{n}))$ and hence
\eqleft{\alpha(n) \,\stackrel{\textup{C}\ref{unwinding-cps}}{=}\, \varepsilon_n(\lambda x . \selEmb{\eps_{n + 1}^\length(\varepsilon)}(q_{\initSeg{\alpha}{n} * x})) \,=\, \varepsilon_n(p_n).}
For the second equality, we have
\eqleft{
\begin{array}{lcl}
	p_n(\alpha(n)) & = & \selEmb{\eps_{n+1}^\length(\varepsilon)}(q_{\initSeg{\alpha}{n+1}}) \\[2mm]
		        	& = & q_{\initSeg{\alpha}{n+1}}(\eps_{n+1}^\length(\varepsilon)(q_{\initSeg{\alpha}{n+1}})) \\[2mm]
		        	& = & q(\initSeg{\alpha}{n+1}  * \eps_{n+1}^\length(\varepsilon)(q_{\initSeg{\alpha}{n+1}})) \\[1mm]
			& \stackrel{\textup{L}\ref{spector-main-lemma}}{=} & q(\alpha).
\end{array}
}
\end{proof}

\begin{corollary} \label{spector-solution} For any given $q \colon X^\NN \to R$ and $\omega \colon X^\NN \to \NN$ and sequence of selection functions $\varepsilon_n \colon J_R X$ (of common type $J_R X$) there are $\alpha \colon \NN \to X$ and $p \colon X \to R$ satisfying the system of equations (\ref{spector-eq}).
\end{corollary}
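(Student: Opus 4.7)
The plan is to reduce Corollary \ref{spector-solution} to Theorem \ref{main-spec} by a type-bundling trick that encodes the stopping functional $\omega$ into the generalised predicate. The mismatch to overcome is that the Theorem controls its recursion via a length map $\length \colon R \to \NN$ post-composed with $q$, whereas Spector's $\omega \colon X^\NN \to \NN$ acts directly on sequences and is, in general, independent of $q$.

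I would set $R' := R \times \NN$, define $q'(\alpha) := \langle q(\alpha), \omega(\alpha)\rangle$ and $\length := \pi_1$, so that $\length \circ q' = \omega$ by construction, and lift the given selection functions via $\varepsilon'_n(p') := \varepsilon_n(\lambda x . \pi_0(p'(x)))$, so that each $\varepsilon'_n \colon J_{R'} X$ factors its decision through the $R$-component of its input. Applying Theorem \ref{main-spec} with these data then yields $\alpha := \eps_0^\length(\varepsilon')(q')$ together with the family $p'_k(x) := \selEmb{\eps_{k+1}^\length(\varepsilon')}(q'_{\initSeg{\alpha}{k} * x})$. Choosing $n := \omega(\alpha)$, one has $n = \length(q'(\alpha))$, so the Theorem's hypothesis $n \leq \length(q'(\alpha))$ is met (with equality), yielding $\alpha(n) = \varepsilon'_n(p'_n)$ and $p'_n(\alpha(n)) = q'(\alpha)$. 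I would then take $p := \pi_0 \circ p'_n$ as Spector's witness; the second equation is read off by unpacking the definition of $\varepsilon'_n$, and the third by applying $\pi_0$ to both sides of $p'_n(\alpha(n)) = q'(\alpha)$.

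No serious obstacle stands in the way; the argument is essentially bookkeeping once the bundling is set up. The only points that require a moment of care are that one must instantiate the Theorem with the lifted $\varepsilon'_n$ rather than the originals, and that the projection $\pi_0$ commutes through the construction in exactly the right way so that Spector's original $\varepsilon_n$ re-emerge in the final equations — both of which are immediate from the definitions of $q'$, $\length$, and $\varepsilon'_n$ chosen above.
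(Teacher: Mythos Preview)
Your proposal is correct and essentially identical to the paper's own proof: the same bundling $R' = R \times \NN$, $q'(\alpha) = \langle q(\alpha), \omega(\alpha)\rangle$, $\length = \pi_1$, the same lift $\varepsilon'_n(p') = \varepsilon_n(\pi_0 \circ p')$, the same instantiation $n = \omega(\alpha)$ and $p = \pi_0 \circ p'_n$, and the same verification via Theorem~\ref{main-spec}.
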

\begin{proof} Let $R' = R \times \NN$, and let $\pi_0 \colon R \times \NN \to R$ and $\pi_1 \colon R \times \NN \to \NN$ denote the first and second projections. Define
\eqleft{
\begin{array}{lcl}
	q'(\alpha) & \stackrel{R'}{=} & \pair{q(\alpha), \omega(\alpha)} \\[2mm] 
	\varepsilon_n'(p^{X \to R'}) & \stackrel{X}{=} & \varepsilon_n(\lambda x^{X} . \pi_0(p(x))).
\end{array}
}
so $q' \colon (\NN \to X) \to R'$ and $\varepsilon_n' \colon J_{R'} X$. Let
\eqleft{
\begin{array}{lcl}
\alpha & \stackrel{X^\NN}{=} & \eps_0^{\pi_1}(\varepsilon')(q') \\[2mm]
p_n'(x^X) & \stackrel{R'}{=} & \selEmb{\eps_{n+1}^{\pi_1}(\varepsilon')}(q_{\initSeg{\alpha}{n} * x}').
\end{array}
}
Assume $n \leq \omega(\alpha) = \pi_1(q'(\alpha))$. By Theorem \ref{main-spec} we have 
\[
\begin{array}{lcl}
	\alpha(n) & \stackrel{X}{=} & \varepsilon_n'(p_n') \\[2mm]
	p_n'(\alpha(n)) & \stackrel{R'}{=} & q' \alpha.
\end{array}
\]
Finally, let $n = \omega(\alpha)$ and $p(x) = \pi_0(p_n'(x))$. Then it is easy to check that $\alpha$ and $p$ satisfy the desired equation, e.g. $\alpha(n) = \varepsilon_n'(p_n') = \varepsilon_n(\pi_0 \circ p_n') = \varepsilon_n(p)$.
\end{proof}

%%%%%%%%%%%%%%%%%%%%%%%%%%%%%%%%%%%%%%%%
\subsection{Dependent variants of $\eps$ and $\epq$}
%%%%%%%%%%%%%%%%%%%%%%%%%%%%%%%%%%%%%%%%
\label{EPS-eps}

In the dialectica interpretation of $\DNS$ given above (Section \ref{dialectica}), the selection functions $\varepsilon_n$ do not depend on the history of choices already made. Thus, it was sufficient to use an iteration of the \emph{simple} product of selection functions. Nevertheless, Spector bar recursion and modified bar recursion are normally formulated in the most general form, where selection functions at point $n$ have access to the values $X_i$ for $i < n$. 

In the same vein, in previous papers \cite{EO(2010A),EO(2009)} we have considered generalisations of the product of selection functions, where a selection function (or a quantifier) at stage $n$ can have access to the previously computed values. We called these the \emph{dependent} product of selection functions and quantifiers. 

\begin{definition}[Dependent product of selection functions and quantifiers] \label{main-dependent} Given a quantifier $\phi \colon K X$ and a family of quantifiers $\psi \colon X \to K Y$, define the dependent product quantifier $(\phi \ttimesD \psi) \colon K (X \times Y)$ as
\[ (\phi \ttimesD \psi)(p^{X \times Y \to R}) \stackrel{R}{\fdefin} \phi(\lambda x^X . \psi (x, \lambda y^Y . p(x, y))). \]
Also, given a selection function $\varepsilon \colon J X$ and a family of selection functions $\delta \colon X \to J Y$, define the  dependent product selection function $(\varepsilon \ttimesD \delta) \colon J(X \times Y)$ as
\[ (\varepsilon \ttimesD \delta)(p^{X \times Y \to R}) \stackrel{X \times Y}{\fdefin} (a, b(a)) \]
where
\eqleft{
\begin{array}{lcl}
a & \stackrel{X}{\fdefin} & \varepsilon(\lambda x^X. p(x, b(x))) \\[2mm]
b(x) & \stackrel{Y}{\fdefin} & \delta(x, \lambda y^Y . p(x, y)).
\end{array}
}
\end{definition}

As done for the simple product of selection functions and quantifiers, we can also iterate the dependent products as follows:

\begin{definition}[$\EPQ$ and $\EPS$] \label{definition-EPS} Given a family of quantifiers \[ \phi \colon \Pi_k (\Pi_{i < k} X_i \to K X_k), \] define their \emph{dependent explicitly controlled product} (denoted $\EPQ$) as
\[
\EPQ_s^\length(\phi)(q) \stackrel{R}{=} 
\left\{
\begin{array}{ll}
q(\cZero) & {\rm if} \;\length(q(\cZero)) < |s| \\[2mm]
(\phi_s \ttimesD (\lambda x^{X_{|s|}} . \EPQ_{s * x}^\length(\phi)))(q) & {\rm otherwise}.
\end{array}
\right.
\]
Unpacking the definition of the binary dependent product $\ttimesD$ this is equivalent to
\begin{equation} \label{EPQ-def} \tag{\EPQ}
\EPQ_s^\length(\phi)(q) \stackrel{R}{=} 
\left\{
\begin{array}{ll}
q(\cZero) & {\rm if} \;\length(q(\cZero)) < |s| \\[2mm]
\phi_s(\lambda x^{X_{|s|}} . \EPQ_{s * x}^\length(\phi)(q_x)) & {\rm otherwise}.
\end{array}
\right.
\end{equation}
Moreover, given a family of selection functions \[ \varepsilon \colon \Pi_k (\Pi_{i < k} X_i \to J X_k), \] define their \emph{dependent explicitly controlled product} (denoted $\EPS$) as
\[
\EPS_s^\length(\varepsilon)(q) \stackrel{\Pi_i X_{|s|+i}}{=} 
\left\{
\begin{array}{ll}
\cZero & {\rm if} \;\length(q(\cZero)) < |s| \\[2mm]
(\varepsilon_s \ttimesD (\lambda x^{X_{|s|}} . \EPS_{s * x}^\length(\varepsilon)))(q) & {\rm otherwise}.
\end{array}
\right.
\]
Similarly, unfolding the definition of $\ttimesD$ the defining equation for $\EPS$ is equivalent to
\begin{equation} \label{EPS-def} \tag{\EPS}
\EPS_s^\length(\varepsilon)(q) \stackrel{\Pi_i X_{|s|+i}}{=} 
\left\{
\begin{array}{ll}
\cZero & {\rm if} \;\length(q(\cZero)) < |s| \\[2mm]
c * \EPS_{s * c}^\length(\varepsilon)(q_c) & {\rm otherwise}
\end{array}
\right.
\end{equation}
where $c = \varepsilon_s(\lambda x . \selEmb{\EPS_{s * x}^\length(\varepsilon)}(q_x))$. 
\end{definition}

%Similar to Lemma \ref{unwinding-cps} we can also show that
%%
%\begin{equation} \label{EPS-eq-def}
%\EPS_s^\length(\varepsilon)(q)(i) =
%\left\{
%\begin{array}{ll}
%\cZero & {\rm if} \; \length(q_t(\cZero)) < |s| + i \\[2mm]
%\varepsilon_{s*t}(\lambda x . \selEmb{\EPS_{s * t * x}^\length(\varepsilon)}(q_{t*x})) & {\rm otherwise},
%\end{array}
%\right.
%\end{equation}
%%
%where $t = \initSeg{\EPS_s^\length(\varepsilon)(q)}{i}$.

Clearly $\eps$ is $T$-definable from $\EPS$. We now show that in fact $\eps$ and $\EPS$ are $T$-equivalent. In Theorem \ref{EPS-from-eps} we will make use of the following construction. Given $\alpha \colon \Pi_{i \geq n}(\Pi_{k < i} X_k \to X_i)$ and $s \colon \Pi_{k<n} X_k$ define $\alpha^s \colon \Pi_{i \geq n} X_i$ by course-of-values as
\begin{equation} \label{s-construction}
\alpha^s(i) \stackrel{X_{n+i}}{=} \alpha(i)(s * \initSeg{\alpha^s}{i}).
\end{equation}
Clearly, given a finite sequence $t \colon \Pi_{i \in [n, m]}(\Pi_{k < i} X_k \to X_i)$ we can perform the same construction to obtain a $t^s \colon \Pi_{i \in [n, m]} X_i$. 

\begin{lemma}[$\HAomega$] \label{s-constr-lemma} $(d * \alpha)^s = d(s) * (\alpha)^{s * d(s)}$, where $d \colon \Pi_{k < n} X_k \to X_n$.
\end{lemma}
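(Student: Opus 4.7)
The plan is to prove equality pointwise, i.e.\ show that $(d*\alpha)^s(i) = (d(s) * \alpha^{s*d(s)})(i)$ for every $i$, by course-of-values induction on $i$. Since both sides live in $\Pi_i X_{n+i}$, it suffices to do this coordinate by coordinate, unfolding the defining equation (\ref{s-construction}) in each case.

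For the base case $i = 0$, the right-hand side is simply $d(s)$ by the definition of concatenation. For the left-hand side, the defining equation gives
\[ (d*\alpha)^s(0) = (d*\alpha)(0)(s * \initSeg{(d*\alpha)^s}{0}) = (d*\alpha)(0)(s) = d(s), \]
since the $0$-th component of $d * \alpha$ is $d$ and the empty initial segment leaves $s$ untouched.

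For the inductive step, assume the equality holds for all $j < i$ with $i \geq 1$. Then by IH the initial segments agree in the sense that
\[ \initSeg{(d*\alpha)^s}{i} = d(s) * \initSeg{\alpha^{s*d(s)}}{i-1}, \]
so that $s * \initSeg{(d*\alpha)^s}{i} = (s * d(s)) * \initSeg{\alpha^{s*d(s)}}{i-1}$. Unfolding (\ref{s-construction}) on the left-hand side and using that $(d*\alpha)(i) = \alpha(i-1)$ for $i \geq 1$, we get
\[ (d*\alpha)^s(i) = \alpha(i-1)\bigl(s * \initSeg{(d*\alpha)^s}{i}\bigr) = \alpha(i-1)\bigl((s * d(s)) * \initSeg{\alpha^{s*d(s)}}{i-1}\bigr), \]
which by another application of (\ref{s-construction}) equals $\alpha^{s*d(s)}(i-1) = (d(s) * \alpha^{s*d(s)})(i)$, as required.

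I do not anticipate any real obstacle here: the lemma is essentially a bookkeeping statement about the mutual recursion defining $(-)^s$, and the only mildly subtle point is tracking the reindexing between the concatenation $d(s) * \alpha^{s * d(s)}$ on the right and the unfolding of the course-of-values recursion on the left. Everything else is a direct application of the definitions.
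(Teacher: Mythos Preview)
Your proof is correct and is exactly the straightforward unfolding the paper has in mind: the paper's own proof is the single word ``Straightforward'', and your course-of-values induction on $i$, with the base case $i=0$ and the inductive step using the initial-segment identity, is the natural way to cash that out.
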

\begin{proof} Straightforward.
\end{proof}

Finally, given $q \colon \Pi_{i \geq n} X_i \to R$ define $q^s \colon \Pi_{i \geq n}(\Pi_{k < i} X_k \to X_i) \to R$ as
\eqleft{
\begin{array}{lcl}
q^s(\alpha) & \stackrel{R}{\fdefin} & q(\alpha^s). 
\end{array}
}

\begin{theorem}[$\HAomega$] \label{EPS-from-eps} $\eps \geq_T \EPS$.
\end{theorem}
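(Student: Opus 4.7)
My plan is to reduce the dependent iteration $\EPS$ to the simple one $\eps$ by lifting everything to a higher-type ``functional sequence'', absorbing the dependence on the history into the \emph{values} of the sequence. Fix a prefix $s \colon \Pi_{k<n} X_k$ with $n = |s|$, let $W_k := \Pi_{i<k} X_i \to X_k$, and recall from (\ref{s-construction}) the course-of-values map $\alpha \mapsto \alpha^s$ that turns a sequence $\alpha \colon \Pi_i W_{n+i}$ of history-dependent functions into an ordinary sequence $\alpha^s \colon \Pi_i X_{n+i}$. Given $q \colon \Pi_i X_{n+i} \to R$, set $\tilde q(\alpha) := q(\alpha^s)$.

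The crux is to package the dependent family $\varepsilon$ into a \emph{non-dependent} family $\tilde\varepsilon \colon \Pi_k J W_k$ by
\[
\tilde\varepsilon_k(p)(t) \;:=\; \varepsilon_k(t)\bigl(\lambda x.\, p(\lambda u.\, x)\bigr).
\]
The intuition is that when $\tilde\varepsilon_k$ is later evaluated at a prefix $t$, it simply delegates to $\varepsilon_k(t)$; and it probes the continuation $p$ via the constant lift $\lambda u.\,x \in W_k$, which hard-wires the candidate value $x$ at position $k$ regardless of the history. My candidate definition is then
\[
\EPS_s^\length(\varepsilon)(q) \;\fdefin\; \bigl(\eps_n^\length(\tilde\varepsilon)(\tilde q)\bigr)^s.
\]

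To verify that this satisfies (\ref{EPS-def}), I will split into the two cases of the $\EPS$ equation. Since $(\cZero)^s = \cZero$, one has $\tilde q(\cZero) = q(\cZero)$, so the base case $\length(q(\cZero)) < n$ yields $\cZero$ on both sides. In the recursive case, unfolding (\ref{eps-eq-def}) produces $\eps_n^\length(\tilde\varepsilon)(\tilde q) = c_* * \eps_{n+1}^\length(\tilde\varepsilon)(\tilde q_{c_*})$ for some $c_* \in W_n$. Applying $(\cdot)^s$ and Lemma~\ref{s-constr-lemma} pushes the conversion through the concatenation:
\[
\bigl(\eps_n^\length(\tilde\varepsilon)(\tilde q)\bigr)^s \;=\; c_*(s) \,*\, \bigl(\eps_{n+1}^\length(\tilde\varepsilon)(\tilde q_{c_*})\bigr)^{s * c_*(s)}.
\]
I will then check that $c_*(s) = c' := \varepsilon_s(\lambda x.\,\selEmb{\EPS_{s*x}^\length(\varepsilon)}(q_x))$ and $\tilde q_{c_*} = \widetilde{q_{c'}}$, so the tail equals $\EPS_{s*c'}^\length(\varepsilon)(q_{c'})$ by the proposed definition, matching the $\EPS$ recursion $c' * \EPS_{s*c'}^\length(\varepsilon)(q_{c'})$.

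The main obstacle I anticipate is the identification $c_*(s) = c'$, together with its companion equality $\tilde q_{c_*} = \widetilde{q_{c'}}$: both rely on the observation that the constant lift $\lambda u.\,x$ satisfies $(\lambda u.\,x)^{t} = x$ for every prefix $t$, combined with Lemma~\ref{s-constr-lemma} applied to the concatenation $c_* * \beta$. Once these equalities are unpacked, the rest reduces to equational book-keeping, which explains why the theorem goes through in plain $\HAomega$ with no recourse to bar induction.
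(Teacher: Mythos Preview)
Your proposal is correct and is essentially the paper's own proof: the same lifted selection functions $\tilde\varepsilon_k(p)(t)=\varepsilon_t(\lambda x.\,p(\lambda u.\,x))$, the same candidate $\EPS_s^\length(\varepsilon)(q)=(\eps_{|s|}^\length(\tilde\varepsilon)(q^s))^s$, and the same two-case verification using Lemma~\ref{s-constr-lemma}, including the computation that your $c_*(s)$ coincides with the $c$ of (\ref{EPS-def}). One cosmetic remark: when you write ``$(\lambda u.\,x)^t = x$'' you presumably mean plain evaluation $(\lambda u.\,x)(t)=x$, not the course-of-values construction $(\cdot)^s$ from~(\ref{s-construction}); keeping the two notations separate will make the write-up cleaner.
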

\begin{proof} To define $\EPS$ of type $(X_k, R)$ we use $\eps$ of type $(\Pi_{i < k} X_i \to X_k, R)$. Given selection functions $\varepsilon_s \colon J X_{|s|}$ define $\tilde{\varepsilon}_k \colon J(\Pi_{i<k} X_i \to X_k)$ as
\begin{itemize}
\item[$(i)$] $\tilde{\varepsilon}_k(P^{(\Pi_{i<k} X_i \to X_k) \to R}) \stackrel{\Pi_{i<k} X_i \to X_k}{\fdefin} \lambda s^{\Pi_{i<k} X_i} . \varepsilon_s(\lambda y^{X_k}. P(\lambda t. y))$.
\end{itemize}
Note that the infinite (simple) product of the selection functions $\tilde{\varepsilon}_k$ has type
\eqleft{\eps_n^\length(\tilde{\varepsilon}) \; \colon \; J(\Pi_{i \geq n}(\Pi_{k < i} X_k \to X_i))}
where $\length \colon R \to \NN$. We claim that $\EPS$ can be defined from $\eps$ as
\begin{itemize}
\item[$(ii)$] $\EPS_s^\length(\varepsilon)(q^{\Pi_i X_{i + |s|} \to R}) \stackrel{\Pi_i X_{i + |s|}}{\fdefin} (\eps_{|s|}^\length(\tilde{\varepsilon})(q^s))^s$,
\end{itemize}
where $s \colon \Pi_{k<|s|} X_k$. Let us show that $\EPS$ as defined above satisfies the defining equation ($\ref{EPS-def}$). 
Consider two cases: \\[1mm]
If $\length(q(\cZero)) < |s|$ then $\length(q^s(\cZero)) = \length(q(\cZero)) < |s|$. Hence, by the definition of $(\cdot)^s$
\eqleft{\EPS_s^\length(\varepsilon)(q) \stackrel{(ii)}{=} (\eps_{|s|}^\length(\tilde{\varepsilon})(q^s))^s \stackrel{(\ref{eps-eq-def})}{=} (\cZero)^s = \cZero.}
On the other hand, if $\length(q(\cZero)) \geq |s|$ then $\length(q^s(\cZero)) = \length(q(\cZero)) \geq |s|$. Hence
\eqleft{
\begin{array}{lcl}
\EPS_s^\length(\varepsilon)(q)
	& \stackrel{(ii)}{=} & (\eps_{|s|}^\length(\tilde{\varepsilon})(q^s))^s \\[1mm]
	& \stackrel{(\ref{eps-eq-def})}{=} & (d * \eps_{|s |+1}^\length(\tilde{\varepsilon})((q^s)_d))^s \\[1mm]
	& \stackrel{\textup{L}\ref{s-constr-lemma}}{=} & d(s) * (\eps_{|s|+1}^\length(\tilde{\varepsilon})((q^s)_d))^{s * d(s)} \\[1mm]
	& \stackrel{\textup{L}\ref{s-constr-lemma}}{=} & d(s) * (\eps_{|s|+1}^\length(\tilde{\varepsilon})((q_{d(s)})^{s * d(s)}))^{s * d(s)} \\[1mm]
	& \stackrel{(*)}{=} & c * (\eps_{|s*c|}^\length(\tilde{\varepsilon})((q_c)^{s * c}))^{s * c} \\[1mm]
	& \stackrel{(ii)}{=} & c * \EPS_{s * c}^\length(\varepsilon)(q_c),
\end{array}
}
where $d = \tilde{\varepsilon}_{|s|}(\lambda f . (q^s)_f(\eps_{|s|+1}^\length(\tilde{\varepsilon})((q^s)_f)))$ and $c = \varepsilon_s(\lambda x . q_x(\EPS_{s * x}^\length(\varepsilon)(q_x))$ so that $(*)$
\eqleft{
\begin{array}{lcl}
d(s)	& = & \tilde{\varepsilon}_{|s|}(\lambda f . (q^s)_f(\eps_{|s|+1}^\length(\tilde{\varepsilon})((q^s)_f)))(s) \\[1mm]
	& \stackrel{(i)}{=} & \varepsilon_s(\lambda x . (q^s)_{\lambda t . x}(\eps_{|s| + 1}^\length(\tilde{\varepsilon})((q^s)_{\lambda t . x}))) \\[1mm]
	& \stackrel{\textup{L}\ref{s-constr-lemma}}{=} & \varepsilon_s(\lambda x . (q_x)^{s * x}(\eps_{|s| + 1}^\length(\tilde{\varepsilon})((q_x)^{s * x}))) \\[2mm]
	& = & \varepsilon_s(\lambda x . q_x(\eps_{|s * x|}^\length(\tilde{\varepsilon})((q_x)^{s * x}))^{s * x}) \\[1mm]
	& \stackrel{(ii)}{=} & \varepsilon_s(\lambda x . q_x(\EPS_{s * x}^\length(\varepsilon)(q_x)) \\[2mm]
	& = & c.
\end{array}
}
\end{proof}

\begin{remark} Note that a similar construction does not work in the case of quantifiers, since there is no $\lambda$-term (in the pure simply typed $\lambda$-calculus) of type $(X \to K Y) \to K(X \to Y)$, for arbitrary $X$ and $Y$. Nevertheless, it will follow from our results that the explicitly controlled iteration of the simple product of quantifiers $\epq$ is $T$-equivalent to the explicitly controlled iteration of the dependent product of quantifiers $\EPQ$ (cf. summary of results in Figure \ref{table}). 
\end{remark}

%%%%%%%%%%%%%%%%%%%%%%%%%%%%%%%%%%%%%%%%
\subsection{Relation to Spector's bar recursion}
%%%%%%%%%%%%%%%%%%%%%%%%%%%%%%%%%%%%%%%%
\label{sec-spector}

As we have shown in Theorem \ref{main-spec}, which is essentially Spector's solution, the explicitly controlled product of selection functions $\eps$ can also be used to give a computational interpretation of classical analysis. When presenting his solution in \cite{Spector(62)}, Spector first formulates a general ``construction by bar recursion" as
\begin{equation} \label{BR-def} \tag{\BR}
\BR_s^\omega(\phi)(q) \stackrel{R}{=} 
\left\{
\begin{array}{ll}
q_s(\cZero) & {\rm if} \; \omega_s(\cZero) < |s| \\[2mm]
\phi_s(\lambda x^{X_{|s|}} . \BR_{s*x}^\omega(\phi)(q)) \quad & {\rm otherwise},
\end{array}
\right.
\end{equation}
where $\phi_s \colon K_R X_{|s|}$, $q \colon \Pi_i X_i \to R$ and $\omega \colon \Pi_i X_i \to \NN$. This is usually referred to as \emph{Spector's bar
  recursion}, but we argue that this is misleading. We show that $\BR$ is
closely related to the product of \emph{quantifiers} $\EPQ$, whereas
the special case of this used by Spector is equivalent to the
(dependent) product of \emph{selection functions} $\EPS$, which we
have shown to be equivalent to $\eps$ (Section \ref{EPS-eps}).

\begin{remark} In fact, Spector's definition seems slightly more general than $\BR$ as defined here, since in Spector's definition $q$ might also depend on the length of the sequence $s$. As we show in Lemma \ref{spec-search}, however, it is possible to reconstruct $|s|$ from the sequence $s * \cZero$ if $s$ is the point where Spector's condition first happens.
\end{remark}

\begin{theorem}[$\HAomega + \BI$] \label{BR-EPQ-def} $\BR \geq_T \EPQ$.
\end{theorem}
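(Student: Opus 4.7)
The plan is to define $\EPQ_s^\length(\phi)(q) \fdefin \BR_s^\omega(\phi)(\tilde q)$, where $\tilde q(\alpha) \fdefin q(\alpha^{|s|})$ and $\omega(\alpha) \fdefin \length(\tilde q(\alpha))$, and then verify the defining equation (\ref{EPQ-def}) directly by a case split on whether $\length(q(\cZero)) < |s|$. In the base case $\omega(s * \cZero) = \length(q((s * \cZero)^{|s|})) = \length(q(\cZero)) < |s|$, so (\ref{BR-def}) reduces the call to $\tilde q(s * \cZero) = q(\cZero)$, matching the right-hand side of (\ref{EPQ-def}).

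In the step case $\length(q(\cZero)) \geq |s|$, unfolding (\ref{BR-def}) yields $\BR_s^\omega(\phi)(\tilde q) = \phi_s(\lambda x . \BR_{s*x}^\omega(\phi)(\tilde q))$, whereas our definition applied at the recursive arguments gives $\phi_s(\lambda x . \EPQ_{s*x}^\length(\phi)(q_x)) = \phi_s(\lambda x . \BR_{s*x}^{\omega_x}(\phi)(\tilde q_x))$, where $\tilde q_x(\alpha) = q_x(\alpha^{|s|+1}) = q(x * \alpha^{|s|+1})$ and $\omega_x = \length \circ \tilde q_x$. A direct calculation shows that on every sequence of the form $(s*x) * \beta$ the pairs $(\omega, \tilde q)$ and $(\omega_x, \tilde q_x)$ coincide: both $\tilde q$-terms evaluate to $q(x * \beta)$ and similarly for $\omega, \omega_x$. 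The step thus reduces to an extensionality property of $\BR$: whenever $\omega_1 = \omega_2$ and $q_1 = q_2$ on all extensions of $t$, then $\BR_t^{\omega_1}(\phi)(q_1) = \BR_t^{\omega_2}(\phi)(q_2)$.

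I would establish this extensionality lemma by $\BI$ applied to $P(u) \equiv \BR_{t*u}^{\omega_1}(\phi)(q_1) = \BR_{t*u}^{\omega_2}(\phi)(q_2)$ with $S$ trivially true. On the Spector-style bar $\{u : \omega_1((t*u) * \cZero) < |t*u|\}$, both sides reduce by (\ref{BR-def}) to the common value $q_1((t*u) * \cZero) = q_2((t*u) * \cZero)$; off the bar, (\ref{BR-def}) unfolds both sides into $\phi_{t*u}$ wrapped around recursive $\BR$-calls whose equality is the induction hypothesis $\forall x.\,P(u*x)$. The main obstacle is verifying the bar condition of $\BI$ — that every infinite extension of $t$ eventually enters the bar set. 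This is precisely Spector's condition $\SPEC$ for $\omega_1$; since the Howard--Kreisel argument underlying Lemma \ref{howard-kreisel} applies to $\BR$ in the same way it applies to $\EPQ$, $\SPEC$ is derivable in $\HAomega$ extended with $\BR$, hence available in our setting, and the induction closes.
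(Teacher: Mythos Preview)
Your proposal is correct and essentially identical to the paper's proof: the paper defines $\EPQ_s^{\length}(\phi)(q) \fdefin \BR_s^{\length \circ q^{|s|}}(\phi)(q^{|s|})$, which is your $\tilde q = q^{|s|}$ and $\omega = \length \circ \tilde q$, and then carries out the same case split, reducing the step case to the equality $\BR_{s*x}^{\omega}(\phi)(\tilde q) = \BR_{s*x}^{\omega_x}(\phi)(\tilde q_x)$ justified by $\BI$ together with $\SPEC$, the latter derived from $\BR$ via the Howard--Kreisel argument. The only cosmetic difference is that you phrase the key step as a general extensionality lemma for $\BR$, while the paper states just the specific instance needed and points to Lemma~\ref{epq-eps-lemma} for the pattern of the bar-induction argument.
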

\begin{proof} In order to define $\EPQ$ of type $(X_i, R)$ we use $\BR$ of the same type $(X_i, R$). $\BR$ and $\EPQ$ have very similar definitions, except that in $\BR$ the stopping condition is given directly on the current sequence $s * \cZero$, whereas in $\EPQ$ a ``length" function $\length \colon R \to \NN$ is used so that the stopping condition involves the composition $\length \circ q$. Hence, in order to define $\EPQ$ from $\BR$ it is essentially enough to take $\omega = \length \circ q$, taking care of the fact that the types of $q$ in $\EPQ$ and $\BR$ are slightly different as $q$ in $\EPQ$ takes a ``shorter" input sequence starting at point $|s|$. We show that $\EPQ$ defined as
\begin{itemize}
\item[$(i)$] $\EPQ_s^{\length}(\phi)(q) \fdefin \BR_s^{\length \circ q^{|s|}}(\phi)(q^{|s|})$
\end{itemize}
satisfies the equation $(\ref{EPQ-def})$. Consider two cases. \\[1mm]
If $(\length \circ q^{|s|})_s(\cZero) = \length(q(\cZero)) < |s|$ then
\eqleft{
\EPQ_s^{\length}(\phi)(q) \stackrel{(i)}{=} \BR_s^{\length \circ q^{|s|}}(\phi)(q^{|s|}) \stackrel{(\ref{BR-def})}{=} (q^{|s|})_s(\cZero) = q(\cZero).
}
On the other hand, if $(\length \circ q^{|s|})_s(\cZero) = \length(q(\cZero)) \geq |s|$ then
\eqleft{
\begin{array}{lcl}
	\EPQ_s^\length(\phi)(q)
		& \stackrel{(i)}{=} & \BR_s^{\length \circ q^{|s|}}(\phi)(q^{|s|}) \\[1.5mm]
		& \stackrel{(\ref{BR-def})}{=} & \phi_s(\lambda x^{X_{|s|}} . \BR_{s*x}^{\length \circ q^{|s|}}(\phi)(q^{|s|})) \\[1mm]
		& \stackrel{(*)}{=} & \phi_s(\lambda x^{X_{|s|}} . \BR_{s * x}^{\length \circ (q_x)^{|s*x|}}(\phi)((q_x)^{|s*x|}) \\[2mm]
		& = & \phi_s(\lambda x^{X_{|s|}} . \EPQ_{s * x}^\length(\phi)(q_x))
\end{array}
}
where
\begin{itemize}
	\item[$(*)$] $\BR_{s*x}^{\length \circ q^{|s|}}(\phi)(q^{|s|}) =  \BR_{s * x}^{\length \circ (q_x)^{|s * x|}}(\phi)((q_x)^{|s*x|})$
\end{itemize}
can, as in Lemma~\ref{epq-eps-lemma}, be proved by bar induction $\BI$ and axiom $\SPEC$, since
\eqleft{q^{|s|}(s * x * \alpha) = (q_x)^{|s * x|}(s * x * \alpha).}
Finally, recall that $\HAomega + (\ref{BR-def}) \vdash \SPEC$ (Lemma 3C of \cite{Howard(1968)}).
\end{proof}

Spector, however, explicitly says that only a \emph{restricted form} of $\BR$ is used for the dialectica interpretation of (the negative translation of) countable choice. It is this restricted form that we shall from now on call \emph{Spector's bar recursion}:

\begin{definition}[Spector's bar recursion] Spector's bar recursion \cite{Spector(62)} is the
recursion schema
\[
\SBR_s^\omega(\varepsilon) \stackrel{\Pi_i X_i}{=} 
s
\overwrite
\left\{
\begin{array}{ll}
\cZero & {\rm if} \; \omega_s(\cZero) < |s| \\[2mm]
\SBR_{s * c}^\omega(\varepsilon) & {\rm otherwise},
\end{array}
\right.
\]
where $c \stackrel{X_{|s|}}{=} \varepsilon_{s}(\lambda x^{X_{|s|}} . \SBR_{s * x}^\omega(\varepsilon))$, and where $\varepsilon_s \colon J_{\Pi_i X_i} X_{|s|}$ and $\omega \colon \Pi_i X_i \to \NN$.
\end{definition}

We now show that Spector's bar recursion is $T$-definable from the explicitly controlled product of selection functions $\EPS$. It will follow from other results that they are in fact $T$-equivalent (see Figure \ref{table}).

\begin{theorem}[$\HAomega$] \label{main-eps-sbr} $\EPS \geq_T \SBR$.
\end{theorem}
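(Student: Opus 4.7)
The plan is to realize $\SBR$ as a specialisation of $\EPS$ in which the ``output type'' $R$ of $\EPS$ is instantiated to the sequence type $\Pi_i X_i$ itself, and the functional parameter $q$ of $\EPS$ is taken to simply prepend the current history $s$. Concretely I would define
\[ \SBR_s^\omega(\varepsilon) \fdefin s * \EPS_s^{\omega}(\varepsilon)(q_s), \quad \text{where} \quad q_s(\alpha) \fdefin s * \alpha, \]
and take the length function for $\EPS$ to be $\omega$ itself, so that $(\omega \circ q_s)(\cZero) = \omega(s * \cZero) = \omega_s(\cZero)$ is exactly the quantity that governs termination in $\SBR$. A key compatibility is that $(q_s)_x = q_{s * x}$, which makes the recursive structure of $\EPS$ line up with that of $\SBR$: the ``restricted'' $q$ produced by $\EPS$ when it recurses from $s$ to $s * x$ is the same $q$ that the defining equation for $\SBR_{s*x}^\omega$ uses.

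The verification proceeds by case analysis on $\omega_s(\cZero)$. When $\omega_s(\cZero) < |s|$, the defining equation (\ref{EPS-def}) gives $\EPS_s^\omega(\varepsilon)(q_s) = \cZero$, hence $\SBR_s^\omega(\varepsilon) = s * \cZero$, which is the same infinite sequence as $s \overwrite \cZero$. When $\omega_s(\cZero) \geq |s|$, the recursive clause of (\ref{EPS-def}) produces $c * \EPS_{s * c}^\omega(\varepsilon)(q_{s * c})$ with
\[ c = \varepsilon_s\bigl(\lambda x^{X_{|s|}} . \selEmb{\EPS_{s * x}^\omega(\varepsilon)}((q_s)_x)\bigr). \]
Substituting $(q_s)_x = q_{s * x}$ and unfolding $\selEmb{(-)}$ one calculates
\[ c = \varepsilon_s\bigl(\lambda x . s * x * \EPS_{s * x}^\omega(\varepsilon)(q_{s * x})\bigr) = \varepsilon_s\bigl(\lambda x . \SBR_{s * x}^\omega(\varepsilon)\bigr), \]
which is exactly the $c$ of $\SBR$. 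Prepending $s$ then gives $\SBR_s^\omega(\varepsilon) = s * c * \EPS_{s*c}^\omega(\varepsilon)(q_{s*c}) = \SBR_{s*c}^\omega(\varepsilon)$, and this coincides with $s \overwrite \SBR_{s*c}^\omega(\varepsilon)$ because, by construction, the first $|s|+1$ positions of $\SBR_{s*c}^\omega(\varepsilon)$ already agree with $s * c$.

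The only real subtlety is keeping the indexing straight: $\EPS$ returns a sequence of type $\Pi_i X_{|s|+i}$ while $\SBR$ has type $\Pi_i X_i$, and the prefix $s$ in the definition above is precisely what bridges this gap. No principles beyond $\HAomega$ are required — the argument is a direct unfolding of the two defining equations. I expect no serious obstacle, since $\EPS$ is manifestly more general than $\SBR$: it allows an arbitrary $R$ and an explicit $q$, whereas $\SBR$ hard-wires $R = \Pi_i X_i$ and reconstructs $q$ implicitly through the overwrite in the recursion itself.
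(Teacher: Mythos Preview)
Your argument is correct and follows the same overall strategy as the paper: define $\SBR_s^\omega(\varepsilon) \fdefin s * \EPS_s^{\length}(\varepsilon')(q')$ for suitable $\length$, $\varepsilon'$, $q'$, and verify the defining equation of $\SBR$ by case analysis on the stopping condition.

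The one genuine difference is in the instantiation. The paper takes $R = (\Pi_i X_i) \times \NN$, with $\length = \pi_1$, $q^\omega(\alpha) = \pair{\alpha, \omega(\alpha)}$, and must therefore wrap the selection functions as $\tilde\varepsilon_s(p) = \varepsilon_s(\pi_0 \circ p)$ to adjust the type. You instead take $R = \Pi_i X_i$, set $\length = \omega$ directly, and let $q$ be (essentially) the identity, so that the $\varepsilon_s \colon J_{\Pi_i X_i} X_{|s|}$ of $\SBR$ already have the right type for $\EPS$ and need no wrapping. Your choice is strictly simpler: the paper carries $\omega(\alpha)$ explicitly in a second component only to project it out again via $\length$, whereas you observe that since $\omega \colon \Pi_i X_i \to \NN$ already has the type of a length function, it can be used as such. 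The verifications are otherwise step-for-step parallel, and both go through in $\HAomega$ alone.
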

\begin{proof} To define $\SBR$ of type $(X_i)$ we use $\EPS$ of type $(X_i, (\Pi_i X_i) \times \NN)$. $\EPS$ and $\SBR$ have very similar definitions, except that $\EPS$ has an extra argument $q \colon \Pi_{i \geq |s|} X_i \to R$. We can obtain $\SBR$ from $\EPS$ by simply taking $q(\alpha)$ to be the identity function plus the stopping value $\omega(\alpha)$. So, the length function $\length \colon R \to \NN$ can be taken to be the second projection. The details are as follows: Let $R = (\Pi_i X_i) \times \NN$. Given $\omega \colon \Pi_i X_i \to \NN$ and $\varepsilon_s \colon J_{\Pi_i X_i} X_{|s|}$, define
\eqleft{
\begin{array}{lcl}
\length(r^R) & \stackrel{\NN}{\fdefin} & \pi_1(r) \\[1mm]
q^\omega(\alpha^{\Pi_i X_i}) & \stackrel{R}{\fdefin} & \pair{\alpha, \omega(\alpha)} \\[1mm]
\tilde \varepsilon_s (p^{X_{|s|} \to R}) & \stackrel{X_{|s|}}{=} & \varepsilon_s(\pi_0 \circ p).
\end{array}
}
Define
\begin{itemize}
\item[$(i)$] $\SBR_s^\omega(\varepsilon) \stackrel{\Pi_i X_i}{=} s * \EPS_s^\length(\tilde \varepsilon)((q^\omega)_s)$.
\end{itemize}
If $\omega_s(\cZero) < |s|$ then $\length((q^\omega)_s(\cZero)) = \omega_s(\cZero) < |s|$ and hence
\eqleft{
\begin{array}{lcl}
	\SBR_s^\omega(\varepsilon)
		& \stackrel{(i)}{=} & s * \EPS_s^\length(\tilde \varepsilon)((q^\omega)_s) \\[0.5mm]
		& \stackrel{(\ref{EPS-def})}{=} & s * \cZero \\[1.5mm]
		& = & s \overwrite \cZero.
\end{array}
}
On the other hand, if $\omega_s(\cZero) \geq |s|$ then also $\length((q^\omega)_s(\cZero)) \geq |s|$ and we have 
\eqleft{
\begin{array}{lcl}
	\SBR_s^\omega(\varepsilon)
		& \stackrel{(i)}{=} & s * \EPS_s^\length(\tilde \varepsilon)((q^\omega)_s) \\[0.5mm]
		& \stackrel{(\ref{EPS-def})}{=} & s * c * \EPS_{s * c}^\length(\tilde \varepsilon)((q^\omega)_{s * c}) \\[1mm]
		& \stackrel{(*)}{=} & s * d * \EPS_{s * d}^\length(\tilde \varepsilon)((q^\omega)_{s * d}) \\[1mm]
		& \stackrel{(i)}{=} & \SBR_{s*d}^\omega(\varepsilon)
\end{array}
}
where $c =  \tilde \varepsilon_s(\lambda x . (q^\omega)_{s * x}(\EPS_{s * x}^\length(\tilde \varepsilon)((q^\omega)_{s * x})))$ and $d = \varepsilon_s(\lambda x . \SBR_{s * x}^\omega(\varepsilon))$ so that $(*)$
\eqleft{
\begin{array}{lcl}
c 	& = & \tilde \varepsilon_s(\lambda x . (q^\omega)_{s * x}(\EPS_{s * x}^\length(\tilde \varepsilon)((q^\omega)_{s * x}))) \\[1mm]
	& \stackrel{\textup{def} \; q^\omega,\, \tilde \varepsilon_s}{=} & \varepsilon_s(\lambda x . s * x * \EPS_{s * x}^\length(\tilde \varepsilon)((q^\omega)_{s * x})) \\[1mm]
	& \stackrel{(i)}{=} & \varepsilon_s(\lambda x . \SBR_{s * x}^\omega(\varepsilon)) \;=\; d.
\end{array}
}
\end{proof}

%%%%%%%%%%%%%%%%%%%%%%%%%%%%%%%%%%%%%%%%
%%%%%%%%%%%%%%%%%%%%%%%%%%%%%%%%%%%%%%%%
\section{Implicitly Controlled Product}
%%%%%%%%%%%%%%%%%%%%%%%%%%%%%%%%%%%%%%%%
%%%%%%%%%%%%%%%%%%%%%%%%%%%%%%%%%%%%%%%%
\label{implicit}

We have seen in Section \ref{conditional} above that the explicitly
controlled iterated product of selection functions is sufficient to
witness the dialectica interpretation of the double negation shift (and hence, classical countable
choice). In this section we show that when interpreting this same
principle via modified realizability, one seems to need an
\emph{unrestricted} or, as we we shall call it, \emph{implicitly
  controlled} infinite product of selection functions.

\begin{definition}[$\ips$] The \emph{implicitly controlled product} of a sequence of selection functions $\varepsilon \colon \Pi_k J X_k$ is defined as
\[ \ips_n(\varepsilon) \stackrel{J (\Pi_i X_{i + n})}{=} \varepsilon_n \ttimes \ips_{n+1}(\varepsilon). \]
Unfolding the definition of $\ttimes$, this is the same as
\begin{equation} \label{ips-def-eq} \tag{\ips}
\ips_n(\varepsilon)(q) \stackrel{\Pi_i X_{i + n}}{=} \underbrace{\varepsilon_n(\lambda x . q_x(\ips_{n+1}(\varepsilon)(q_x)))}_{c} * \,\ips_{n+1}(\varepsilon)(q_c).
\end{equation}
\end{definition}

We call the above infinite product \emph{implicitly controlled} because under the assumption of continuity for $q \colon \Pi_i X_{i + n} \to R$, for discrete $R$, the bar recursive calls % will 
eventually terminate. 

\begin{remark} As shown in section 5.6. of \cite{EO(2009)}, an implicitly controlled product of quantifiers $\ipq$
\[
\ipq_n(\phi) = \phi_n \ttimes \ipq_{n+1}(\phi)
\]
does not exist. It is enough to consider the case when $R = X_i = \NN$. Let $\phi_n(p) = 1 + p(0)$ and $q$ be any function. Assuming the equation above, it follows by induction that for all $n$
\[
\ipq_0(\phi)(q) = n + \ipq_n(\phi)(q_{0^n}),
\]
where $0^n = \langle 0, 0, \ldots, 0 \rangle$, with $n$ zeros; which implies $\ipq_0(\phi)(q) \geq n$, for all $n$.
\end{remark}

%Unwinding the definition of the binary product we obtain a lemma similar to Corollary \ref{unwinding-cps} for $\eps$, except in this case the lemma is much easier to prove as we do not have to consider the case distinctions.
%
%\begin{lemma}[$\HAomega + (\ref{ips-def-eq})$] \label{unwinding-ps} For all $n$ and $i$
%%
%\[
%\ips_n(\varepsilon)(q)(i) \stackrel{X_{n + i}}{=}  \varepsilon_{n + i}(\lambda x^{X_{n + i}} . \selEmb{\ips_{n + i + 1}(\varepsilon)}(q_{s *x})) 
%\]
%%
%where $s = \initSeg{\ips_n(\varepsilon)(q)}{i}$.
%\end{lemma}
%%
%\begin{proof} By unfolding the definition of the binary product of selection functions using course-of-values induction. \end{proof}

%%%%%%%%%%%%%%%%%%%%%%%%%%%%%%%%%%%%%%%%
\subsection{Realizability interpretation of classical analysis}
%%%%%%%%%%%%%%%%%%%%%%%%%%%%%%%%%%%%%%%%
\label{realizability}

We now describe how $\ips$ can be used to interpret the double negation shift (and hence countable choice) via modified realizability. As discussed in the introduction, a computational interpretation of full classical analysis can be reduced to an interpretation of the double negation shift $\DNS$. Given that the formula $A(n)$ (in $\DNS$) can be assumed to be of the form $\exists x \neg B(n, x)$, $\DNS$ is equivalent to
\[
\forall n ((A(n) \to \perp) \to A(n)) \to (\forall n A(n) \to \perp) \to \forall n A(n).
\]
That is because, for $A(n) \equiv \exists x \neg B(n, x)$, we have both $\perp \, \to A(n)$ and $\perp \, \to \forall n A(n)$ in \emph{minimal logic}. Moreover, since the negative translation brings us into minimal logic, falsity $\perp$ can be replaced by an arbitrary $\Sigma^0_1$-formula~$R$. This is known as the (refined) $A$-translation \cite{Berger(95)}, and is useful to analyse proofs of $\Pi^0_2$ theorems in analysis. Recall that we are using the abbreviation 
\eqleft{J_R A = (A \to R) \to A.}
The resulting principle we obtain is what we shall call the $J$-shift
\eqleft{\NUS \quad \colon \quad \forall n J_R A(n) \to J_R \forall n A(n).}
$\DNS$ is then the particular case of the $K$-${\sf shift}$
\eqleft{\KSHIFT \quad \colon \quad \forall n K_R A(n) \to K_R \forall n A(n),}
when $R = \perp$; considering the other type construction
\eqleft{K_R A = (A \to R) \to R.} 
One advantage of moving to the $\NUS$ is that $A(n)$ now can be taken to be an arbitrary formula, not necessarily of the form $\exists x \neg B(n, x)$. Hence the principle $\NUS$ is more general than $\DNS$. We analyse the logical strength of the principle $\NUS$ in more detail in \cite{EO(2010B)}, where a proof translation based on the construction $J_R A$ is also defined.
Our proof of the following theorem is very similar to that of \cite[Theorem 3]{BO(02A)}. 
We assume continuity and relativised bar induction as formulated in Section \ref{bar-ind}.

\begin{theorem}[$\HAomega + \BI + \CONT$] \label{ips-jshift} $\ips_0$ modified realizes $\NUS$.
\end{theorem}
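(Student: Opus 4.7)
Given $\varepsilon \mr \forall n \, J_R A(n)$ and $q \mr \forall n A(n) \to R$, I must show each component of $\ips_0(\varepsilon)(q)$ realizes the appropriate $A(n)$. Write $\alpha_s \fdefin s * \ips_{|s|}(\varepsilon)(q_s)$ and $c_s \fdefin \varepsilon_{|s|}(\lambda x . q(\alpha_{s * x}))$. Unfolding the defining equation $(\ref{ips-def-eq})$ immediately yields the identity $\alpha_s = \alpha_{s * c_s}$, and the $\ips$-analogue of Lemma \ref{spector-main-lemma}, proved by induction on $i$ in exactly the same way, lets one unwind each coordinate as $\alpha_s(|s| + i) = \varepsilon_{|s| + i}(p_i^s)$, where $p_i^s(x)$ is $q$ applied to an infinite sequence extending $\initSeg{\alpha_s}{|s| + i} * x$. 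The goal is $\alpha_\lempty = \ips_0(\varepsilon)(q) \mr \forall n A(n)$.

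I would apply relativised bar induction $\BI$ with $S(s) :\equiv \forall i < |s| \, (s_i \mr A(i))$ and $P(s) :\equiv (\alpha_s \mr \forall n A(n)) \wedge (q(\alpha_s) \mr R)$. For the inductive step, assume $s \in S$ and $\forall x (S(s * x) \to P(s * x))$. Then for each $x \mr A(|s|)$ one has $S(s * x)$ and hence $q(\alpha_{s * x}) \mr R$, so $\lambda x . q(\alpha_{s * x})$ realizes $A(|s|) \to R$; since $\varepsilon_{|s|} \mr J_R A(|s|)$ this forces $c_s \mr A(|s|)$, giving $S(s * c_s)$ and therefore $P(s * c_s)$. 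The identity $\alpha_s = \alpha_{s * c_s}$ then transfers this to $P(s)$.

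For the bar, let $\alpha \in S$, so $\alpha \mr \forall n A(n)$ and $q(\alpha) \mr R$. By $\CONT$ pick $n$ with $q(\beta) = q(\alpha)$ whenever $\initSeg{\beta}{n} = \initSeg{\alpha}{n}$, and set $s = \initSeg{\alpha}{n}$. Since $\alpha_s$ agrees with $\alpha$ on the first $n$ coordinates, continuity gives $q(\alpha_s) = q(\alpha) \mr R$. Moreover, for every $k \geq 0$ the value $p_k^s(x)$ is $q$ applied to a sequence whose first $n$ entries are still $s = \initSeg{\alpha}{n}$, so $p_k^s(x) = q(\alpha) \mr R$ uniformly in $x$; hence $p_k^s$ trivially realizes $A(n + k) \to R$ and therefore $\alpha_s(n + k) = \varepsilon_{n + k}(p_k^s) \mr A(n + k)$. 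Combined with $s_i = \alpha(i) \mr A(i)$ for $i < n$, this proves $\alpha_s \mr \forall n A(n)$ and establishes $P(s)$.

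Bar induction will then deliver $P(\lempty)$, which is $\ips_0(\varepsilon)(q) \mr \forall n A(n)$. The main subtlety lies in the choice of $P$: bundling the conjunct $q(\alpha_s) \mr R$ into the predicate is precisely what allows the inductive step to invoke the selection-function property of $\varepsilon_{|s|}$, while continuity $\CONT$ is used only to supply the bar past a modulus beyond which the recursive calls see $q$ as essentially constant.
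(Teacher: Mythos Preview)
Your proof is correct and follows essentially the same bar-induction strategy as the paper, with the same $S$ and (up to a harmless strengthening) the same $P$: the paper takes $P(s) \equiv \alpha_s \mr \forall n A(n)$ without your extra conjunct $q(\alpha_s) \mr R$, which is redundant since it follows immediately from the first conjunct together with $q \mr \forall n A(n) \to R$. Your explicit unwinding of the coordinates via the $\ips$-analogue of Lemma~\ref{spector-main-lemma} makes the bar case more transparent than the paper's terse treatment, but the argument is otherwise the same.
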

\begin{proof} Given a term $t$ and a formula $A$ we write ``$t \mr A$" for ``$t$ modified realizes $A$" (see \cite{Troelstra(73)} for definition). Assume that
\eqleft{
\begin{array}{lcl}
	\varepsilon_n & \mr & (A(n) \to R) \to A(n), \\[2mm]
	q & \mr & \forall n A(n) \to R.
\end{array} }
Let
\eqleft{
\begin{array}{lcl}
P(s) & \equiv & s * \ips_{|s|}(\varepsilon)(q_s) \,\mr\,  \forall n A(n) \\[2mm]
S(s) & \equiv & \forall n \!<\! |s| \, (s_n \mr A(n)).
\end{array}
}
We show $P(\langle \, \rangle)$ by bar induction relativised to the predicate $S$. Let us write $\alpha \in S$ as an abbreviation for $\forall n (\initSeg{\alpha}{n} \in S)$. The first assumption of $\BI$ (i.e. $S(\pair{\,})$) is vacuously true. We now prove the other two assumptions. \\[2mm]
($i$) $\forall \alpha \!\in\! S \, \exists k \, P(\initSeg{\alpha}{k})$. Given $\alpha \in S$ let $k$ be a point of continuity of $q$ on $\alpha$. Let $r := q \alpha$. By $\CONT$ we have $q(\initSeg{\alpha}{k} * \beta) = r$, for all $\beta$. By the assumptions on $\alpha$ and $q$ we have that $r \mr R$. We must show that for all $n$
\[ (\initSeg{\alpha}{k} * \ips_k(\varepsilon)(\lambda \beta . r))(n) \,\mr\,  A(n), \]
If $n < k$ this follows directly from the assumption $\alpha \in S$. In case $n \geq k$ we must show $\varepsilon_n(\lambda x . c) \, \mr \, A(n)$, which follows from the assumptions on $\varepsilon_n$ and $r$. \\[2mm]
($ii$) $\forall s \!\in\! S (\forall x [S(s * x) \to P(s * x)] \to P(s))$. Assume $\forall x [S(s * x) \to P(s * x)]$ with $s \!\in\! S$. We must prove $P(s)$, i.e.
\[ s * \ips_{|s|}(\varepsilon)(q_s) \,\mr\, \forall n A(n). \]
Unfolding the definition of $\ips_{|s|}$ (cf. $(\ref{ips-def-eq})$) this is equivalent to 
\[ \underbrace{s * c * \ips_{|s| + 1}(\varepsilon)(q_{s * c}) \,\mr\, \forall n A(n)}_{P(s * c)}. \]
where $c = \varepsilon_{|s|}(\lambda x. q_{s * x}(\ips_{|s| + 1}(\varepsilon)(q_{s * x})))$. Since $s \in S$, by the bar induction hypothesis it is enough to show that $c \mr A(|s|)$, i.e.
\[ \varepsilon_{|s|}(\lambda x. q_{s * x}(\ips_{|s| + 1}(\varepsilon)(q_{s * x}))) \mr A(|s|) \]
so that also $s * c \in S$. By the assumption on $\varepsilon_{|s|}$, the above follows from
\[ \lambda x. q_{s * x}(\ips_{|s| + 1}(\varepsilon)(q_{s * x})) \mr A(|s|) \to R. \]
Finally, by the assumption on $q$ the above is a consequence of
\[ \underbrace{s * x * \ips_{|s| + 1}(\varepsilon)(q_{s * x}) \,\mr\, \forall n A(n)}_{P(s * x)}, \]
for $x \mr A(|s|)$, which follows from the (bar) induction hypothesis. \end{proof}

%%%%%%%%%%%%%%%%%%%%%%%%%%%%%%%%%%%%%%%%
\subsection{Dependent variant of $\ips$}
%%%%%%%%%%%%%%%%%%%%%%%%%%%%%%%%%%%%%%%%
\label{IPS-ips}

Consider also the implicitly controlled dependent product of selection functions.

\begin{definition}[$\IPS$] \label{definition-IPS} Let $\varepsilon \colon \Pi_k (\Pi_{i < k} X_i \to J X_k)$. Define the \emph{dependent implicitly controlled product of selection functions} (denoted $\IPS$) as
\begin{equation} \label{IPS-def} \tag{\IPS}
\IPS_s(\varepsilon) \stackrel{J (\Pi_i X_{|s|+i})}{=} \varepsilon_s \ttimesD (\lambda x^{X_{|s|}} . \IPS_{s * x}(\varepsilon)).
\end{equation}
%
%As in Lemma \ref{unwinding-ps} for $\ips$, we can also equivalently define $\IPS$ as
%%
%\[
%\IPS_s(\varepsilon)(q)(i) \stackrel{X_{i + |s|}}{=}  \varepsilon_{s*t}(\lambda x^{X_{i + |s|}} . \selEmb{\IPS_{s * t * x}(\varepsilon)}(q_{t *x})), 
%\]
%%
%where $t = \initSeg{\IPS_s(\varepsilon)(q)}{i}$.
\end{definition}

Again (similar to Section \ref{EPS-eps}), it is clear that $\IPS$ is a generalisation of $\ips$. We now show that the proof that $\IPS$ is $T$-definable from $\ips$ can be easily adapted to show that also $\ips$ is $T$-equivalent to its dependent variant $\IPS$. In fact, in the case of $\ips$ and $\IPS$ the proof is slightly simpler since we do not have to worry about the stopping condition and the length function $\length$. 

\begin{theorem}[$\HAomega$] \label{IPS-from-ips} $\ips \geq_T \IPS$.
\end{theorem}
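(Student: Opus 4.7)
The plan is to mimic the proof of Theorem~\ref{EPS-from-eps} ($\eps \geq_T \EPS$), but without the stopping condition $\length$, so the argument becomes essentially a single unfolding computation rather than a case split. Following that pattern, I would ``Skolemise'' the dependence on the history and reduce $\IPS$ to an ordinary $\ips$ on the shifted types $\Pi_{i<k} X_i \to X_k$.

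Concretely, given selection functions $\varepsilon_s : J X_{|s|}$, define
\[ \tilde{\varepsilon}_k(P^{(\Pi_{i<k} X_i \to X_k) \to R}) \stackrel{\Pi_{i<k} X_i \to X_k}{\fdefin} \lambda s^{\Pi_{i<k} X_i} . \varepsilon_s(\lambda y^{X_k} . P(\lambda t . y)), \]
so that $\ips_n(\tilde{\varepsilon}) \colon J(\Pi_{i \geq n}(\Pi_{k<i} X_k \to X_i))$. Using the course-of-values construction $(\cdot)^s$ from equation~(\ref{s-construction}) and the pullback $q^s(\alpha) \fdefin q(\alpha^s)$, define
\[ \IPS_s(\varepsilon)(q^{\Pi_i X_{|s|+i} \to R}) \stackrel{\Pi_i X_{|s|+i}}{\fdefin} (\ips_{|s|}(\tilde{\varepsilon})(q^s))^s. \]

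To verify this satisfies (\ref{IPS-def}), I would unfold $\ips_{|s|}(\tilde{\varepsilon})(q^s)$ via (\ref{ips-def-eq}) as $d * \ips_{|s|+1}(\tilde{\varepsilon})((q^s)_d)$ with
$d = \tilde{\varepsilon}_{|s|}(\lambda f . (q^s)_f(\ips_{|s|+1}(\tilde{\varepsilon})((q^s)_f)))$, then apply $(\cdot)^s$ and Lemma~\ref{s-constr-lemma} to rewrite the result as $d(s) * (\ips_{|s|+1}(\tilde{\varepsilon})((q^s)_d))^{s * d(s)}$. The bulk of the calculation is then two algebraic identities, both consequences of Lemma~\ref{s-constr-lemma}: first, $(q^s)_{\lambda t . x} = (q_x)^{s * x}$, obtained from $((\lambda t . x) * \alpha')^s = x * (\alpha')^{s * x}$; and second, $(q^s)_d = (q_{d(s)})^{s * d(s)}$, obtained analogously from $(d * \alpha')^s = d(s) * (\alpha')^{s * d(s)}$.

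Combining these, the head $d(s)$ computes as
\[ d(s) \stackrel{\textup{def } \tilde{\varepsilon}}{=} \varepsilon_s(\lambda x . (q^s)_{\lambda t . x}(\ips_{|s|+1}(\tilde{\varepsilon})((q^s)_{\lambda t . x}))) = \varepsilon_s(\lambda x . q_x((\ips_{|s|+1}(\tilde{\varepsilon})((q_x)^{s*x}))^{s*x})), \]
which, by the definition of $\IPS$, equals the canonical $c = \varepsilon_s(\lambda x . q_x(\IPS_{s*x}(\varepsilon)(q_x)))$. The tail similarly simplifies to $\IPS_{s*c}(\varepsilon)(q_c)$, giving the desired equation.

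The only obstacle is bookkeeping: tracking the shifts between the types $\Pi_{i \geq n} X_i$ and the ``Skolem'' types $\Pi_{i \geq n}(\Pi_{k<i} X_k \to X_i)$, and verifying that $(\cdot)^s$ interacts correctly with concatenation and with the restrictions $q_x$ and $(\cdot)_f$. Unlike Theorem~\ref{EPS-from-eps}, there is no case analysis on $\length(q(\cZero))$ and no appeal to bar induction; a single equational chain suffices.
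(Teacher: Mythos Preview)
Your proposal is correct and essentially identical to the paper's proof: same definition of $\tilde{\varepsilon}_k$, same candidate $\IPS_s(\varepsilon)(q) \fdefin (\ips_{|s|}(\tilde{\varepsilon})(q^s))^s$, same unfolding via (\ref{ips-def-eq}) and Lemma~\ref{s-constr-lemma}, and the same verification that $d(s) = c$. One small remark: your closing comment that Theorem~\ref{EPS-from-eps} uses bar induction is inaccurate (that theorem is also over $\HAomega$; only the case split on $\length(q(\cZero))$ is dropped here), but this does not affect the argument.
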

\begin{proof} Let $\tilde{\varepsilon}_k$ be as defined in Theorem \ref{EPS-from-eps}. Note that the infinite (simple) product of selection functions applied to $\tilde{\varepsilon}$ has type
\eqleft{\ips_n(\tilde{\varepsilon}) \;\; \colon \;\; J(\Pi_{i \geq n}(\Pi_{k < i} X_k \to X_i)).}
We claim that $\IPS$ can then be defined from $\ips$ as
\begin{itemize}
	\item[$(i)$] $\IPS_s(\varepsilon)(q^{\Pi_j X_{j + |s|} \to R}) \stackrel{\Pi_j X_{j + |s|}}{\fdefin} (\ips_{|s|}(\tilde{\varepsilon})(q^s))^s$
\end{itemize}
where $s \colon \Pi_{k<|s|} X_k$ and $(\cdot)^s$ is as defined in (\ref{s-construction}). We have
\[
\begin{array}{lcl}
\IPS_s(\varepsilon)(q) 
	& \stackrel{(i)}{=} & (\ips_{|s|}(\tilde{\varepsilon})(q^{s}))^{s} \\[1mm]
	& \stackrel{(\ref{ips-def-eq})}{=} & (d * \ips_{|s|+1}(\tilde{\varepsilon})((q^{s})_d))^{s} \\[1mm]
	& \stackrel{\textup{L}\ref{s-constr-lemma}}{=} & d(s) * (\ips_{|s|+1}(\tilde{\varepsilon})((q^{s})_d))^{s * d(s)} \\[2mm]
	& \stackrel{\textup{L}\ref{s-constr-lemma}}{=} & d(s) * (\ips_{|s|+1}(\tilde{\varepsilon})((q_{d(s)})^{s * d(s)}))^{s * d(s)} \\[2mm]
	& \stackrel{(*)}{=} & c * (\ips_{|s * c|}(\tilde{\varepsilon})((q_c)^{s * c}))^{s * c} \\[2mm]
	& \stackrel{(i)}{=} & c * \IPS_{s * c}(\varepsilon)(q_c)
\end{array}
\]
where, as in Theorem \ref{EPS-from-eps}, we can show that $(*) \; d(s) = c$ for
\eqleft{
\begin{array}{lcl}
d & = & \tilde{\varepsilon}_{|s|}(\lambda f . (q^s)_f(\ips_{|s|+1}(\tilde{\varepsilon})((q^s)_f))) \\[2mm]
c & = & \varepsilon_s(\lambda x . (q_x)(\IPS_{s * x}(\varepsilon)(q_x)).
\end{array}
}
\end{proof}

%%%%%%%%%%%%%%%%%%%%%%%%%%%%%%%%%%%%%%%%
\subsection{Relation to modified bar recursion}
%%%%%%%%%%%%%%%%%%%%%%%%%%%%%%%%%%%%%%%%
\label{sec-mbr}

The proof that $\ips$ interprets full classical analysis, via modified realizability, is very similar to the one given in \cite{BO(02A),BO(05)} that \emph{modified bar recursion} $\MBR$ interprets full classical analysis. In this section we show how $\MBR$ corresponds directly to the infinite iteration of a different form of binary product of selection functions. We also show (cf. Section \ref{mbr-def-ips}) that this different product when iterated leads to a form of bar recursion ($\MBR$) which is nevertheless $T$-equivalent to $\IPS$.

\begin{definition} \label{skewed-prod} Given a function $\varepsilon \in (X \to R) \to X \times Y$ and a selection function $\delta \in J Y$ define a selection function $\varepsilon \mttimes  \delta \in J(X \times Y)$ as
\[
(\varepsilon \mttimes \delta)(p) \stackrel{X \times Y}{\fdefin} \varepsilon(\lambda x. p(x, b(x)))
\]
where $b(x) \stackrel{Y}{\fdefin} \delta(\lambda y . p(x, y))$. We shall also consider a dependent version $\mttimesD$ of the product where $\delta \colon X \to J Y$ and $b(x) \fdefin \delta(x, \lambda y . p(x, y))$.
\end{definition}

The above construction shows how a mapping of type $(X \to R) \to X \times Y$ can be extended to a selection function on the product space, given a selection function on $Y$. We shall use this with $X = X_n$ and $Y = \Pi_i X_{i + n + 1}$, so that we obtain a selection function in $J(\Pi_i X_{i+n})$.

\begin{definition}[$\mbr$] Let $\varepsilon_n \colon (X_n \to R) \to \Pi_i X_{i + n}$ and $\varepsilon = (\varepsilon_n)_{n \in \NN}$. Define the \emph{iterated skewed product} $\isp$ as
\[ \mbr_n(\varepsilon) \stackrel{J (\Pi_i X_{i + n})}{=} \varepsilon_n \mttimes \mbr_{n+1}(\varepsilon).\]
Unfolding the definition of $\mttimes$ we have
\begin{equation} \label{mbr-def-eq} \tag{\mbr}
\mbr_n(\varepsilon)(q) \stackrel{\Pi_i X_{i + n}}{=} \varepsilon_n(\lambda x . q_x( \mbr_{n+1}(\varepsilon)(q_x) ).
\end{equation}
Define also the \emph{dependent iterated skewed product} $\MBR$
\[
\MBR_s(\varepsilon) \stackrel{J (\Pi_i X_{i + |s|})}{=} \varepsilon_s \mttimesD (\lambda x . \MBR_{s * x}(\varepsilon)),
\]
where in this case $\varepsilon_s \colon (X_{|s|} \to R) \to \Pi_i X_{i + |s|}$. Again, unfolding the definition of $\mttimes$ we have
\begin{equation} \label{MBR-def-eq-unfold} \tag{\MBR}
\MBR_s(\varepsilon)(q) \stackrel{\Pi_i X_{i + |s|}}{=} \varepsilon_s(\lambda x . q_x( \MBR_{s * x}(\varepsilon)(q_x)).
\end{equation}
We name this $\isp$ and $\MBR$ because we will show this is essentially \emph{modified bar recursion} as defined in \cite{BO(02A),BO(05)}.
\end{definition}

We think of $\varepsilon$ as a sequence of \emph{skewed selection functions}. The idea is that sometimes a witness for $X_k$ is automatically a witness for all types $X_i$ for $i \geq k$. In such cases, a selection function $\varepsilon_n \colon (X_n \to R) \to X_n$ gives rise to a skewed selection function $\varepsilon_n \colon (X_n \to R) \to \Pi_{i \geq n} X_i$, so that the more intricate product of selection functions (Definition \ref{main-simple}) can be replaced by the simpler product given in Definition \ref{skewed-prod}. 

Similarly to $\IPS$ and $\EPS$ (Sections \ref{EPS-eps} and \ref{IPS-ips}), we now show that the \emph{simple} iterated skewed product $\mbr$ is $T$-equivalent to its \emph{dependent} variant $\MBR$. Given a sequence of types $X_i$ let us define the new sequence
\[
Y_j \equiv \BB \times (\Pi_{i < j} X_i \to \Pi_k X_{k + j}).
\]
The intuition for the construction below is the same as the one used to show that $\eps$ $T$-defines $\EPS$ (Theorem \ref{EPS-from-eps}), except that here we need an extra boolean flag as the whole result of the skewed selection function will be returned on the first position of the output. The flag is used so that functions querying such sequences can know which are proper values and which are dummy values. Let us first define the construction $(\cdot)^{[s]} \colon \Pi_i Y_{i + |s|} \to \Pi_i X_{i + |s|}$ that given $\alpha \colon \Pi_i Y_{i + |s|}$ is defined as
\eqleft{
\alpha^{[s]}(i) \stackrel{X_{i + |s|}}{\fdefin} 
\begin{cases}
g(s * \initSeg{\alpha^{[s]}}{n})(i-n) & \text{if $\exists n \leq i (\alpha(n) = \pair{\cTrue, g})$} \\[2mm]
\cZero & \text{if $\forall n \leq i (\alpha(n) = \pair{\cFalse, \ldots})$}
  \end{cases}
}
where in the first case $n$ is the greatest $n \leq i$ such that $\alpha(n)$ is of the form $\pair{\cTrue, g}$. Finally, for $q \colon \Pi_i X_{i + |s|} \to R$ we define
\eqleft{q^{[s]}(\alpha) \stackrel{R}{\fdefin} q(\alpha^{[s]})}
so $q^{[s]} \colon \Pi_i Y_{i + |s|} \to R$; and for $x \colon X_j$ define $\hat x \colon Y_j$ as
\eqleft{\hat x = \pair{\cTrue, \lambda s^{\Pi_{i < j} X_i} . \pair{x^{X_j}, \cZero^{X_{j+1}}, \cZero^{X_{j+2}}, \ldots}}.}

\begin{lemma}[$\HAomega$] \label{isp-ISP-lemma} $(q^{[s]})_{\hat x} =  (q_x)^{[s * x]}$. 
\end{lemma}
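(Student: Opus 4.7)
The plan is to unfold the definitions and reduce the functional equation to a pointwise identity between two infinite sequences, which I would then verify by strong induction on the position. By extensionality of $q$ and the definitions of the subscript notation and of $q^{[s]}$, the claim $(q^{[s]})_{\hat x}(\alpha) = (q_x)^{[s*x]}(\alpha)$ for all $\alpha \colon \Pi_i Y_{i+|s|+1}$ unfolds to $q((\hat x * \alpha)^{[s]}) = q(x * \alpha^{[s*x]})$, so it suffices to establish the sequence equality
\[
(\hat x * \alpha)^{[s]} \stackrel{\Pi_i X_{i+|s|}}{=} x * \alpha^{[s*x]}.
\]

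I would first dispatch position $0$: since $(\hat x * \alpha)(0) = \hat x = \pair{\cTrue, \ldots}$ has the $\cTrue$ flag, the greatest $n \leq 0$ with true flag is $n = 0$, and the $g$-component of $\hat x$, applied to the history $s$, returns $x$ in slot $0$, matching $(x * \alpha^{[s*x]})(0) = x$.

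For positions $i+1$ I would proceed by strong induction on $i$ with a case split on whether some $\alpha(n)$ with $n \leq i$ carries flag $\cTrue$. If none do, then in $\hat x * \alpha$ the only true-flag position in $\{0, \ldots, i+1\}$ is $n = 0$, and the $g$-component of $\hat x$ returns $\cZero$ in every slot $\geq 1$; on the other side $\alpha^{[s*x]}(i) = \cZero$ by the second clause of the definition, so both sides are $\cZero$. Otherwise, let $m \leq i$ be the greatest position with $\alpha(m) = \pair{\cTrue, h}$; then in $\hat x * \alpha$ the greatest true-flag index $\leq i+1$ is $m + 1$. Both $(\hat x * \alpha)^{[s]}(i+1)$ and $\alpha^{[s*x]}(i)$ then reduce to $h(\cdot)(i-m)$ applied, respectively, to the histories $s * \initSeg{(\hat x * \alpha)^{[s]}}{m+1}$ and $(s*x) * \initSeg{\alpha^{[s*x]}}{m}$. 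These agree on the first $|s|$ entries trivially; at position $|s|$ both give $x$ (by the position-$0$ computation on the left, and by definition on the right); and for each $k = 1, \ldots, m$ the entrywise identity $(\hat x * \alpha)^{[s]}(k) = \alpha^{[s*x]}(k-1)$ is exactly the induction hypothesis applied at index $k-1 \leq m-1 < i$.

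The only delicate point is the index bookkeeping: one must verify that the course-of-values computation of $(\cdot)^{[s]}$ at position $i+1$ consults its own previously defined values only at positions strictly less than $i+1$, so that the strong induction hypothesis is legitimately available to yield the entrywise equality of histories in the nontrivial case; and that the shift by one introduced by prepending $\hat x$ on the left is precisely offset by the fact that the $g$-component of $\hat x$ places $x$ at position $0$ and $\cZero$ elsewhere.
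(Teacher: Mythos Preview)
Your proposal is correct and follows exactly the approach in the paper: reduce the functional identity to the pointwise sequence equality $(\hat x * \alpha)^{[s]} = x * \alpha^{[s*x]}$ and verify the latter by course-of-values (strong) induction on the position. The paper's proof is stated in a single line (``By course-of-values induction on $i$ we have $(\hat x * \alpha)^{[s]}(i) = (x * \alpha^{[s * x]})(i)$''), and your write-up simply spells out the base case and the two sub-cases of the inductive step that the paper leaves implicit.
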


\begin{proof} By course-of-values induction on $i$ we have $(\hat x * \alpha)^{[s]}(i) = (x * \alpha^{[s * x]})(i)$. Hence $(\hat x * \alpha)^{[s]} = x * \alpha^{[s * x]}$ and
\[ (q^{[s]})_{\hat x}(\alpha) = q((\hat x * \alpha)^{[s]}) = q_x(\alpha^{[s * x]}) = (q_x)^{[s * x]}(\alpha). \]
\end{proof}

\begin{theorem}[$\HAomega$] $\isp \geq_T \ISP$.
\end{theorem}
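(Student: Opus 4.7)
The plan is to mimic the proof of Theorem \ref{IPS-from-ips} (which reduces $\IPS$ to $\ips$), but using the skewed construction $(\cdot)^{[s]}$ of sequences from $\Pi_i Y_{i+|s|}$ to $\Pi_i X_{i+|s|}$ in place of $(\cdot)^{s}$. Because $\mbr$ outputs an entire infinite tail $\Pi_i X_{i+n}$ from a single query at position $n$, the natural $Y$-sequence we pass through has the form $Y_j = \BB \times (\Pi_{i<j} X_i \to \Pi_k X_{k+j})$ already prescribed before the statement, and the boolean flag will be used to mark positions where the encoded function delivers its whole tail at once.

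First I would define the ``lifted'' skewed selection function $\tilde\varepsilon_k \colon (Y_k \to R) \to \Pi_i Y_{i+k}$ by placing the real content at position $0$ and padding with inert entries afterwards, namely
\[
\tilde\varepsilon_k(P)(0) \fdefin \pair{\cTrue,\; \lambda t^{\Pi_{i<k} X_i}.\, \varepsilon_t(\lambda x^{X_k}. P(\hat x))}, \qquad \tilde\varepsilon_k(P)(j) \fdefin \pair{\cFalse,\, \lambda t.\cZero}\ \text{for } j>0,
\]
where $\hat x$ is the element of $Y_k$ introduced just before Lemma \ref{isp-ISP-lemma}. The key point of this definition is that when we apply $(\cdot)^{[s]}$ to $\tilde\varepsilon_{|s|}(P)$, the first slot is flagged $\cTrue$ and dominates every index $i\geq 0$, so the clause of $(\cdot)^{[s]}$ selects $n=0$ throughout and we get
\[
(\tilde\varepsilon_{|s|}(P))^{[s]} = \varepsilon_s(\lambda x. P(\hat x)).
\]

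Next I would define $\MBR$ from $\mbr$ by
\[
\MBR_s(\varepsilon)(q^{\Pi_i X_{i+|s|} \to R}) \fdefin \bigl(\mbr_{|s|}(\tilde\varepsilon)(q^{[s]})\bigr)^{[s]}
\]
and verify the recursion equation (\ref{MBR-def-eq-unfold}). Unfolding $\mbr$ via (\ref{mbr-def-eq}) and the boxed identity above gives
\[
\MBR_s(\varepsilon)(q) = \varepsilon_s\!\left(\lambda x.\, (q^{[s]})_{\hat x}\bigl(\mbr_{|s|+1}(\tilde\varepsilon)((q^{[s]})_{\hat x})\bigr)\right).
\]
Applying Lemma \ref{isp-ISP-lemma} we rewrite $(q^{[s]})_{\hat x}$ as $(q_x)^{[s*x]}$, and then using the definition of $q^{[s*x]}$ together with the definition of $\MBR_{s*x}(\varepsilon)(q_x)$ above yields $q_x(\MBR_{s*x}(\varepsilon)(q_x))$, giving exactly the right-hand side of (\ref{MBR-def-eq-unfold}).

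The main obstacle is bookkeeping around the construction $(\cdot)^{[s]}$: one must check carefully that, after applying $(\cdot)^{[s]}$ to the output of $\tilde\varepsilon_{|s|}(P)$, every index $i\geq 0$ picks up the first-slot function $g=\lambda t.\varepsilon_t(\lambda x.P(\hat x))$ evaluated on $s$ alone (the recursive occurrence $\initSeg{\alpha^{[s]}}{0}$ is empty, so there is no circularity), and that Lemma \ref{isp-ISP-lemma} is correctly invoked with the right history. Once these two reductions are in hand, the verification is a direct unfolding and the whole argument runs entirely in $\HAomega$, matching the assumptions of Theorem \ref{IPS-from-ips}.
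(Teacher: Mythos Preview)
Your proposal is correct and is essentially the paper's own proof: the same lifted skewed selection functions $\tilde\varepsilon_k$ (with the $\cTrue$ flag at position $0$ and $\cFalse$ padding thereafter), the same definition $\MBR_s(\varepsilon)(q) \fdefin (\mbr_{|s|}(\tilde\varepsilon)(q^{[s]}))^{[s]}$, and the same two reductions---the identity $(\tilde\varepsilon_{|s|}(P))^{[s]} = \varepsilon_s(\lambda x.\,P(\hat x))$ together with Lemma~\ref{isp-ISP-lemma}---drive the verification in both. Your remark that the greatest $\cTrue$-flag is always at $n=0$, so $\initSeg{\alpha^{[s]}}{0}$ is empty and $g$ is applied to $s$ alone, is exactly the content of the paper's step $(ii)$.
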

\begin{proof} In order to define $\MBR$ of type $(X_i, R)$ we use $\mbr$ of type $(Y_j, R)$. For $\varepsilon_s \colon (X_j \to R) \to \Pi_k X_{k + j}$, where $s \colon \Pi_{i < j} X_i$, define $\tilde \varepsilon_j \colon (Y_j \to R) \to \Pi_k Y_{k + j}$ as
\begin{itemize}
\item[$(i)$]
$\tilde \varepsilon_j(P^{Y_j \to R})(k) \stackrel{Y_{k + j}}{\fdefin} 
\begin{cases}
\pair{\cTrue, \lambda t^{\Pi_{i < j} X_i} . \varepsilon_t(\lambda x^{X_{j}}. P(\hat x))} & \text{if $k = 0$,} \\[2mm]
\pair{\cFalse, \cZero^{\Pi_{i < j + k} X_i \to \Pi_i X_{i + j + k}}} & \text{if $k > 0$}.
\end{cases}$
\end{itemize}
By the definition of $(\cdot)^{[s]}$ and definition $(i)$ it is easy to check that 
\begin{itemize}
	\item[$(ii)$] $(\tilde \varepsilon_{|s|}(P))^{[s]} = \varepsilon_s(\lambda x . P(\hat x))$.
\end{itemize}
We claim that $\MBR$ can be defined from $\isp$ as
\begin{itemize}
\item[$(iii)$] $\MBR_s(\varepsilon)(q^{\Pi_i X_{i + |s|} \to R}) \stackrel{\Pi_i X_{i + |s|}}{\fdefin} (\isp_{|s|}(\tilde \varepsilon)(q^{[s]}))^{[s]}$.
\end{itemize}
We have
\eqleft{
\begin{array}{lcl}
\ISP_s(\varepsilon)(q)
	& \stackrel{(iii)}{=} & (\isp_{|s|}(\tilde \varepsilon)(q^{[s]}))^{[s]} \\[1mm]
	& \stackrel{(\ref{mbr-def-eq})}{=} & \big( \tilde \varepsilon_{|s|}(\lambda f^{Y_{|s|}} . (q^{[s]})_f (\isp_{|s| + 1}(\tilde \varepsilon)((q^{[s]})_f))) \big)^{[s]} \\[1mm]
	& \stackrel{(ii)}{=} & \varepsilon_s(\lambda x^{X_{|s|}} . (q^{[s]})_{\hat x} (\isp_{|s| + 1}(\tilde \varepsilon)((q^{[s]})_{\hat x}))) \\[1mm]
	& \stackrel{\textup{L}\ref{isp-ISP-lemma}}{=} & \varepsilon_s (\lambda x^{X_{|s|}}. q_x(\big(\isp_{|s * x|}(\tilde \varepsilon)((q_x)^{[s * x]})\big)^{[s * x]})) \\[1mm]
	& \stackrel{(iii)}{=} & \varepsilon_s (\lambda x^{X_{|s|}}. q_x(\ISP_{s * x}(\varepsilon)(q_x))).
\end{array}
}
\end{proof}

We now show that a slight generalisation of modified bar recursion \cite{BO(02A),BO(05)} is $T$-equivalent to the iterated product of skewed selection functions. Define $\MBR'$ as
\begin{equation} \label{MBR-var-eq} \tag{\MBR'}
\MBR'_s(\varepsilon)(q) \stackrel{\Pi_i X_i}{=} s * \varepsilon_s(\lambda x^{X_{|s|}} . q(\MBR'_{s * x}(\varepsilon)(q)))
\end{equation}
where $q \colon \Pi_i X_i \to R$ and $\varepsilon_s \colon (X_{|s|} \to R) \to \Pi_i X_{|s|+i}$. $\MBR'$ is a generalisation of modified bar recursion (as defined in \cite{BO(02A)}, cf. lemma 2) to sequence types. If all $X_i = X$ we have precisely the definition given in \cite{BO(02A),BO(05)}.

\begin{theorem}[$\HAomega + \BI + \CONT$] \label{mbr} $\MBR =_T \MBR'$.
\end{theorem}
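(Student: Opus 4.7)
The plan is to prove the two containments separately. For $\MBR \geq_T \MBR'$ the reduction is a pure equational unfolding in $\HAomega$; the reverse direction $\MBR' \geq_T \MBR$ reduces, after a short calculation, to an extensionality property of $\MBR'$ whose proof is the one place where $\BI$ and $\CONT$ enter, in the style of the arguments used for modified bar recursion in~\cite{BO(02A),BO(05)}.

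For $\MBR \geq_T \MBR'$, set
\[
\MBR'_s(\varepsilon)(q) \fdefin s \,*\, \MBR_s(\varepsilon)(q_s).
\]
Unfolding $\MBR_s(\varepsilon)(q_s)$ once via (\ref{MBR-def-eq-unfold}) and using $(q_s)_x = q_{s*x}$ together with $q(s*x*\beta) = q_{s*x}(\beta)$ shows directly that the proposed definition satisfies~(\ref{MBR-var-eq}); no induction or continuity is needed here.

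For $\MBR' \geq_T \MBR$, set
\[
\MBR_s(\varepsilon)(q) \fdefin \bigl(\MBR'_s(\varepsilon)(\hat q)\bigr)^{|s|}, \qquad \hat q(\alpha) \fdefin q(\alpha^{|s|}),
\]
so that $\hat q \colon \Pi_i X_i \to R$ simply pads $q$ with $|s|$ ignored coordinates. Unfolding $\MBR'_s$ once and stripping the prefix $s$ gives
\[
\bigl(\MBR'_s(\varepsilon)(\hat q)\bigr)^{|s|} = \varepsilon_s\!\left(\lambda x.\, q_x\!\left(\bigl(\MBR'_{s*x}(\varepsilon)(\hat q)\bigr)^{|s|+1}\right)\right),
\]
using that $\MBR'_{s*x}(\varepsilon)(\hat q)$ starts with $s*x$ and so has $x$ at position $|s|$. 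Comparing with~(\ref{MBR-def-eq-unfold}), the verification of the defining equation of $\MBR_s$ reduces to the equality
\[
\MBR'_{s*x}(\varepsilon)(\hat q) = \MBR'_{s*x}(\varepsilon)(\widehat{q_x}),
\]
where $\widehat{q_x}(\alpha) \fdefin q_x(\alpha^{|s|+1})$. A short computation shows that $\hat q$ and $\widehat{q_x}$ agree on every infinite sequence extending $s*x$, since on such a sequence $\alpha^{|s|} = x * \alpha^{|s|+1}$.

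The remaining step, and the main obstacle, is an extensionality lemma for $\MBR'$: \emph{if $q_1(\beta) = q_2(\beta)$ for all $\beta$ extending $t$, then $\MBR'_t(\varepsilon)(q_1) = \MBR'_t(\varepsilon)(q_2)$}. I would prove it by bar induction on the predicate $P(u) \equiv \MBR'_{t*u}(\varepsilon)(q_1) = \MBR'_{t*u}(\varepsilon)(q_2)$, using $\CONT$ applied to $q_1$ (equivalently $q_2$) to supply the required bar: along any $\alpha$ the modulus of continuity produces a stage beyond which every $q_i$-evaluation inside the unfolding of $\MBR'$ depends only on the common initial segment of the two sequences, forcing equality at that stage. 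The bar-inductive step then propagates equality of $\MBR'_{t*u*x}$ for all $x$ up to equality of $\MBR'_{t*u}$ via~(\ref{MBR-var-eq}). This bar-induction-plus-continuity argument -- deriving the recursion bar for $\MBR'$ itself from continuity of its argument $q$ -- is the technical heart of the proof; everything else amounts to bookkeeping with shifts and concatenations.
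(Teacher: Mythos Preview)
Your proof is correct; both directions go through as you describe. The first direction is identical to the paper's. For the second direction ($\MBR' \geq_T \MBR$) you take a genuinely different route from the paper, so a brief comparison is in order.

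The paper defines $\MBR_s(\varepsilon)(q) \fdefin \MBR'_{\langle\,\rangle}(\lambda t.\,\varepsilon_{s*t})(q)$, i.e.\ it \emph{re-indexes the family of skewed selection functions} by prepending $s$ and then starts the $\MBR'$-recursion at the empty sequence. After one unfolding this reduces to a ``shift lemma''
\[
\MBR'_{x*r}(\lambda t.\,\varepsilon_{s*t})(q) \;=\; x * \MBR'_{r}(\lambda t.\,\varepsilon_{s*x*t})(q_x),
\]
proved by bar induction on $r$ together with continuity of $q$. You instead keep the starting index $s$ fixed and \emph{pad the outcome functional}, setting $\MBR_s(\varepsilon)(q) \fdefin (\MBR'_s(\varepsilon)(\hat q))^{|s|}$; the residual obligation is then an \emph{extensionality lemma} for $\MBR'$: if $q_1$ and $q_2$ agree on all extensions of $t$ then $\MBR'_t(\varepsilon)(q_1)=\MBR'_t(\varepsilon)(q_2)$. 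Your bar-induction argument for this lemma is sound: along any $\alpha$, continuity of $q_1$ (hence of $q_2$, since they agree on the relevant cone) makes both $q_i$ constant on extensions of a sufficiently long $t*\initSeg{\alpha}{k}$, so one unfolding of (\ref{MBR-var-eq}) gives $\MBR'_{t*\initSeg{\alpha}{k}}(\varepsilon)(q_i)=t*\initSeg{\alpha}{k}*\varepsilon_{t*\initSeg{\alpha}{k}}(\lambda x.\,r)$ for the common constant~$r$; the inductive step is immediate from (\ref{MBR-var-eq}) since $\MBR'_{t*u*x}(\varepsilon)(q_i)$ always extends $t$.

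What each approach buys: the paper's re-indexing trick avoids introducing the padding $\hat q$ and keeps the types slightly tidier, but its auxiliary lemma is tailored to the particular construction. Your extensionality lemma is a reusable, conceptually clean statement about $\MBR'$ (``$\MBR'_t$ depends on $q$ only through its values on the cone above $t$'') that one might want independently; the padding bookkeeping is the small price. The analytic core---a bar induction with the bar supplied by $\CONT$---is the same in both.
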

\begin{proof} For one direction, let $q \colon \Pi_i X_i \to R$ and $s \colon \Pi_{i < n} X_i$ and define
\begin{itemize}
\item[$(i)$] $\MBR'_s(\varepsilon)(q) \stackrel{\Pi_i X_i}{\fdefin} s * \ISP_s(\varepsilon)(q_s)$.
\end{itemize}
Unfolding definitions we have
\eqleft{
\begin{array}{lcl}
\MBR'_s(\varepsilon)(q) & \stackrel{(i)}{=} & s * \ISP_s(\varepsilon)(q_s) \\[1mm]
	& \stackrel{(\ref{MBR-def-eq-unfold})}{=} & s * \varepsilon_s(\lambda x^{X_{|s|}} . q_{s*x}(\ISP_{s*x}(\varepsilon)(q_{s*x}))) \\[2mm]
	& = & s * \varepsilon_s(\lambda x^{X_{|s|}} . q(s * x * \ISP_{s*x}(\varepsilon)(q_{s*x}))) \\[1mm]
	& \stackrel{(i)}{=} & s * \varepsilon_s(\lambda x^{X_{|s|}} . q(\MBR'_{s*x}(\varepsilon)(q))). 
\end{array}
}
For the other direction, let $q \colon \Pi_i X_{i + |s|} \to R$. Define
\begin{itemize}
\item[$(ii)$] $\ISP_s(\varepsilon)(q) \stackrel{\Pi_i X_{i + |s|}}{\fdefin} \MBR'_{\pair{\,}}(\lambda t . \varepsilon_{s * t})(q)$.
\end{itemize}
We then have
\eqleft{
\begin{array}{lcl}
\ISP_s(\varepsilon)(q) & \stackrel{(ii)}{=} & \MBR'_{\pair{\,}}(\lambda t . \varepsilon_{s * t})(q) \\[1mm]
	& \stackrel{(\ref{MBR-var-eq})}{=} & \varepsilon_s(\lambda x^{X_{|s|}} . q(\MBR'_{x}(\lambda t . \varepsilon_{s * t})(q))) \\[1mm]
	& \stackrel{(*)}{=} & \varepsilon_s(\lambda x^{X_{|s|}} . q_x(\MBR'_{\pair{\,}}(\lambda t . \varepsilon_{s * x * t})(q_x))) \\[1mm]
	& \stackrel{(ii)}{=} & \varepsilon_s(\lambda x^{X_{|s|}} . q_x(\ISP_{s*x}(\varepsilon)(q_x))),
\end{array}
}
where $(*) \; \MBR'_{x * r}(\lambda t . \varepsilon_{s * t})(q) = x * \MBR'_{r}(\lambda t . \varepsilon_{s * x * t})(q_x)$ can be proven by bar induction on the sequence $r$, assuming continuity of $q$ (cf. Lemma \ref{epq-eps-lemma}).
\end{proof}

\begin{corollary} Gandy's functional $\Gamma$ is $T$-equivalent to $\ISP$ with $X_i = \NN$ for all $i \in \NN$.
\end{corollary}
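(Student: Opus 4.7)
The plan is that this corollary is essentially a bookkeeping statement that pulls together Theorem \ref{mbr} with a result already in the literature, so my proof would be short and largely a citation-plus-matching-of-definitions argument. Specifically, the form of modified bar recursion written out in the defining equation~$(\ref{MBR-var-eq})$ is exactly the schema introduced in \cite{BO(02A),BO(05)} once one specialises the sequence of types $(X_i)_{i\in\NN}$ to the constant sequence $X_i = \NN$; this was in fact explicitly noted in the statement of Theorem~\ref{mbr}. Hence the first step is simply to observe that $\MBR'$ with $X_i = \NN$ is literally the same schema as modified bar recursion in \cite{BO(02A),BO(05)}.

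Next, I would invoke Theorem~\ref{mbr}, which gives $\MBR =_T \MBR'$ over $\HAomega + \BI + \CONT$, to transfer the previous observation to $\MBR$, and then recall that in the notation of Section~\ref{sec-mbr} the macros $\ISP$ and $\MBR$ denote the same functional (the dependent iterated skewed product); so $\ISP$ with $X_i = \NN$ is $T$-equivalent to modified bar recursion over the base type $\NN$ in the sense of \cite{BO(02A),BO(05)}.

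Finally, I would cite the $T$-equivalence between Gandy's functional $\Gamma$ and this modified bar recursor over $\NN$, which is precisely one of the main results of \cite{BO(02A),BO(05)}: modified bar recursion at the type of sequences of natural numbers is $T$-definable from $\Gamma$ and, conversely, $\Gamma$ is $T$-definable from modified bar recursion. Chaining the two $T$-equivalences yields $\Gamma =_T \ISP$ with $X_i = \NN$, as required.

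The main thing to check carefully — and really the only non-mechanical step — is the identification of the two presentations of modified bar recursion: the version in \cite{BO(02A),BO(05)} is parametrised over a single base type, while $\MBR'$ as defined here is parametrised over a sequence of types, and one must verify that specialising all $X_i$ to $\NN$ reproduces Berger and Oliva's schema up to a trivially $T$-definable translation of the input $q$ and output type. I do not expect any genuine obstacle there, since this reduction to the constant-sequence case is exactly the footnote remark made at the very end of the Finite Types definition, and the $T$-equivalence with Gandy's $\Gamma$ then follows by direct citation.
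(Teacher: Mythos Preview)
Your proposal is correct and follows essentially the same route as the paper: cite \cite{BO(05)} for the $T$-equivalence of Gandy's $\Gamma$ with $\MBR'$ at lowest type, then invoke Theorem~\ref{mbr} to pass from $\MBR'$ to $\MBR$ (which is the same as $\ISP$), checking that the type instance lines up. The paper's proof is just a two-sentence version of exactly this argument, noting in addition that the equivalence of Theorem~\ref{mbr} respects types.
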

\begin{proof} It has been shown in \cite{BO(05)} that the $\Gamma$ functional is $T$-equivalent to $\MBR'$ of lowest type. It remains to observe that the equivalence of Theorem \ref{mbr} respects the types. \end{proof}

\begin{question} It should be mentioned that in \cite{Berardi(98)} yet another form of bar recursion is used for the interpretation of the double negation shift (although they also use modified bar recursion when dealing with dependent choice). We refer to this different bar recursion as the $\BBC$ functional. Thomas Powell \cite{Powell(2014A)} has recently shown that $\BBC$ is $T$-equivalent to $\IPS$ (see also \cite{Berger04}).
\end{question}

%%%%%%%%%%%%%%%%%%%%%%%%%%%%%%%%%%%%%%%%
%%%%%%%%%%%%%%%%%%%%%%%%%%%%%%%%%%%%%%%%
\section{Further Inter-definability Results}
%%%%%%%%%%%%%%%%%%%%%%%%%%%%%%%%%%%%%%%%
%%%%%%%%%%%%%%%%%%%%%%%%%%%%%%%%%%%%%%%%
\label{further-def}

In this section we prove three further inter-definability results, namely $\ips \geq \mbr$, $\MBR \geq \IPS$ and $\IPS \geq \EPQ$.

\begin{theorem}[$\HAomega$] $\ips \geq_T \mbr$.
\end{theorem}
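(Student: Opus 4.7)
My plan is to convert the skewed selection functions into ordinary ones, feed them to $\ips$, and wrap the whole $\ips$-recursion inside one outer skewed application of $\varepsilon_n$ so that the ``full tail'' structure forced by $\mbr$ is preserved, rather than trying to reconstruct it from the simple product step-by-step.

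Concretely, I would first extract from each skewed $\varepsilon_k \colon (X_k \to R) \to \Pi_i X_{i+k}$ its head selection function $\tilde{\varepsilon}_k \colon J X_k$ given by
\[
\tilde{\varepsilon}_k(p) \fdefin \varepsilon_k(p)(0).
\]
These are ordinary selection functions, hence can be iterated by $\ips$. As the candidate definition of $\mbr$ in terms of $\ips$ I would take
\[
\mbr_n(\varepsilon)(q) \fdefin \varepsilon_n\bigl(\lambda x . q_x(\ips_{n+1}(\tilde{\varepsilon})(q_x))\bigr),
\]
in which the outer $\varepsilon_n$ supplies the whole skewed output demanded by $\mbr$, and $\ips_{n+1}(\tilde{\varepsilon})(q_x)$ is used only inside the predicate, to compute the $R$-value $q_x \circ \mbr_{n+1}$ on which the outer selection feeds.

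To verify the defining equation $(\ref{mbr-def-eq})$ for this candidate, I would substitute the same definition at level $n+1$ into the right-hand side of $(\ref{mbr-def-eq})$ and, using extensionality of $\varepsilon_n$ (available since we work over E-$\HAomega$), reduce everything to the pointwise identity
\[
q_x\bigl(\ips_{n+1}(\tilde{\varepsilon})(q_x)\bigr) \,=\, q_x\bigl(\varepsilon_{n+1}(\lambda y . (q_x)_y(\ips_{n+2}(\tilde{\varepsilon})((q_x)_y)))\bigr)
\]
for every $x$. The main obstacle is precisely this identity: unfolding $\ips_{n+1}$ by $(\ref{ips-def-eq})$ shows that both sides share the same head $c' = \varepsilon_{n+1}(p')(0)$, but their tails arise by different mechanisms — iterated simple selection on the left, a single skewed call on the right. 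I expect the closing argument to require either an inductive unwinding that repeatedly applies $(\ref{ips-def-eq})$ and $(\ref{mbr-def-eq})$ at successive levels to peel off matching heads, or, if the naive candidate does not verify in $\HAomega$ alone, a refinement of $\tilde{\varepsilon}$ (for instance weaving in all positions of $\varepsilon_k$'s skewed output, not just the first) so that the two sides coincide on the nose.
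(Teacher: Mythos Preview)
Your candidate does not verify the defining equation, and the gap you flag is real, not merely a matter of tidying up. Unfolding the left-hand side of your key identity via $(\ref{ips-def-eq})$ gives
\[
q_x\bigl(\ips_{n+1}(\tilde{\varepsilon})(q_x)\bigr)
= (q_x)_{c'}\bigl(\ips_{n+2}(\tilde{\varepsilon})((q_x)_{c'})\bigr)
= p'(c'),
\]
where $p' = \lambda y.\,(q_x)_y(\ips_{n+2}(\tilde{\varepsilon})((q_x)_y))$ and $c' = \tilde{\varepsilon}_{n+1}(p') = \varepsilon_{n+1}(p')(0)$. The right-hand side, however, is $q_x$ applied to the \emph{entire} sequence $\varepsilon_{n+1}(p')$, whose tail is whatever the skewed $\varepsilon_{n+1}$ chooses to output and has no reason to agree with $\ips_{n+2}(\tilde{\varepsilon})((q_x)_{c'})$. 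So you would need $p'(c') = q_x(\varepsilon_{n+1}(p'))$ for arbitrary skewed $\varepsilon_{n+1}$, which is false in general: take $\varepsilon_{n+1}(p)$ constant in $p$, say $\varepsilon_{n+1}(p) = \alpha_0$ for some fixed $\alpha_0$ --- then the right-hand side is the constant $q_x(\alpha_0)$ while the left-hand side still depends on $\ips_{n+2}$. No amount of inductive unwinding at successive levels rescues this, because the discrepancy is already present at a single level and there is no finite descent.

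The underlying problem is that $\tilde{\varepsilon}_k(p) = \varepsilon_k(p)(0)$ throws away precisely the tail information that $\mbr$ needs. The paper's construction avoids this by \emph{changing the type} at which $\ips$ is invoked: it uses $\ips$ at type $(\Pi_j X'_{i+j}, R)$ with $X' = \BB \times X$, so each single coordinate of the simple product now carries an entire boolean-tagged infinite sequence. The transformed selection function $\tilde{\varepsilon}_i$ stores the whole skewed output $\varepsilon_i(\ldots)$ in that one slot, tagged $\cFalse$; recursive inputs are tagged $\cTrue$. A decoding map $\alpha \mapsto \tilde\alpha$ then flattens the resulting matrix of tagged sequences into a single $X$-sequence by reading off heads until the first $\cFalse$-tag and then copying the stored skewed tail from there on. With this encoding the verification becomes a direct equational calculation in $\HAomega$. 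Your closing hint about ``weaving in all positions of $\varepsilon_k$'s skewed output'' points in the right direction, but to make it work you need both the type lift (so the full output survives inside the $\ips$ recursion) and the boolean tags (so the decoder can tell a genuine skewed tail from a recursive placeholder).
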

\begin{proof} Given a type $X$ let us denote by $X'$ the type $\BB \times X$. In order to define $\mbr$ of type $(X_i, R)$ we use $\ips$ of type $(\Pi_j X_{i + j}', R)$. The main idea for the construction is to turn a skewed selection function into a proper selection function as follows. Given $\varepsilon_i \colon (X_i \to R) \to \Pi_j X_{i+j}$ we define $\tilde{\varepsilon}_i \colon J (\Pi_j X_{i+j}')$ as
\begin{itemize}
\item[$(i)$] $\tilde{\varepsilon}_i(f^{\Pi_j X_{i+j}' \to R}) \stackrel{\Pi_j X_{i+j}'}{\fdefin} \lambda j . \pair{\cFalse, \varepsilon_i(\lambda x^{X_i} . f(\hat x))(j)}$,
\end{itemize}
where
\eqleft{
\hat x(j) \fdefin 
\left\{
\begin{array}{ll}
\pair{\cTrue, x^{X_i}} & {\rm if} \; j = 0 \\[2mm]
\pair{\cTrue, \cZero^{X_{i+j}}} & {\rm if} \; j > 0.
\end{array}
\right.
}
Intuitively, the booleans $\BB = \{ \cTrue, \cFalse \}$ are used to distinguish between values returned by $\varepsilon_i$ and those values $\hat x$ passed into a recursive call. \\[1mm]
Given $\alpha \colon \Pi_k (\Pi_j X'_{j + i + k})$ we define $\tilde \alpha \colon \Pi_j X_{j + i}$ as
\eqleft{\tilde{\alpha}(j) \stackrel{X_{j + i}}{\fdefin} 
\left\{
\begin{array}{ll}
(\alpha(j)(0))_1 & {\rm if} \; \forall k \!<\! j \, ((\alpha(k)(0))_0 = \cTrue \\[2mm]
(\alpha(k)(j-k))_1 & {\rm if} \; \exists k \!<\! j \, ((\alpha(k)(0))_0 = \cFalse,
\end{array}
\right.
}
where $k = \mu k \!<\! j \, (\alpha(k)(0))_0 = \cFalse$. The construction $\tilde \alpha$ receives as input a matrix $\alpha \colon \Pi_i \Pi_{j \geq i} X_j'$ and produces a sequence $\Pi_j X_j$ as follows: As long as the value $\alpha(j)$ is some $\hat x$ (boolean flag will be $\cTrue$) we filter out the $x$; once we reach a value returned by an $\varepsilon_k$ (boolean will be $\cFalse$) then we return the whole sequence returned by the skewed selection function $\varepsilon_k$. Hence, given a $q \colon \Pi_j X_{i+j} \to R$ we define $\tilde q \colon \Pi_k (\Pi_j X'_{i+k+j}) \to R$ as $\tilde q(\alpha) \fdefin q(\tilde \alpha)$ where, Clearly
\begin{itemize}
	\item[$(ii)$] $\tilde{q}_{\hat x}(\beta) = \widetilde{(q_x)}(\beta)$.
\end{itemize}
We claim that $\isp$ can be defined as
\begin{itemize}
\item[$(iii)$] $\isp_i(\varepsilon)(q) \stackrel{\Pi_j X_{i+j}}{\fdefin} (\big(\ips_i(\tilde{\varepsilon})(\tilde{q})\big)^{\Pi_k \Pi_j X'_{i+k+j}}(0))^1$
\end{itemize}
where $\varepsilon_i \colon (X_i \to R) \to \Pi_j X_{i+j}$ and $q \colon \Pi_j X_{i+j} \to R$.  Recall that given a sequence $\beta \colon \Pi_{i < n} (X_i \times Y_i)$ we write $\beta^1  \colon \Pi_{i < n} Y_i$ for the projection of the sequence on the second coordinates. We have
\eqleft{
\begin{array}{lcl}
\isp_i(\varepsilon)(q) 
	& \stackrel{(iii)}{=} & (\ips_i(\tilde{\varepsilon})(\tilde{q})(0))^1 \\[1mm]
	& \stackrel{(\ref{ips-def-eq})}{=} & (\tilde{\varepsilon}_i(\lambda \alpha^{\Pi_j X'_{i+j}} . \tilde{q}_{\alpha}(\ips_{i+1}(\tilde{\varepsilon})(\tilde{q}_{\alpha}))))^1 \\[1mm]
	& \stackrel{(i)}{=} & \varepsilon_i(\lambda x^{X_i} . \tilde{q}_{\hat x}(\ips_{i+1}(\tilde{\varepsilon})(\tilde{q}_{\hat x}))) \\[1mm]
	& \stackrel{(ii)}{=} & \varepsilon_i(\lambda x^{X_i} . \widetilde{(q_x)}(\ips_{i+1}(\tilde{\varepsilon})(\widetilde{(q_x)}))) \\[1mm]
	& \stackrel{(iv)}{=} & \varepsilon_i(\lambda x^{X_i} . q_x((\ips_{i+1}(\tilde{\varepsilon})(\widetilde{(q_x)})(0))^1)) \\[1mm]
	& \stackrel{(iii)}{=} & \varepsilon_i(\lambda x^{X_i} . q_x(\mbr_{i+1}(\varepsilon)(q_x)))
\end{array}
}
using that
\begin{itemize}
	\item[$(iv)$] $\tilde \beta = (\ips_{i+1}(\tilde{\varepsilon})(\widetilde{(q_x)})(0))^1$, for $\beta = \ips_{i+1}(\tilde{\varepsilon})(\widetilde{(q_x)})$.
\end{itemize}
\end{proof}

%%%%%%%%%%%%%%%%%%%%%%%%%%%%%%%%%%%%%%%%
\subsection{$\MBR \geq \IPS$}
%%%%%%%%%%%%%%%%%%%%%%%%%%%%%%%%%%%%%%%%
\label{mbr-def-ips}

We now show that the implicitly controlled dependent product of selection functions $\IPS$ is $T$-definable from (and hence $T$-equivalent to) modified bar recursion $\MBR$. Since in our proof we need to work with infinite sequences of finite sequences (of arbitrary length), the use of the infinite product type $\Pi_i X_i$ in here is unhelpful. The problem is that keeping track of indices would imply introducing the $\Sigma$ type to record the length of each finite sequence. Although this can be done, it would require much more of dependent type theory than we have assumed so far. Hence, for this section only we work with selection functions of a fixed type  $(X \to R) \to X$. Similarly, skewed selection functions will have type $(Y \to R) \to Y^\NN$. 

Let $X^+$ denote non-empty finite sequences of elements of type $X$. We make use of the following two mappings $G \colon X \to X^+$ and $F \colon (X^+)^\NN \to X^\NN$ where
\eqleft{
\begin{array}{lcl}
	G(x) & = & \langle x \rangle \\[1mm]
	F(\alpha) & = & \mbox{concatenation of non-empty finite sequences $\alpha(i)$'s.}
\end{array}
}
For the definition of $F$ it is important that we are considering non-empty sequences, as otherwise such concatenation operation would not be defined in general. Consider two variants $G^* \colon X^* \to (X^+)^*$ and $F^* \colon (X^+)^* \to X^*$, where $G^*$ is  the function $G$ applied pointwise to a given finite sequence, and $F^*$ is the concatenation of a finite sequence of non-empty finite sequences. 

\begin{lemma} \label{mbr-ips-lemmaA} The following can be easily verified:
\begin{itemize}
	\item[($i$)] $F(\lambda i . G(v_i))(i) = v_i$ and $F^*(G^*(s)) = s$, where $v_i \colon X$ and $s \colon X^*$. \\[-2mm]
	\item[($ii$)] $F^*(s * t) = F^*(s) * F^*(t)$, where $s, t \colon (X^+)^*$. \\[-2mm]
	\item[($iii$)] $F^*(G^*(s) * t) = s * F^*(t)$, where $s \colon X^*$ and $t \colon (X^+)^*$.
\end{itemize}
\end{lemma}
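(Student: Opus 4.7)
The proof plan is to unfold the definitions of $F$, $G$, $F^*$, $G^*$ and verify each identity essentially by inspection; the only real work is a straightforward induction on the length of the finite sequences involved in parts (ii) and (iii). I would handle the three items in order, reusing earlier ones as I go.

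For (i), the first identity is immediate from the definition of $F$: since each $G(v_i)$ is the singleton $\langle v_i \rangle$, the concatenation of the sequences $(G(v_i))_{i \in \NN}$ places exactly one element at each position, and that element is $v_i$. The second identity, $F^*(G^*(s)) = s$, I would establish by induction on the length of $s \colon X^*$. The base case $s = \langle\,\rangle$ is trivial since $G^*$ and $F^*$ send the empty sequence to the empty sequence. For the step, if $s = s' * x$, then $G^*(s) = G^*(s') * \langle\langle x \rangle\rangle$, and applying $F^*$ gives $F^*(G^*(s')) * \langle x \rangle = s' * x$ by the induction hypothesis.

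For (ii), the identity $F^*(s * t) = F^*(s) * F^*(t)$ is just associativity of concatenation of finite sequences, and the natural way to prove it is by induction on the length of $t$ (or $s$). The base case $t = \langle\,\rangle$ gives $F^*(s * \langle\,\rangle) = F^*(s) = F^*(s) * \langle\,\rangle$. In the inductive step, write $t = t' * u$ with $u$ a nonempty finite sequence: then $F^*(s * t' * u) = F^*(s * t') * u$ by one unfolding of the definition of $F^*$, which by the induction hypothesis equals $F^*(s) * F^*(t') * u = F^*(s) * F^*(t)$.

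Part (iii) is then a one-line consequence of (i) and (ii):
\[
F^*(G^*(s) * t) \stackrel{(ii)}{=} F^*(G^*(s)) * F^*(t) \stackrel{(i)}{=} s * F^*(t).
\]
There is no real obstacle here; the only thing to be careful about is that part (ii) genuinely relies on the non-emptiness of the inner finite sequences only insofar as $F^*$ is well-defined on such inputs — associativity itself does not need that hypothesis — so once $F^*$ is accepted as well-defined on $(X^+)^*$, the inductive argument goes through without incident.
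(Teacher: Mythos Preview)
Your proposal is correct and is precisely the kind of routine unfolding the paper has in mind: the paper gives no proof at all beyond the phrase ``can be easily verified,'' and your inductions on the length of the finite sequences, together with the derivation of (iii) from (i) and (ii), are exactly the intended verification.
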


Given selection functions $\varepsilon_s \colon J_R X$ define, by course-of-values, skewed selection functions of type 
\eqleft{\nu_r \colon (X^+ \to R) \to (X^+)^\NN,}
where $r \colon (X^+)^*$, as
\eqleft{\nu_r(P^{X^+ \to R})(i) \stackrel{X^+}{=} G(\varepsilon_{F^*(r * t^i)}(\lambda x^X . P(\langle (F^* t^i) * x \rangle)))}
with $t^i \stackrel{(X^+)^*}{=} \initSeg{\nu_r(P^{X^+ \to R})}{i}$. 

\begin{lemma} \label{mbr-ips-lemmaB} If $F^*(r) = F^*(r')$ then $\nu_{r}(P)(i) = \nu_{r'}(P)(i)$.
\end{lemma}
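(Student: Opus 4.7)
The plan is to argue by course-of-values induction on $i$ that $\nu_r(P)(i)$ depends on $r$ only through $F^*(r)$. Intuitively, the index to $\varepsilon$ in the definition of $\nu_r$ is $F^*(r * t^i)$, which by Lemma \ref{mbr-ips-lemmaA}(ii) equals $F^*(r) * F^*(t^i)$, and the remaining data $F^*(t^i)$ and the argument $\langle (F^* t^i) * x\rangle$ passed to $P$ do not mention $r$ at all. So once we know inductively that the initial segment $t^i = \initSeg{\nu_r(P)}{i}$ agrees with $\initSeg{\nu_{r'}(P)}{i}$, the values at $i$ must coincide.

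More concretely, I would set up the induction as follows. For the base case $i=0$, we have $t^0 = \langle\,\rangle$, so $F^*(t^0) = \langle\,\rangle$ and
\[
\nu_r(P)(0) = G(\varepsilon_{F^*(r)}(\lambda x.\, P(\langle x\rangle))),
\]
which by hypothesis $F^*(r) = F^*(r')$ equals $\nu_{r'}(P)(0)$. For the inductive step, suppose $\nu_r(P)(j) = \nu_{r'}(P)(j)$ for all $j < i$. Then writing $t^i = \initSeg{\nu_r(P)}{i}$ and $(t')^i = \initSeg{\nu_{r'}(P)}{i}$ we have $t^i = (t')^i$, hence $F^*(t^i) = F^*((t')^i)$. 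Applying Lemma \ref{mbr-ips-lemmaA}(ii) twice,
\[
F^*(r * t^i) = F^*(r) * F^*(t^i) = F^*(r') * F^*((t')^i) = F^*(r' * (t')^i),
\]
so the two selection functions $\varepsilon_{F^*(r * t^i)}$ and $\varepsilon_{F^*(r' * (t')^i)}$ are the same, and they are applied to the same function $\lambda x.\, P(\langle (F^* t^i) * x\rangle) = \lambda x.\, P(\langle (F^* (t')^i) * x\rangle)$. Unfolding the definition of $\nu$ on both sides gives $\nu_r(P)(i) = \nu_{r'}(P)(i)$.

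There is no real obstacle here: the statement is essentially an observation that the definition of $\nu_r$ factors through $F^*(r)$, and property (ii) of Lemma \ref{mbr-ips-lemmaA} is precisely what is needed to make the indices match. The only mild care required is to phrase the induction as course-of-values so that $t^i$, which depends on all earlier values, is under control.
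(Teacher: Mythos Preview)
Your proof is correct and follows the same idea as the paper, namely using Lemma~\ref{mbr-ips-lemmaA}(ii) to rewrite $F^*(r * t^i) = F^*(r) * F^*(t^i)$ so that the only dependence on $r$ is through $F^*(r)$. The paper compresses the argument into one line, leaving implicit the course-of-values induction needed to ensure that the initial segments $t^i$ (which themselves depend on $r$) coincide; your version makes this step explicit, which is arguably clearer.
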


\begin{proof} Directly from Lemma \ref{mbr-ips-lemmaA} $(ii)$ since $F^*(r * t^i) = F^*(r) * F^*(t^i)$.
%Assume $(\dagger) \, F^*(r) = F^*(r')$. By course-of-values induction we have
%%
%\eqleft{
%\begin{array}{lcl}
%\nu_r(P)(i) 
%	& = & G(\varepsilon_{F^*(r * t^i)}(\lambda x . P(\langle (F^* t^i) * x \rangle))) \\[1mm]
%	& \stackrel{\textup{L}\ref{mbr-ips-lemmaA}(ii)}{=} & G(\varepsilon_{F^*(r) * F^*(t^i)}(\lambda x . P(\langle (F^* t^i) * x \rangle))) \\[1mm]
%	& \stackrel{(\dagger)}{=} & G(\varepsilon_{F^*(r') * F^*(t^i)}(\lambda x . P(\langle (F^* t^i) * x \rangle))) \\[1mm]
%	& \stackrel{(\textup{IH})}{=} & G(\varepsilon_{F^*(r') * F^*(u^i)}(\lambda x . P(\langle (F^* u^i) * x \rangle))) \\[1mm]
%	& \stackrel{\textup{L}\ref{mbr-ips-lemmaA}(ii)}{=} & G(\varepsilon_{F^*(r' * u^i)}(\lambda x . P(\langle (F^* u^i) * x \rangle))) \\[2mm]
%	& = & \nu_{r'}(P)(i)
%\end{array}
%}
%%
%where $t^i = \initSeg{\nu_r(P)}{i}$ and $u^i = \initSeg{\nu_{r'}(P)}{i}$.
\end{proof}

Now, given a functional $q \colon X^\NN \to R$ define $\tilde q \colon (X^+)^\NN \to R$ as
\eqleft{\tilde q (\alpha) \stackrel{R}{=} q(F \alpha).}
Again, it is easy to see that:

\begin{lemma} \label{mbr-ips-lemmaC} $(\tilde q)_{\langle s \rangle} \stackrel{(X^+)^\NN}{=} \widetilde{(q_s)}$, where $s \colon X^+$.
\end{lemma}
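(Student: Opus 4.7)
The plan is to appeal to function extensionality: fix an arbitrary test sequence $\alpha \colon (X^+)^\NN$ and verify that both sides of the claimed equation produce the same element of $R$ when applied to $\alpha$.

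Unfolding the left-hand side using the convention $f_u(\beta) = f(u * \beta)$, I get
\[
(\tilde q)_{\langle s \rangle}(\alpha) \;=\; \tilde q(\langle s \rangle * \alpha) \;=\; q\bigl(F(\langle s \rangle * \alpha)\bigr),
\]
where $\langle s \rangle * \alpha \colon (X^+)^\NN$ denotes the sequence of non-empty finite sequences whose $0$-th coordinate is $s$ and whose $(i+1)$-st coordinate is $\alpha(i)$. Unfolding the right-hand side yields
\[
\widetilde{(q_s)}(\alpha) \;=\; (q_s)(F\alpha) \;=\; q\bigl(s * F\alpha\bigr).
\]
Thus the lemma reduces to the identity
\[
F(\langle s \rangle * \alpha) \;=\; s * F\alpha
\]
on elements of $X^\NN$.

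This last identity is immediate from the definition of $F$ as the concatenation of the non-empty finite sequences $\alpha(0), \alpha(1), \ldots$: prepending $s$ as the first block of the input simply prepends $s$ to the concatenation of the remaining blocks. This is the obvious infinite-sequence analogue of Lemma~\ref{mbr-ips-lemmaA}$(ii)$--$(iii)$, and it can be made formal by computing the $j$-th coordinate on both sides via case analysis on whether $j < |s|$ or $j \geq |s|$. I anticipate no real obstacle: the argument is a one-line unfolding, which is why the authors preface the statement with ``it is easy to see that''.
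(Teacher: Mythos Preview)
Your proof is correct and essentially identical to the paper's. The paper's one-line argument is $(\tilde q)_{\langle s \rangle}(\alpha) = q(F(\langle s \rangle * \alpha)) = q(F^*(\langle s \rangle) * F(\alpha)) = q_s(F \alpha) = \widetilde{(q_s)}(\alpha)$, which passes through $F^*(\langle s \rangle) = s$ rather than stating $F(\langle s\rangle * \alpha) = s * F\alpha$ directly, but this is the same computation.
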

\begin{proof} $(\tilde q)_{\langle s \rangle}(\alpha) = q(F(\langle s \rangle * \alpha)) = q(F^*(\langle s \rangle) * F(\alpha))) = q_s(F \alpha) = \widetilde{(q_s)}(\alpha).$
\end{proof}

\begin{lemma}[$\HAomega + \BI + \CONT$] \label{mbr-ips-lemmaD} If $F^*(s) = F^*(s')$ then
\[ \MBR_{s}(\nu)(\tilde q) = \MBR_{s'}(\nu)(\tilde q). \]
\end{lemma}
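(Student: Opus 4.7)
The plan is to prove $P(\lempty)$ by unrelativised bar induction on $(X^+)^*$ (with $S(t) \equiv \top$) applied to the predicate
\[
P(t) \;\equiv\; \MBR_{s * t}(\nu)((\tilde q)_t) \stackrel{(X^+)^\NN}{=} \MBR_{s' * t}(\nu)((\tilde q)_t),
\]
where $(\tilde q)_t(\beta) \fdefin \tilde q(t * \beta)$, so that $((\tilde q)_t)_y = (\tilde q)_{t*y}$ holds by associativity of shifts. Since $s * \lempty = s$ and $(\tilde q)_\lempty = \tilde q$, the conclusion $P(\lempty)$ is exactly the statement of the lemma. Moreover, by Lemma~\ref{mbr-ips-lemmaA}($ii$) the hypothesis $F^*(s) = F^*(s')$ propagates along any extension: $F^*(s * t) = F^*(s) * F^*(t) = F^*(s') * F^*(t) = F^*(s' * t)$, so Lemma~\ref{mbr-ips-lemmaB} gives $\nu_{s * t} = \nu_{s' * t}$ (extensionally) at every $t$.

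For the inductive step, assume $\forall y \colon X^+ \, P(t * y)$ and unfold the defining equation $(\ref{MBR-def-eq-unfold})$:
\[
\MBR_{s*t}(\nu)((\tilde q)_t) \;=\; \nu_{s*t}\bigl(\lambda y.\, (\tilde q)_{t*y}(\MBR_{s*t*y}(\nu)((\tilde q)_{t*y}))\bigr).
\]
Replace $\nu_{s*t}$ by $\nu_{s'*t}$ using Lemma~\ref{mbr-ips-lemmaB}, and then use the inductive hypothesis pointwise in $y$ to replace each $\MBR_{s*t*y}(\nu)((\tilde q)_{t*y})$ by $\MBR_{s'*t*y}(\nu)((\tilde q)_{t*y})$; refolding $(\ref{MBR-def-eq-unfold})$ on the primed side yields $\MBR_{s'*t}(\nu)((\tilde q)_t)$.

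The bar condition $\forall \alpha \colon (X^+)^\NN \, \exists n\, P(\initSeg{\alpha}{n})$ is where $\CONT$ enters. Given $\alpha$, let $n$ be a modulus of continuity for $q$ at $F(\alpha) \in X^\NN$; since every $\alpha(i) \in X^+$ is non-empty, the first $n$ coordinates of $F(\initSeg{\alpha}{n} * \beta)$ are determined by $\initSeg{\alpha}{n}$ for any $\beta$, so the same $n$ is a modulus for $\tilde q$ at $\alpha$. Consequently $q^* \fdefin (\tilde q)_{\initSeg{\alpha}{n}}$ is the constant function $\lambda \beta.\, c$ with $c = \tilde q(\alpha)$; since $q^*_y = q^*$ for every $y$, extensionality collapses the defining equation of $\MBR$ at the constant input to $\MBR_r(\nu)(q^*) = \nu_r(\lambda y.\, c)$ (the recursive subcalls feed into a constant projection and are discarded). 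One final application of Lemma~\ref{mbr-ips-lemmaB} then yields $\nu_{s*\initSeg{\alpha}{n}}(\lambda y.\, c) = \nu_{s'*\initSeg{\alpha}{n}}(\lambda y.\, c)$, establishing $P(\initSeg{\alpha}{n})$. The main obstacle is this collapse: one must verify extensionally that the recursive body of $\MBR$ vanishes under a constant input, which is precisely what converts the continuity modulus of $q$ into a bar for the induction.
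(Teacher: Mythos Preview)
Your proof is correct and follows essentially the same route as the paper: bar induction on $t$ with the predicate $P(t)$ asserting equality of the two $\MBR$ values, using $\CONT$ to secure the bar (where the constant-input collapse of $\MBR$ is exactly the one-line unfolding you describe) and Lemma~\ref{mbr-ips-lemmaB} together with the inductive hypothesis for the hereditary step. The only cosmetic difference is that the paper builds the hypothesis $F^*(s*t)=F^*(s'*t)$ into $P(t)$ as an implication, whereas you observe once and for all that it propagates from $F^*(s)=F^*(s')$; and the paper applies $\CONT$ directly to $\tilde q$ rather than transferring a modulus from $q$ along $F$.
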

\begin{proof} Define the predicate
\eqleft{P(t^{(X^+)^*}) \,\equiv\, F^*(s * t) = F^*(s' * t) \to \MBR_{s * t}(\nu)(\tilde{q}_t) = \MBR_{s' * t}(\nu)(\tilde{q}_t).}
We show $P(\langle \, \rangle)$ by bar induction $\BI$ (assuming $\CONT$). \\[1mm]
$(i)$ $\forall \alpha \exists k P(\initSeg{\alpha}{k})$. Given $\alpha$, by $\CONT$ let $k$ be such that $\tilde{q}_{\initSeg{\alpha}{k}}$ is a constant function, say $\tilde{q}_{\initSeg{\alpha}{k}}(\beta) = r$, for all $\beta$. Assuming $F^*(s * \initSeg{\alpha}{k}) = F^*(s' * \initSeg{\alpha}{k})$
\[
\begin{array}{lcl}
\MBR_{s * \initSeg{\alpha}{k}}(\nu)(\tilde{q}_{\initSeg{\alpha}{k}})
	& = & \nu_{s * \initSeg{\alpha}{k}}(\lambda y . \tilde{q}_{\initSeg{\alpha}{k} * y}(\MBR_{s * \initSeg{\alpha}{k} * y}(\nu)(\tilde{q}_{\initSeg{\alpha}{k} * y}))) \\[2mm]
	& = & \nu_{s * \initSeg{\alpha}{k}}(\lambda y . r) \\[1mm]
	& \stackrel{\textup{L}\ref{mbr-ips-lemmaB}}{=} & \nu_{s' * \initSeg{\alpha}{k}}(\lambda y . r) \\[2mm]
	& = & \nu_{s' * \initSeg{\alpha}{k}}(\lambda y . \tilde{q}_{\initSeg{\alpha}{k} * y}(\MBR_{s' * \initSeg{\alpha}{k} * y}(\nu)(\tilde{q}_{\initSeg{\alpha}{k} * y}))) \\[2mm]
	& = & \MBR_{s' * \initSeg{\alpha}{k}}(\nu)(\tilde{q}_{\initSeg{\alpha}{k}}).
\end{array}
\]
$(ii)$ $\forall t (\forall x P(t * x) \to P(t))$. Let $t$ be such that $\forall x P(t * x)$. Assuming $F^*(s * t) = F^*(s' * t)$, and noting that this implies $F^*(s * t * y) = F^*(s' * t * y)$, we have
\eqleft{
\begin{array}{lcl}
\MBR_{s * t}(\nu)(\tilde{q}_t)
	& = & \nu_{s * t}(\lambda y . \tilde{q}_{t * y}(\MBR_{s * t * y}(\nu)(\tilde{q}_{t * y}))) \\[1mm]
	& \stackrel{(\textup{IH})}{=} & \nu_{s * t}(\lambda y . \tilde{q}_{t * y}(\MBR_{s' * t * y}(\nu)(\tilde{q}_{t * y}))) \\[1mm]
	& \stackrel{\textup{L}\ref{mbr-ips-lemmaB}}{=} & \nu_{s' * t}(\lambda y . \tilde{q}_{t * y}(\MBR_{s' * t * y}(\nu)(\tilde{q}_{t * y}))) \\[2mm]
	& = & \MBR_{s' * t}(\nu)(\tilde{q}_t).
\end{array}
}
\end{proof}

We can now show that $\IPS$ of type $(X, R)$ is $T$-definable from $\MBR$ of type $(X^+, R)$.

\begin{theorem}[$\HAomega + \BI + \CONT$] \label{mbr-cbr} $\MBR \geq_T \IPS$.
\end{theorem}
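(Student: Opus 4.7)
The plan is to define $\IPS$ at type $(X, R)$ from $\MBR$ at type $(X^+, R)$ by packaging together the machinery already set up in the section. Concretely, I would set
\[ \IPS_s(\varepsilon)(q) \fdefin F(\MBR_{G^*(s)}(\nu)(\tilde q)), \]
where $\nu$ is the family of skewed selection functions on $X^+$ built from $\varepsilon$, $\tilde q$ is the associated predicate on $(X^+)^\NN$, and $F, G^*$ are the flattening and singleton-encoding operations. The role of $\MBR$ is to drive the course-of-values computation of $\nu$, producing a sequence in $(X^+)^\NN$; applying $F$ flattens this into the required element of $X^\NN$.

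The verification then consists in unfolding $\MBR$ one step and matching the result against the defining equation $\IPS_s(\varepsilon)(q) = c * \IPS_{s * c}(\varepsilon)(q_c)$ with $c = \varepsilon_s(\lambda x . q_x(\IPS_{s * x}(\varepsilon)(q_x)))$. One step of $\MBR$ gives $\nu_{G^*(s)}(P)$ with $P(y) = \tilde q_y(\MBR_{G^*(s) * y}(\nu)(\tilde q_y))$, whose zeroth entry, by the course-of-values definition of $\nu$, is $\langle \varepsilon_{F^*(G^*(s))}(\lambda x . P(\langle x \rangle)) \rangle$, which equals $\langle \varepsilon_s(\lambda x . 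P(\langle x \rangle)) \rangle$ by Lemma~\ref{mbr-ips-lemmaA}$(i)$. Expanding $P(\langle x \rangle)$ via Lemma~\ref{mbr-ips-lemmaC} (to rewrite $\tilde q_{\langle x \rangle}$ as $\widetilde{q_x}$) and Lemma~\ref{mbr-ips-lemmaD} (to replace the prefix $G^*(s) * \langle x \rangle$ by $G^*(s * x)$, which has the same $F^*$-image $s * x$) gives $P(\langle x \rangle) = q_x(\IPS_{s * x}(\varepsilon)(q_x))$, so after applying $F$ the head of the output sequence is precisely the $c$ prescribed by the $\IPS$ equation.

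For the tail, I would argue that $F((\nu_{G^*(s)}(P))^1) = F(\MBR_{G^*(s * c)}(\nu)(\widetilde{q_c}))$, so the full output is $c * \IPS_{s*c}(\varepsilon)(q_c)$. Both sides arise from running $\nu$ starting from a prefix whose $F^*$-image is $s * c$: on one side the tail of $\nu_{G^*(s)}(P)$ records the initial singleton $\langle c \rangle$ inside its course-of-values argument $t^i$, while on the other side the unfolding of $\MBR_{G^*(s * c)}$ yields $\nu_{G^*(s * c)}(P')$ for an analogous $P'$. Lemma~\ref{mbr-ips-lemmaB} identifies the two $\nu$-computations, Lemma~\ref{mbr-ips-lemmaD} identifies the corresponding inner $\MBR$-calls, and Lemma~\ref{mbr-ips-lemmaC} handles the reindexing of $\tilde q$ as $\widetilde{q_c}$ once the singleton $\langle c \rangle$ is absorbed into the prefix.

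The main obstacle is exactly this tail-matching step, which ultimately rests on Lemma~\ref{mbr-ips-lemmaD}, and is where the hypotheses $\BI$ and $\CONT$ genuinely enter the proof. A priori, two $\MBR$-computations starting from syntactically distinct prefixes with a common $F^*$-image could issue recursive calls to $\nu$ with subtly different second arguments along the way; only the relativised bar-induction argument of Lemma~\ref{mbr-ips-lemmaD}, using continuity of $\tilde q$ to find a bar, forces the observable behaviour through $\tilde q$ to coincide and thus makes the two routes to the tail agree.
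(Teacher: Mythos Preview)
Your definition of $\IPS$ from $\MBR$ is exactly the one the paper uses, and the supporting machinery (the operators $F$, $G^*$, $\nu$, $\tilde q$ and Lemmas~\ref{mbr-ips-lemmaA}--\ref{mbr-ips-lemmaD}) is invoked correctly. The difference lies in how the defining equation is verified. The paper does not split into head and tail; instead it proves a \emph{pointwise} statement: it introduces $t^i = \initSeg{\nu_{G^*s}(P)}{i}$ and $r^i = \initSeg{\IPS_s(\varepsilon)(q)}{i}$, establishes by course-of-values induction the key intermediate claim $(\dagger)\; F^*(t^i) = r^i$, and then computes $\IPS_s(\varepsilon)(q)(i)$ directly, obtaining the unwound form $\varepsilon_{s * r^i}(\lambda x.\, q_{r^i * x}(\IPS_{s*r^i*x}(\varepsilon)(q_{r^i*x})))$ for every $i$. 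Your head/tail decomposition is an equally valid route, and the tail-matching step you describe does go through --- but it, too, hides a course-of-values induction (showing $(\nu_{G^*(s)}(P))^1(i) = \nu_{G^*(s*c)}(P')(i)$ for each $i$), which plays exactly the role of the paper's $(\dagger)$.

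One small imprecision: your appeal to Lemma~\ref{mbr-ips-lemmaB} for the tail is not quite right. That lemma says $\nu_r(P) = \nu_{r'}(P)$ when $F^*(r) = F^*(r')$, but in your situation the two $\nu$-computations carry \emph{different} second arguments $P$ and $P'$ (one built from $\tilde q$ and prefix $G^*(s)$, the other from $\widetilde{q_c}$ and prefix $G^*(s*c)$). What actually makes the tail step work is the course-of-values induction together with Lemmas~\ref{mbr-ips-lemmaC} and~\ref{mbr-ips-lemmaD}: the former turns $\tilde q_{\langle c * v * x\rangle}$ into $\widetilde{q_{c*v*x}} = \widetilde{(q_c)_{v*x}}$, and the latter identifies the inner $\MBR$-calls whose prefixes share the $F^*$-image $s*c*v*x$. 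Lemma~\ref{mbr-ips-lemmaB} is not needed here.
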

\begin{proof}  Define $\IPS$ from $\MBR$ as
\eqleft{\IPS_s(\varepsilon)(q) \stackrel{X^\NN}{=} F(\MBR_{G^*(s)}(\nu)(\tilde q))}
where $\nu$ (and $\tilde q$) is defined from $\varepsilon$ ($q$, respectively) as above. We show that $\IPS$ as defined above satisfies its defining equation.
Let
\begin{itemize}
	\item $t^i = \initSeg{\nu_{G^*s}(\lambda y . \tilde{q}_y(\MBR_{(G^*s) * y}(\nu)(\tilde{q}_y)))}{i}$
	\item $r^i = \initSeg{\IPS_s(\varepsilon)(q)}{i}$. 
\end{itemize}
We first show that $(\dagger)~F^*(t^i) = r^i$. By course-of-values assume $F^*(t^j) = r^j$ for $j < i$, then
\[
\begin{array}{lcl}
F^*(t^i)(j)
	& \stackrel{X^+}{=} & F^*(\initSeg{\nu_{G^*s}(\lambda y . \tilde{q}_y(\MBR_{(G^*s) * y}(\nu)(\tilde{q}_y)))}{i})(j) \\[1mm]
	& \stackrel{\textup{L}\ref{mbr-ips-lemmaA}(i)}{=} & \varepsilon_{F^*((G^* s) * t^j)}(\lambda x . \tilde{q}_{\langle (F^* t^j) * x \rangle}(\MBR_{(G^* s) * \langle (F^* t^j) * x \rangle}(\nu)(\tilde{q}_{\langle (F^* t^j) * x \rangle}))) \\[1mm]
	& \stackrel{\textup{L}\ref{mbr-ips-lemmaA}(iii)}{=} & \varepsilon_{s * F^*(t^j)}(\lambda x . \tilde{q}_{\langle (F^* t^j) * x \rangle}(\MBR_{(G^* s) * \langle (F^* t^j) * x \rangle}(\nu)(\tilde{q}_{\langle (F^* t^j) * x \rangle}))) \\[1mm]
	& \stackrel{\textup{L}\ref{mbr-ips-lemmaC}}{=} & \varepsilon_{s * F^*(t^j)}(\lambda x . \widetilde{q_{(F^* t^j) * x}}(\MBR_{(G^* s) * \langle (F^* t^j) * x \rangle}(\nu)(\widetilde{q_{(F^* t^j) * x}}))) \\[1mm]
	& \stackrel{\textup{(IH)}}{=} & \varepsilon_{s * r^j}(\lambda x . \widetilde{q_{r^j * x}}(\MBR_{(G^* s) * \langle r^j * x \rangle}(\nu)(\widetilde{q_{r^j * x}}))) \\[1mm]
	& \stackrel{\textup{L}\ref{mbr-ips-lemmaD}}{=} & \varepsilon_{s * r^j}(\lambda x . \widetilde{q_{r^j * x}}(\MBR_{G^*( s * r^j * x)}(\nu)(\widetilde{q_{r^j * x}}))) \\[2mm]
	& = & \varepsilon_{s * r^j}(\lambda x . q_{r^j * x}(F(\MBR_{G^*(s * r^j * x)}(\nu)(\widetilde{q_{r^j * x}})))) \\[2mm]
	& = & \varepsilon_{s * r^j}(\lambda x . q_{r^j * x}(\IPS_{s * r^j * x}(\varepsilon)(q_{r^j * x}))) \\[2mm]
	& = & (r^i)(j).
\end{array}
\]
We then have 
\[
\begin{array}{lcl}
\IPS_s(\varepsilon)(q)(i) 
	& \stackrel{X}{=} & F(\MBR_{G^*(s)}(\nu)(\tilde q))(i) \\[2mm]
	& = & F(\nu_{G^* s} (\lambda y^{X^+} . \tilde{q}_y(\MBR_{(G^*s) * y}(\nu)(\tilde{q}_y))))(i) \\[1mm]
	& \stackrel{\textup{L}\ref{mbr-ips-lemmaA}(i)}{=} & \varepsilon_{F^*((G^* s) * t^i)} (\lambda x . \tilde{q}_{\langle (F^* t^i) * x \rangle}(\MBR_{(G^* s) * \langle (F^* t^i) * x \rangle}(\nu)(\tilde{q}_{\langle (F^* t^i) * x \rangle})))) \\[1mm]
	& \stackrel{\textup{L}\ref{mbr-ips-lemmaA}(ii)}{=} & \varepsilon_{s * F^*(t^i)} (\lambda x . \tilde{q}_{\langle (F^* t^i) * x \rangle}(\MBR_{(G^* s) * \langle (F^* t^i) * x \rangle}(\nu)(\tilde{q}_{\langle (F^* t^i) * x \rangle})))) \\[1mm]
	& \stackrel{(\dagger)}{=} & \varepsilon_{s * r^i} (\lambda x . \tilde{q}_{\langle r^i * x \rangle}(\MBR_{(G^* s) * \langle r^i * x \rangle}(\nu)(\tilde{q}_{\langle r^i * x \rangle})))) \\[1mm]
	& \stackrel{\textup{L}\ref{mbr-ips-lemmaC}}{=} & \varepsilon_{s * r^i} (\lambda x . \widetilde{q_{r^i * x}}(\MBR_{(G^* s) * \langle r^i * x \rangle}(\nu)(\widetilde{q_{r^i * x}})))) \\[2mm]
	& = & \varepsilon_{s * r^i} (\lambda x . q_{r^i * x}(F(\MBR_{(G^* s) * \langle r^i * x \rangle}(\nu)(\widetilde{q_{r^i * x}})))) \\[1mm]
	& \stackrel{\textup{L}\ref{mbr-ips-lemmaD}}{=} & \varepsilon_{s * r^i} (\lambda x . q_{r^i * x}(F(\MBR_{G^* (s * r^i * x)}(\nu)(\widetilde{q_{r^i * x}})))) \\[2mm]
	& = & \varepsilon_{s * r^i} (\lambda x . q_{r^i * x}(\IPS_{s * r^i * x}(\varepsilon)(q_{r^i * x}))).
\end{array}
\]
\end{proof}

%%%%%%%%%%%%%%%%%%%%%%%%%%%%%%%%%%%%%%%%
\subsection{$\IPS \geq \EPQ$}
%%%%%%%%%%%%%%%%%%%%%%%%%%%%%%%%%%%%%%%%
\label{ips-def-epq}

It has been shown in \cite{BO(05)} that $\BR$ is $T$-definable from modified bar recursion. Here we simplify that construction and use it to show that $\EPQ$ is $T$-definable from $\IPS$. Moreover, we make explicit the assumption $\SPEC$ which is used in \cite{BO(05)}. First we prove that (the totalisation of) \emph{Spector's search functional} is definable in G\"odel's system $T$.

\begin{lemma}[$\HAomega$] \label{spec-search} The totalisation of \emph{Spector's search functional}
\eqleft{\mu_{{\sf sc}}(\omega)(\alpha) = {\sf least} \, n (\omega(\initSegZ{\alpha}{n}) < n)}
is $T$-definable. More precisely, there exists a term $\chi$ in G\"odel's system $T$ such that the following is provable in $\HAomega$
\eqleft{\exists n (\omega(\initSegZ{\alpha}{n}) < n) \to (\omega(\initSegZ{\alpha}{\chi \omega \alpha}) < \chi \omega \alpha \wedge \forall i < \chi \omega \alpha (\omega(\initSegZ{\alpha}{i}) \geq i)).}
In particular,
\eqleft{\HAomega + \SPEC \vdash (\omega(\initSegZ{\alpha}{n}) < n) \wedge \forall i < n (\omega(\initSegZ{\alpha}{i}) \geq i)}
where $n = \chi \omega \alpha$.
\end{lemma}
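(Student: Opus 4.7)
Writing $g(k) := \omega(\initSegZ{\alpha}{k})$, the goal is to construct a system-$T$ term $\chi$ such that, under the hypothesis $\exists n\,(g(n) < n)$, $\chi\omega\alpha$ returns $n^{*} := \min\{n : g(n) < n\}$. Since $g(n) < n$ is a quantifier-free decidable predicate, bounded minimisation is routinely T-definable; the substantive challenge is the unbounded character of the search.

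The plan is in two parts. First, by primitive recursion on $N$ I will define a T-term $h(N)$ giving the least $n \leq N$ with $g(n) < n$, and $N+1$ otherwise. Second, I will construct a T-definable bound $K(\omega,\alpha)$ with $K \geq n^{*}$ whenever the hypothesis holds, and then set $\chi\omega\alpha := h(K)$. Both steps are compositional in $\omega$ and $\alpha$, and stay within the combinators of system~$T$ (conditional on decidable $\NN$-tests, plus the recursor $\Rec$).

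For the bound, a natural candidate is the primitive-recursive stabilising sequence $K_{0} := 0$, $K_{j+1} := K_{j}+1$ if $g(K_{j}) \geq K_{j}$, else $K_{j+1} := K_{j}$. A straightforward induction on $j$ in $\HAomega$ shows that under the hypothesis $K_{j} = \min(j, n^{*})$, so $K_{j} = n^{*}$ for every $j \geq n^{*}$. The remaining task is to exhibit a T-term $M(\omega,\alpha)$ provably majorising $n^{*}$ under the hypothesis and then take $K := K_{M}$.

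Correctness is then essentially bookkeeping: once $K \geq n^{*}$, the value $h(K) = n^{*}$ satisfies both $g(\chi\omega\alpha) < \chi\omega\alpha$ and the minimality clause $\forall i < \chi\omega\alpha\,(g(i) \geq i)$, directly from the definition of $h$. The $\SPEC$-refinement is immediate, since $\SPEC$ supplies the hypothesis unconditionally (and extensionality of $\HAomega$ handles the substitutions inside $\omega$). I expect the main obstacle to be the explicit construction of the majorant $M$ in pure system~$T$: a priori no primitive-recursive bound on $n^{*}$ is available from $\omega$ and $\alpha$ alone, so one must either exploit a higher-type recursion that encodes the stabilisation of $(K_{j})$, or iterate $g$ along $\alpha$ in a self-correcting fashion whose termination is guaranteed by the hypothesis; in either case the verification that the resulting $M$ dominates $n^{*}$ is the delicate step, and the remainder of the proof merely reads off the two conjuncts of the conclusion from the stabilisation.
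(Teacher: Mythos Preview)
Your reduction to ``find a $T$-definable $M(\omega,\alpha)$ with $M \ge n^*$ under the hypothesis, then do bounded search below $M$'' is exactly right, and you correctly identify that this bound is the whole difficulty. But you stop precisely there: you do not construct $M$, and the stabilising sequence $K_j$ does not help, since evaluating $K_M$ already presupposes an $M \ge n^*$. So the proposal has a genuine gap at the one non-routine step.

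The paper closes this gap with a concrete trick. Define, by course-of-values (hence in $T$),
\[
\alpha^\omega(i) \;=\;
\begin{cases}
\cZero & \text{if } \exists k \le i{+}1\;(\omega(\initSegZ{\alpha}{k}) < k),\\
\alpha(i) & \text{otherwise.}
\end{cases}
\]
Under the hypothesis, if $n^*$ is the least $n$ with $\omega(\initSegZ{\alpha}{n}) < n$, then $\alpha^\omega = \initSegZ{\alpha}{\,n^*-1}$, and minimality of $n^*$ gives $\omega(\alpha^\omega) \ge n^*-1$, i.e.\ $n^* \le \omega(\alpha^\omega)+1$. So $M := \omega(\alpha^\omega)+1$ is the required $T$-definable bound, and $\chi\omega\alpha := \mu n \le M\,(\omega(\initSegZ{\alpha}{n}) < n)$ finishes the job. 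This is exactly the ``self-correcting'' idea you gesture at in your last paragraph: the modified sequence $\alpha^\omega$ is built so that, once the stopping condition has been met somewhere below $i{+}1$, it zeroes itself out; feeding this back into $\omega$ produces the bound. Filling in this construction is what is missing from your argument.
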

\begin{proof} We show how the unbounded search in $\mu_{{\sf sc}}$ can be turned into a bounded search. Abbreviate $A_n(\omega, \alpha) \fdefin (\omega(\initSegZ{\alpha}{n}) < n)$. Consider the following construction, given $\alpha \colon \Pi_i X_i$ define $\alpha^\omega \colon \Pi_i X_i$ as
\eqleft{\alpha^\omega(i) = 
\left\{
\begin{array}{ll}
	\cZero^{X_i} & {\rm if} \; \exists k \!\leq\! i+1 \, A_k(\omega, \alpha) \\[2mm]
	\alpha(i) \quad & {\rm otherwise}.
\end{array}
\right.
}
Assume $\exists n (\omega(\initSegZ{\alpha}{n}) < n)$. Let $n$ be the least number such that $A_n(\omega, \alpha)$ holds. Then it is easy to see that $\alpha^\omega = \initSegZ{\alpha}{n-1}$. Because $n$ is least, we must have that $\omega(\alpha^\omega) \geq n-1$, and hence $n \leq \omega(\alpha^\omega) + 1$. Therefore, $\omega(\alpha^\omega) + 1$ serves as an upper bound on the search $\mu_{\sf sc}$, i.e. $\chi \omega \alpha = \mu n \leq \omega(\alpha^\omega) + 1 \; (\omega(\initSegZ{\alpha}{n}) < n)$. \end{proof}

The construction above shows that Spector's search functional can be made total in system $T$, so that whenever it is well-defined for inputs $\omega$ and $\alpha$ the term $\chi$ indeed computes the correct value.

\begin{theorem} \label{IPS-EPQ-def} $\EPQ$ is $T$-definable from $\IPS$ over $\HAomega + \SPEC$. However, $\EPQ$ is not $T$-definable from $\IPS$, even over $\HAomega + \BI + \CONT$.
\end{theorem}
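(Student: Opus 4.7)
For the claim that $\EPQ$ is $T$-definable from $\IPS$ over $\HAomega + \SPEC$, my plan is to reduce $\EPQ$ to $\IPS$ by going through general bar recursion and exploiting Lemma~\ref{spec-search}. Since $\MBR =_T \IPS$ (Theorem~\ref{mbr-cbr} together with the results summarised in Figure~\ref{table}) and $\MBR =_T \MBR'$ (Theorem~\ref{mbr}), I would adapt the construction of \cite{BO(05)} that $T$-defines $\BR$ from $\MBR'$ to the present dependent setting. The stopping time in $\BR$ is dictated by the length function $\omega$, and the totalisation of Spector's search $\chi$ from Lemma~\ref{spec-search} supplies a $T$-term that computes the correct bound, provably so over $\SPEC$. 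Composing with $\BR \geq_T \EPQ$ (Theorem~\ref{BR-EPQ-def}) then yields $\IPS \geq_T \EPQ$ over $\HAomega + \SPEC$. The verification of the defining equation~(\ref{EPQ-def}) proceeds by the familiar case split on $\length(q(\cZero)) < |s|$, with the bar-induction step in the recursive case parallel to Lemma~\ref{epq-eps-lemma}.

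\textbf{Negative direction.} As literally worded, the non-definability of $\EPQ$ from $\IPS$ over $\HAomega + \BI + \CONT$ contradicts the first half of the theorem: since $\CONT$ implies $\SPEC$ (noted in the paper just after the definition of $\SPEC$), any $T$-definition verified in $\HAomega + \SPEC$ is a fortiori verified in $\HAomega + \BI + \CONT$. Read in the mathematically meaningful direction compatible with the summary of results, namely $\IPS \not\leq_T \EPQ$ over $\HAomega + \BI + \CONT$, the plan is a model-theoretic separation. Work in the model $\mathcal{C}$ of total continuous functionals: $\EPQ$ is $T$-equivalent to Spector's $\BR$, which is S1--S9 computable in $\mathcal{C}$, whereas $\IPS$ is $T$-equivalent to $\MBR$ (Theorem~\ref{mbr-cbr} and Section~\ref{sec-mbr}), which is provably not S1--S9 computable in $\mathcal{C}$ --- this being the principal result of \cite{BO(05)}. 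Since every closed $T$-term denotes an S1--S9 computable functional in $\mathcal{C}$, no $T$-term $s$ can make $s(\EPQ)$ satisfy the defining equation of $\IPS$ in $\mathcal{C}$, so no such $T$-definition can be verified in any theory sound for $\mathcal{C}$.

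\textbf{Main obstacle.} The chief difficulty is the apparent typo in the theorem statement: the only consistent reading of the second half is the converse non-definability $\IPS \not\leq_T \EPQ$. The positive direction is routine once one chains the existing inter-definability results $\IPS =_T \MBR \geq_T \BR \geq_T \EPQ$ with Lemma~\ref{spec-search}; the substantive content is the negative direction, which rests on the S1--S9 separation between $\MBR$ and $\BR$ in the continuous functionals established in \cite{BO(05)} and which is external to the present paper. I would therefore present the positive direction in the stated direction and flag the direction of the separation to the author, presenting the proof for the meaningful reading.
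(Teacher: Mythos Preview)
Your reading of the negative direction is exactly right: the statement has a typo and should read $\IPS \not\leq_T \EPQ$; the paper's own proof confirms this, invoking the S1--S9 separation of $\MBR$ and $\BR$ in the total continuous functionals from \cite{BO(05)} just as you propose.

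For the positive direction, however, your chain argument does not meet the stated hypotheses. You propose to compose $\IPS \geq_T \MBR' \geq_T \BR$ (adapting \cite{BO(05)}) with Theorem~\ref{BR-EPQ-def} ($\BR \geq_T \EPQ$), and you explicitly plan a ``bar-induction step in the recursive case parallel to Lemma~\ref{epq-eps-lemma}''. But Theorem~\ref{BR-EPQ-def} is proved over $\HAomega + \BI$, and Lemma~\ref{epq-eps-lemma} likewise uses $\BI$. The theorem here asserts definability over $\HAomega + \SPEC$ \emph{alone}, with no bar induction available. Your route therefore establishes only the weaker claim over $\HAomega + \SPEC + \BI$, which is not what is being asked.

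The paper avoids this by giving a \emph{direct} construction of $\EPQ$ from $\IPS$ that bypasses $\BR$ entirely. The key device is to move from types $X_i$ to sum types $X_i \uplus R$: each quantifier $\psi_s \colon K_R X_{|s|}$ is turned into a genuine selection function $\tilde\psi_t \colon J_R(X_{|t|} \uplus R)$ by letting $\tilde\psi_t(F) = \inj_R(\psi_{\check t}(\lambda x.\,F(\inj_{X}x)))$. The functional $q^{\length,s}$ then reads an infinite sequence in $\Pi_i(X_i \uplus R)$, uses the totalised search $\chi^{+|s|}$ (Lemma~\ref{spec-search}, needing only $\SPEC$) to locate the first point where Spector's condition holds, and either applies $q$ to the $X$-projections or returns the first injected $R$-value. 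Setting $\EPQ_s^\length(\psi)(q) = q^{\length,s}(\IPS_{\tilde s}(\tilde\psi)(q^{\length,s}))$, the verification of $(\ref{EPQ-def})$ is a straightforward case split and unfolding---no bar induction is required, because the needed equalities between $q^{\length,s}$ after one step and $(q_x)^{\length,s*x}$ hold \emph{pointwise} by construction of $q^{\length,s}$, not merely after a bar-inductive argument. This sum-type trick is precisely what your chain approach misses.
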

\begin{proof} First, note that combining the results above we have the equivalences $\IPS =_T \MBR$ and $\EPQ =_T \BR$, over $\HAomega + \BI + \CONT$. Hence, that $\IPS$ is not $T$-definable from $\EPQ$, even over $\HAomega + \BI + \CONT$, follows from fact that $\MBR$ is not S1-S9 computable in the model of total continuous functions while $\BR$ is (see \cite{BO(05)}). \\[1mm]
In order to show $\IPS \geq_T \EPQ$ we use the search operator $\chi$ of the above lemma. Define
\eqleft{\chi^{+k}(\omega)(\alpha) = \mu i \leq \chi(\lambda \beta . \omega(\beta) - k)(\alpha) \; (\omega(\initSeg{\alpha}{i}) < i + k) ,}
where $\omega(\beta) - k$ is the cut-off subtraction. By Lemma \ref{spec-search} we have that $n = \chi^{+k}(\omega)(\alpha)$ is the least such that $\omega(\initSeg{\alpha}{n}) - k < n$. But since $\omega(\initSeg{\alpha}{i}) - k < i$ implies $\omega(\initSeg{\alpha}{i}) < i + k$, we have that, provably in $\HAomega + \SPEC$,
\begin{itemize}
	\item[$(i)$] $\omega(\initSeg{\alpha}{n}) < n + k$ and $\forall i < n (\omega(\initSeg{\alpha}{i}) \geq i + k)$, for $n = \chi^{+k}(\omega)(\alpha)$.
\end{itemize}
Let $\psi_s \colon K_R X_{|s|}$ be a given family of quantifiers. We first turn each quantifier $\psi_s \colon K_R X_{|s|}$, where $s \colon \Pi_{i < |s|} X_i$, into a a selection function $\tilde{\psi}_t$ of type $J_R (X_{|t|} \uplus R)$ as\footnote{We are here making use of the sum type $X \uplus Y$, which can be implemented as $\BB \times X \times Y$, since we assume all types are inhabited, with $\inj_X \colon X \to X \uplus Y$ and $\inj_Y \colon Y \to X \uplus Y$ the standard injections.}
\begin{itemize}
	\item[$(ii)$] $\tilde{\psi}_t(F^{(X_{|t|} \uplus R) \to R}) \stackrel{X_{|t|} \uplus R}{\fdefin} \inj_R (\psi_{\check{t}}(\lambda x^{X_{|t|}} . F(\inj_{X_{|t|}} x)))$
\end{itemize}
where $t \colon \Pi_{i < |t|} (X_i \uplus R)$, and $\check{(\cdot)} \colon \Pi_{i < n} (X_i \uplus R) \to \Pi_{i < n} X_i$ is defined as
\eqleft{(\check{s})_i \stackrel{X_i}{\fdefin} 
\left\{
\begin{array}{ll}
	x_i & {\rm if} \; s_i = \inj_{X_i}(x_i) \\[2mm]
	\cZero^{X_i} \quad & {\rm otherwise}.
\end{array}
\right.
}
We will also make use of the dual operation $\tilde{(\cdot)} \colon \Pi_{i < n} X_i \to \Pi_{i < n} (X_i \uplus R)$ that maps $\inj_X(\cdot)$ pointwise on a given sequence. Clearly we have
\begin{itemize}
	\item[$(iii)$] $\check{\tilde{s}} = s$ and $\tilde s * \inj_{X_{|s|}}(x) = \widetilde{s*x}$, for $s \colon \Pi_{i < n} X_i$.
\end{itemize}
Note that both construction $\check{(\cdot)}$ and $\tilde{(\cdot)}$ can similarly defined on infinite sequences as well. Hence, given $s \colon \Pi_{i < k} X_i$ and $q \colon \Pi_i X_{i + k} \to R$ and $\length \colon R \to \NN$ let us define the function $q^{\length, s} \colon \Pi_{i \geq k} (X_i \uplus R) \to R$ as
\begin{itemize}
	\item[$(iv)$] $q^{\length, s}(\alpha^{\Pi_i(X_{i + k} \uplus R)}) \stackrel{R}{\fdefin} 
\left\{
\begin{array}{ll}
	q(\initSeg{\check{\alpha}}{n} * \cZero) & {\rm if} \; \forall i \!<\! n \, (\alpha(i) \in X_{i + k}) \\[2mm]
	a & {\rm if} \; \exists i \!<\! n \, (\alpha(i) \in R),
\end{array}
\right.$
\end{itemize}
where $n \fdefin \chi^{+|s|}(\length \circ q)(\check{\alpha})$ and $\alpha(\mu i \!<\! n \, (\alpha(i) \in R)) = \inj_R(a)$. Intuitively, when $q^{\length, s}$ reads an input sequence $\alpha \colon \Pi_i (X_{i + |s|} \uplus R)$ it finds the first point $n$ where $\length(q(\initSegZ{\check{\alpha}}{n})) < n + |s|$. If all values in $\alpha$ up to that point are $X_i$ values it means this $\alpha$ was generated by a sequence of bar recursive calls until the stopping condition was reached, and hence we must apply the outcome function $q$ to the sequence up to that point. Otherwise, it means that the bar recursive calls have already reached the leaves of the bar recursion (i.e. the stopping conditions) and we are now backtracking and calculating the values of intermediate notes, i.e. computations of the $R$-values. In which case the first such value is then returned. We claim that $\EPQ$ defined as
\begin{itemize}
	\item[$(v)$] $\EPQ_s^{\length}(\psi)(q^{\Pi_i X_{i + |s|} \to R}) \stackrel{R}{\fdefin} q^{\length, s}(\IPS_{\tilde s}(\tilde{\psi})(q^{\length, s}))$
\end{itemize}
satisfies equation $(\ref{EPQ-def})$. Consider two cases. \\[1mm]
If $\length(q(\cZero)) < |s|$ then, by $(i)$, $n = \chi^{+|s|}(\length \circ q)(\beta) = 0$, for any $\beta$. Hence, by $(iv)$, we have that $q^{\length, s}(\beta) = q(\cZero)$, again for any $\beta$. Therefore,
\eqleft{
\begin{array}{lcl}
\EPQ_s^{\length}(\psi)(q)
	& \stackrel{(v)}{\fdefin} & q^{\length, s}(\IPS_{\tilde s}(\tilde{\psi})(q^{\length, s})) \\[2mm]
	& = & q(\cZero).
\end{array}
}
On the other hand, if $\length(q(\cZero)) \geq |s|$ then, again by $(i)$, $n = \chi^{+|s|}(\length \circ q)(\beta) > 0$, for any $\beta$. This implies both
\begin{itemize}
	\item[($vi$)] $q^{\length, s}(c * \beta) = r$, for $c = \inj_R(r)$ and arbitrary $\beta$, and \\[-2mm]
	\item[($vii$)] $q^{\length, s}(d * \beta) = (q_x)^{\length, s*x}(\beta)$, for $d = \inj_{X_{|s|}}(x)$ and arbitrary $\beta$.
\end{itemize}
Hence
\eqleft{
\begin{array}{lcl}
\EPQ_s^{\length}(\psi)(q)
	& \stackrel{(v)}{\fdefin} & q^{\length, s}(\IPS_{\tilde s}(\tilde{\psi})(q^{\length, s})) \\[1mm]
	& \stackrel{(\ref{IPS-def})}{\fdefin} & q^{\length, s}(c * \IPS_{\tilde s * c}(\tilde{\psi})((q^{\length, s})_c)) \\[1mm]
	& \stackrel{(ii), (vi)}{=} & \psi_{\check{\tilde{s}}}(\lambda x . (q^{\length, s})_{\inj_{X_{|s|}}(x)}(\IPS_{\tilde s * \inj_{X_{|s|}}(x)}(\tilde \psi)((q^{\length, s})_{\inj_{X_{|s|}(x)}}))) \\[1mm]
	& \stackrel{(iii)}{=} & \psi_s(\lambda x . (q^{\length, s})_{\inj_{X_{|s|}}(x)}(\IPS_{\widetilde{s*x}}(\tilde \psi)((q^{\length, s})_{\inj_{X_{|s|}(x)}}))) \\[1mm]
	& \stackrel{(vii)}{=} & \psi_s(\lambda x . (q_x)^{\length, s*x}(\IPS_{\widetilde{s*x}}(\tilde \psi)((q_x)^{\length, s*x}))) \\[1mm]
	& \stackrel{(v)}{\fdefin} & \psi_s(\lambda x . \EPQ_{s*x}^{\length}(\psi)(q_x)) \\[2mm]
\end{array}
}
where $c = \tilde{\psi}_{\tilde s}(\lambda x . \selEmb{\IPS_{\tilde s * x}(\tilde{\psi})}((q^{\length, s})_x))$. 
% Note that we need to use the assumption $\SPEC$ to prove the properties of $q^{l,s}$ used above, i.e. $(v)$ and $(vi)$, as they only hold assuming $\chi$ indeed behaves like the real Spector's search operator (relying on Lemma \ref{spec-search}).
\end{proof}

\begin{remark} As shown in \cite{Howard(1968)} (cf. also Lemma \ref{howard-kreisel}), if one extends system $T$ with Spector's bar recursion, one can actually prove $\SPEC$. Hence, the result above says that in all models of system $T$ where $\EPQ$ could exist, it indeed does whenever $\IPS$ also exists. We leave it as an open question whether $\IPS$ already defines $\EPQ$ without assuming $\SPEC$.
\end{remark}

\begin{figure}
\includegraphics[width=11.5cm]{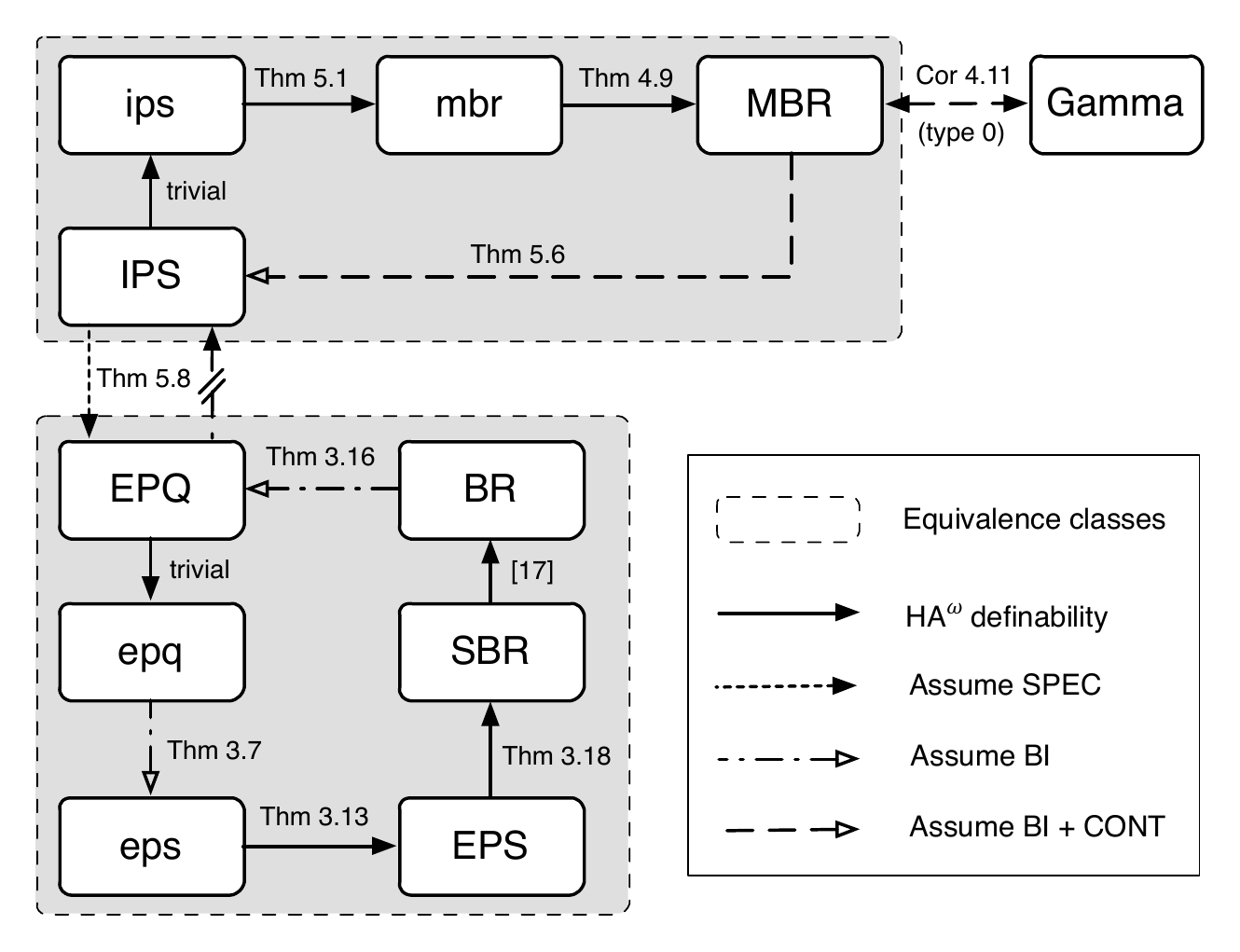}
\caption{Diagram of inter-definability results}
\label{table}
\end{figure}

%%%%%%%%%%%%%%%%%%%%%%%%%%%%%%%%%%%%%%%%
%%%%%%%%%%%%%%%%%%%%%%%%%%%%%%%%%%%%%%%%
\section{Summary of Results}
%%%%%%%%%%%%%%%%%%%%%%%%%%%%%%%%%%%%%%%%
%%%%%%%%%%%%%%%%%%%%%%%%%%%%%%%%%%%%%%%%

Figure \ref{table} gives a diagrammatic picture of the results presented above. We use a full-line-arrow to represent that the inter-definability holds over $\HAomega$, whereas a dotted-line-arrow indicates that extra assumptions are needed. We have used extra assumptions in four cases. In Theorems \ref{cps-sbr} and \ref{BR-EPQ-def} we made use of bar induction $\BI$; in Theorem \ref{IPS-EPQ-def} we use $\SPEC$; and in Theorem \ref{mbr-cbr} we seem to need both bar induction $\BI$ and the axiom of continuity $\CONT$. It is an interesting open question whether any of these four results can be shown in $\HAomega$ alone, or under weaker assumptions.

Given that $\CONT$ implies $\SPEC$, our results show that over the theory $\HAomega + \BI + \CONT$ the different forms of bar recursion considered here fall into two distinct equivalence classes with respect to $T$-definability.

\medskip

\noindent {\bf Acknowledgements}. The authors would like to thank Ulrich Berger, Thomas Powell and in particular the anonymous referee for suggesting numerous improvements and spotting  inaccuracies in earlier versions of the paper. The second author also acknowledges support of The Royal Society under grant 516002.K501/RH/kk.

\bibliographystyle{asl}

\bibliography{dblogic}

\end{document}